\theoremstyle{thmstyleone}%
\newtheorem{theorem}{Theorem}%  meant for continuous numbers
\theoremstyle{thmstyletwo}%
\newtheorem{example}{Example}%
\newtheorem{remark}{Remark}%
\newtheorem{lem}{Lemma}
\theoremstyle{thmstylethree}%
\newtheorem{definition}{Definition}%
\numberwithin{figure}{section}     % tie figure counter to section
\begin{document}

\title[Dynamic Automated Deduction by Contradiction Separation: The Standard Extension Algorithm]{Dynamic Automated Deduction by Contradiction Separation: The Standard Extension Algorithm}

%%=============================================================%%
%% GivenName	-> \fnm{Joergen W.}
%% Particle	-> \spfx{van der} -> surname prefix
%% FamilyName	-> \sur{Ploeg}
%% Suffix	-> \sfx{IV}
%% \author*[1,2]{\fnm{Joergen W.} \spfx{van der} \sur{Ploeg} 
%%  \sfx{IV}}\email{iauthor@gmail.com}
%%=============================================================%%

\author*[1,3]{Yang Xu\footnote{\tiny Corresponding authors.\\
All the five authors are co-first authors.\\
E-mail addresses: xuyang@swjtu.edu.cn (Y. Xu), x.he@swjtu.edu.cn (X. He), swchen@swjtu.edu.cn (S. Chen), j.liu@ulster.ac.uk (J. Liu), zhongxm2013@swjtu.edu.cn (X. Zhong)}}
\author*[1,3]{Xingxing He}
\author[1,3]{Shuwei Chen}
\author[2,3]{Jun Liu}
\author[1,3]{Xiaomei Zhong}
\affil*[1]{School of Mathematics, Southwest Jiaotong University, Chengdu, 610031, China}
\affil[2]{School of Computing, Ulster University, Belfast BT15 1ED, Northern Ireland, UK}
\affil[3]{National-Local Joint Engineering Laboratory of System Credibility Automatic Verification, Southwest Jiaotong University, Chengdu, Sichuan, China}
%%==================================%%
%% Sample for unstructured abstract %%
%%==================================%%

\abstract{
Automated deduction seeks to enable machines to reason with mathematical precision and logical completeness. Classical resolution-based systems, such as Prover9, E, and Vampire, rely on binary inference, which inherently limits multi-clause synergy during proof search. The Contradiction Separation Extension (CSE) framework, introduced by Xu et al. (2018), overcame this theoretical limitation by extending deduction beyond binary inference. However, the original work did not specify how contradictions are algorithmically constructed and extended in practice. This paper presents the Standard Extension algorithm, the first explicit procedural realization of contradiction separation reasoning. The proposed method dynamically constructs contradictions through complementary literal extension, thereby operationalizing the CSE theory within a unified algorithm for satisfiability and unsatisfiability checking. The algorithm’s soundness and completeness are formally proven, and its effectiveness is supported indirectly through the performance of CSE-based systems, including CSE, CSE\_E, CSI\_E, and CSI\_Enig in major automated reasoning competitions (CASC) in the last few years. These results confirm that the Standard Extension mechanism constitutes a robust and practically validated foundation for dynamic, multi-clause automated deduction.

 }

\keywords{Automated deduction; Contradiction separation; Standard Extension algorithm; Multi-clause reasoning; Theorem proving; Dynamic deduction; CSE framework}

%%\pacs[JEL Classification]{D8, H51}

%%\pacs[MSC Classification]{35A01, 65L10, 65L12, 65L20, 65L70}

\maketitle

\section{\label{sec:level1}Introduction}
Automated reasoning, the use of formal logic to enable computers to perform human-like deduction, represents a cornerstone of computer science and artificial intelligence. Within this domain, automated theorem proving (ATP) seeks to transform logical reasoning into algorithmic procedures capable of autonomously establishing theorems within computational frameworks. Since McCune’s landmark proof of the Robbins conjecture using the EQP prover, ATP has evolved into a powerful paradigm for addressing open problems in mathematics, formal logic, and the verification of hardware, software, and circuit designs.

Following Robinson’s introduction of the resolution principle, binary-resolution-based provers such as Prover9, Vampire, E, and SPASS have become dominant in first-order logic reasoning. Over decades, numerous refinements, such as set-of-support strategies, ordered resolution, and subsumption optimization, have improved their performance. Nevertheless, these approaches remain fundamentally binary, combining only two clauses per inference. This inherent restriction limits the exploitation of synergistic relationships among multiple clauses, often resulting in redundant reasoning paths and unnecessarily lengthy derivations.

To address these inherent limitations, Xu \textit{et al}. (2018) \cite{xu2018contradiction} proposed the Contradiction Separation (CS) based Dynamic Multi-Clause Synergized Automated Deduction theoretical framework, called CSE framework, where the CS principle is the core logical inference rule to define how contradictions can be separated among clauses: instead of resolving only two complementary literals (as in binary resolution), CS identifies and separates multi-clause sets that collectively form a contradiction. It treats contradictions as cooperative clause structures, not just binary pairs. CSE mean the extended and systematized version of CS for automated deduction, here “E” meant “Extension”, which was added to emphasize that the 2018 work extends the basic CS idea from a theoretical notion into a full dynamic, multi-clause, synergized deduction theory (with soundness, completeness, and implementation architecture). 

However, the 2018 paper focused on the theory of contradiction separation and did not detail the specific algorithmic mechanism by which contradictions are constructed, extended, and separated during proof search. In practice, the success of subsequent CSE-based systems has relied heavily on a yet-unpublished internal algorithm that operationalizes this theory. The present paper fills this gap by introducing the Dynamic Deduction by Contradiction Separation Based on Standard Extension method, a concrete, stepwise algorithm for contradiction construction that realizes the theoretical principles of CSE.

The proposed Standard Extension (SE) method provides a systematic approach to dynamically generating and extending contradictions through complementary literal expansion. By defining standard contradictions and standard extensions, the paper formalizes how contradictions can be built incrementally and resolved efficiently within a dynamic multi-clause environment. This algorithm thus represents the first explicit procedural realization of contradiction separation, serving as the algorithmic foundation and implemented in provers such as CSE, CSE\_E, CSI\_E, and CSI\_Enig in major automated reasoning competitions (CASC) in the last few years.

The main contributions are summarized below:

\begin{itemize}
 \item Algorithmic Realization of Contradiction Separation:
Presents the first detailed algorithm (Standard Extension) that operationalizes the contradiction separation framework proposed in 2018, which bridges theoretical contradiction separation with practical, system-level reasoning efficiency.
 \item Formalization of Standard Contradictions and Extensions:
Defines a systematic method for constructing contradictions through literal extension, ensuring both soundness and completeness.
 \item Unified Dynamic Deduction Procedure:
Introduces an integrated algorithm for satisfiability and unsatisfiability checking using standard contradiction construction. Notably, if a formula is satisfiable, the method can explicitly construct a satisfiable instance—an ability absent in most traditional ATP systems
 \item Theoretical and Empirical Continuity:
Establishes the Standard Extension algorithm as the practical basis of later provers (CSE, CSE-E, CSI-E, CSI-Enig), whose success indirectly validates the proposed approach.
\end{itemize}

The remainder of the paper is organized as follows: Section~\ref{sec:level2} reviews the related works.  Section~\ref{sec:level3} overviews preliminary definitions and results. Section~\ref{sec:level4} details contradiction separation based on standard extension algorithm in propositional logic. Section~\ref{sec:level5} extends the method to first-order logic. Section~\ref{sec:level6} presents experimental evidences and system relevance.  Section~\ref{sec:level7} presents discussions, concluding remarks and future direction.

\section{\label{sec:level2}Related Works}

Automated deduction is grounded in formal logic systems and inference rules that ensure sound and complete reasoning. Classical frameworks include resolution~\cite{robinson1965machine}, sequent calculus~\cite{hilbert1999principles}, natural deduction~\cite{pelletier1998natural}, the connection method~\cite{bibel1981matrices}, term rewriting~\cite{bachmair1994rewrite,robinson2001handbook}, and mathematical induction~\cite{boyer1995boyer}. Among these, \textit{resolution} remains the most widely adopted inference mechanism in modern theorem provers due to its simplicity and completeness. However, its naïve binary application, where only two clauses participate in each inference—often produces redundant or non-contributory clauses, leading to exponential growth in search space.

To address efficiency bottlenecks, decades of research have refined resolution-based theorem proving along several complementary directions. (1) Clause simplification and redundancy elimination, including tautology removal and subsumption filtering~\cite{liang1973symbolic}, as well as model elimination techniques~\cite{loveland1969simplified,stickel1986schubert}; (2) Constrained inference strategies, such as linear, locked, semantic, hyper-, chain, and set-of-support resolution~\cite{loveland1970linear,boyer1971locking,slagle1967automatic,wos1965efficiency,robinson1965automatic}; (3) Efficient operational restrictions, including unit, input, and ordered resolution, which underpin provers such as OTTER, Isabelle, and Vampire~\cite{chang1970unit}; (4) Hybrid calculi integrating paramodulation, superposition, rewriting, or natural deduction to manage equality and guide inference~\cite{robinson1983paramodulation,bachmair1998equational,bachmair1994rewrite,pelletier1998natural}; and (5) Heuristic and strategy-driven search optimization, including dynamic clause and literal selection, axiom filtering, and adaptive strategy scheduling~\cite{portoraro1998strategic,schulz2013system,urban2010evaluation,urban2007malarea}. These advances have shaped state-of-the-art ATP systems, Vampire, SPASS, E, and Prover9, and established the basis for benchmarking through the CADE ATP System Competitions (CASC) and the TPTP problem library~\cite{sutcliffe2016cade,sutcliffe2017tptp}.

Despite these achievements, binary resolution remains structurally limited. Its pairwise clause interactions restrict synergy among multiple premises, obscure goal directionality, and often generate large sets of irrelevant clauses. Most enhancements address local efficiency but do not fundamentally overcome the binary constraint. Consequently, current provers can struggle with scalability in large-theory reasoning and neurosymbolic ATP applications, where effective clause cooperation and dynamic inference control are essential~\cite{Ospanov2025APOLLO,Hu2025HybridProver,Ambati2025ProofNet,Chen2025SeedProver}.

Several extensions have attempted to broaden resolution’s expressive power. Hyper-resolution, semantic resolution, and model elimination introduced new inference patterns, but these remain static and fail to capture dynamic multi-clause interactions. More recent saturation-based reasoning, notably in E~\cite{schulz2013system} and Vampire~\cite{kovacs2013first}, achieves high efficiency using the superposition calculus for equality reasoning. Concurrently, the rise of machine-learning-guided reasoning has opened a complementary research line: systems such as ENIGMA~\cite{Jakubuv2020ENIGMA} integrate statistical learning into clause selection, while neural-guided provers like DeepHOL and Graph2Seq embed logical structures into learned vector spaces. Yet, these approaches still depend on resolution or superposition as their underlying inference calculus and thus inherit the intrinsic binary limitations.

The \textit{Contradiction Separation (CS)} principle proposed by Xu et al.~\cite{xu2018contradiction} introduced a fundamentally different perspective. Instead of resolving complementary literals pairwise, CS treats contradictions as cooperative clause structures that can be dynamically constructed and separated during deduction. The Contradiction Separation–based Dynamic Multi-Clause Synergized Automated Deduction (CSE) framework provided the first theoretical formulation of such multi-clause reasoning but did not explicitly describe how contradictions are algorithmically generated. Subsequent studies~\cite{cao2019first,cao2019contradiction,cao2021multi,chen2018look,chen2020clause,he2018new,he2022structures,zhong2020novel} explored extensions of contradiction separation, yet challenges remain: the construction of contradictions is often ad hoc, clause-selection efficiency varies across problem domains, and a unified constructive mechanism for both satisfiability and unsatisfiability checking is still lacking.

The present work addresses these limitations through the \textbf{Standard Extension (SE) algorithm}, which concretizes contradiction separation into a stepwise, dynamically extensible procedure. The SE method forms the algorithmic basis of modern contradiction-separation-based systems such as CSE, CSE\_E, CSI\_E, and CSI\_Enig, which have demonstrated strong empirical performance in the CASC competitions~\cite{cao2019first,cao2021multi}. 

In parallel, the broader ATP community has increasingly explored \textit{neurosymbolic reasoning}, the integration of symbolic logic with machine learning techniques for proof guidance and clause selection. Recent systems such as APOLLO, HybridProver, ProofNet, and SeedProver~\cite{Ospanov2025APOLLO,Hu2025HybridProver,Ambati2025ProofNet,Chen2025SeedProver} exemplify this trend, employing neural networks to guide inference selection while preserving symbolic soundness. Although these systems are not directly based on Contradiction Separation (CSE), their design motivations align closely with its principles: both seek to combine structural expressiveness with data-driven adaptability. In this context, the CSE framework and its algorithmic extensions, such as the Standard Extension (SE), provide a promising foundation for future neurosymbolic integration.

\section{\label{sec:level3}Preliminaries}
In this section, we briefly recall some preliminary definitions and results for dynamic deduction of contradiction  separation based on propositional logic and first-order logic. We refer the reader to, for instance, \cite{robinson1965automatic,xu2016novel,xu2018contradiction} , for more details about logical notations and resolution concepts.

\begin{definition} \cite{xu2018contradiction}
Let $S = \{C_1, C_2, \ldots, C_m\}$ be a clauses set. The Cartesian
product of $C_1, C_2, \ldots, C_m$, denoted by $\prod^m_{i=1}C_i$,
is the set of all ordered tuples $(p_1, p_2, \ldots, p_m)$ such
that $p_i \in C_i \,(i = 1, 2, \ldots, m)$, where $p_i$ is a literal
and $C_i$ is also regarded as a set of literals.
\end{definition}

\begin{definition} \cite{xu2018contradiction} \label{def3}
Let $S = \{C_1, C_2,  \ldots, C_m\}$ be a clauses set. If for all
$(p_1, p_2, \ldots, p_m)$ $\in \prod^{m}_{i=1}C_i$, there exists at
least one complementary pair among $(p_1, p_2, \ldots, p_m)$, then
$S = \bigwedge^m_{i=1}C_i$ is called a standard contradiction. If
$\bigwedge^m_{i=1}C_i$ is unsatisfiable, then $S =
\bigwedge^m_{i=1}C_i$ is called a quasi-contradiction.
\end{definition}

\begin{lem}\cite{xu2018contradiction}\label{lem3.1}
Assume a clauses set $S = \{C_1, C_2, \cdots, C_m\}$ in
propositional logic. Then $S$ is a standard contradiction if and
only if $S$ is a quasi-contradiction.
\end{lem}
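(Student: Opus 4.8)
The plan is to prove the two implications separately, working directly from the definitions: ``standard contradiction'' is a purely combinatorial condition on the Cartesian product $\prod_{i=1}^m C_i$, whereas ``quasi-contradiction'' is the semantic statement that $\bigwedge_{i=1}^m C_i$ is unsatisfiable. The bridge between the two is the elementary observation that a truth assignment satisfies a clause $C_i$ precisely when it makes at least one literal of $C_i$ true, so a satisfying assignment for $S$ is essentially the same data as a choice of one ``true'' literal from each $C_i$.

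For the forward direction (standard contradiction $\Rightarrow$ quasi-contradiction) I would argue by contradiction. Suppose some truth assignment $v$ satisfies every $C_i$. For each $i$ choose $p_i \in C_i$ with $v(p_i)$ true; this produces a tuple $(p_1, p_2, \ldots, p_m) \in \prod_{i=1}^m C_i$. By the hypothesis that $S$ is a standard contradiction, this tuple contains a complementary pair $p_j, p_k$ with $p_k = \neg p_j$, and then $v(p_j) = v(p_k) =$ true, which is impossible since a literal and its negation cannot both be true under $v$. Hence no satisfying assignment exists, i.e.\ $\bigwedge_{i=1}^m C_i$ is unsatisfiable and $S$ is a quasi-contradiction.

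For the converse I would prove the contrapositive: if $S$ is not a standard contradiction, then by Definition~\ref{def3} there is a tuple $(p_1, p_2, \ldots, p_m) \in \prod_{i=1}^m C_i$ with no complementary pair among its entries. The key step is to convert this tuple into a satisfying assignment. Define $v$ on the propositional variables occurring in $\{p_1, \ldots, p_m\}$ by declaring, for each such variable $x$, that $v(x)$ is true if some $p_i$ equals $x$ and false if some $p_i$ equals $\neg x$ (extending $v$ arbitrarily on the remaining variables). The absence of a complementary pair is exactly what makes this prescription consistent: we never simultaneously demand $v(x)$ true and $v(x)$ false, since that would require an $x$ and a $\neg x$ among the $p_i$. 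Under $v$ each literal $p_i$ is true, so each $C_i$ is satisfied, $\bigwedge_{i=1}^m C_i$ is satisfiable, and $S$ is not a quasi-contradiction.

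The argument is short, and the only point requiring genuine care is the well-definedness of the assignment in the converse direction — one must verify that repeated literals among the $p_i$ cause no conflict (they trivially do not), and that two $p_i$ sharing a variable can only form a complementary pair $x$ / $\neg x$, which the hypothesis has ruled out. I expect this consistency check, rather than any deeper combinatorial or model-theoretic difficulty, to be the main (and essentially the only) obstacle, and it is dispatched in a single line.
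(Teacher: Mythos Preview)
Your proof is correct and is the natural, essentially unique argument for this equivalence: one direction reads a tuple off a hypothetical satisfying assignment and finds a complementary pair, the contrapositive of the other turns a complementary-pair-free tuple into a consistent partial assignment. The only subtlety, well-definedness of the assignment in the converse, you have handled explicitly.

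As for comparison: the present paper does not actually supply a proof of this lemma. It is listed among the preliminaries and attributed to the cited 2018 work \cite{xu2018contradiction}, so there is no in-paper argument to set yours against. Your write-up would serve perfectly well as the missing proof, and indeed any proof of this statement must follow essentially the same two-step pattern you give.
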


\begin{definition} \cite{xu2018contradiction} \label{def3.2}
Assume a clauses set $S = \{C_1, C_2, \ldots, C_m\}$ in
propositional logic. The following inference rule that produces a
new clause from $S$ is called a contradiction separation rule, in
short, a CS rule:

For each $C_i \,(i = 1, 2, \ldots, m)$, we separate it into two
sub-clauses $C_i^{-}$ and $C_i^{+}$ such that
\begin{enumerate}
\item[(1)] $C_i = C_i^{-} \vee C_i^{+}$, where $C_i^{-}$ and $C_i^{+}$ have no common
literals.
\item[(2)] $C_i^{+}$ may be an empty clause itself, but $C_i^{-}$ cannot
be an empty clause.
\item[(3)] $\bigwedge_{i=1}^m C_i^{-}$ is a standard contradiction (SC).
\end{enumerate}
The resulting clause $\bigvee_{i=1}^m C_i^{+}$, denoted as
$\mathscr{C}_m(C_1, C_2, \ldots, C_m)$, is called a contradiction
separation clause (CSC) of $C_ 1, C_ 2, \ldots, C _m $.
\end{definition}

\begin{definition}\label{Def2.2.2}\cite{xu2018contradiction} 
Suppose a clause set $S =\{C_1 , C_2 , \cdots, C_m \}$ in propositional logic. $\phi_1 , \phi_2 ,\cdots, \phi_t$ is called a contradiction separation based dynamic deduction sequence (or a CS based dynamic deduction sequence) from $S$ to a clause $\phi_t$. If $\phi_ i (i = 1, 2, \cdots, t)$ satisfies:
\begin{enumerate}
\item[(1)] $\phi_i \in S$, or
\item[(2)] there exist $r_1, r_2,\cdots, r_{k_i} < i$, $\phi_i =\mathscr{C}_{k_i}(\phi_{r_1}, \phi_{r_2}, \cdots, \phi_{r_{k_i}})$.
\end{enumerate}
\end{definition}

\begin{theorem}\cite{xu2018contradiction}
Suppose a clauses set $S = \{C_1, C_2, \ldots, C_m\}$ in
propositional logic. $\Phi_1, \Phi_2, \ldots, \Phi_t$ is a CS based
deduction from $S$ to a clause $\Phi_t$. If $\Phi_t$ is an empty
clause, then $S$ is unsatisfiable.
\end{theorem}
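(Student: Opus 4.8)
The plan is to reduce the theorem to the \emph{soundness of a single application of the CS rule} and then to lift this to the whole deduction sequence by a straightforward induction. The key lemma to establish first is: for any clauses $C_1,\ldots,C_m$ admitting a separation as in Definition~\ref{def3.2}, the contradiction separation clause $\mathscr{C}_m(C_1,\ldots,C_m)=\bigvee_{i=1}^m C_i^{+}$ is a logical consequence of $\{C_1,\ldots,C_m\}$; that is, every interpretation satisfying all of $C_1,\ldots,C_m$ also satisfies $\bigvee_{i=1}^m C_i^{+}$.

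To prove this lemma I would argue semantically, by contradiction. Let $I$ be an interpretation with $I\models C_i$ for every $i$, and suppose toward a contradiction that $I\not\models\bigvee_{i=1}^m C_i^{+}$, so that $I$ falsifies every literal occurring in every $C_i^{+}$. Since $C_i=C_i^{-}\vee C_i^{+}$ and $I\models C_i$, it follows that $I\models C_i^{-}$ for each $i$; because $C_i^{-}$ is non-empty by condition~(2) of Definition~\ref{def3.2}, we may select a literal $p_i\in C_i^{-}$ with $I\models p_i$. The tuple $(p_1,\ldots,p_m)$ then lies in $\prod_{i=1}^m C_i^{-}$, and since $\bigwedge_{i=1}^m C_i^{-}$ is a standard contradiction, Definition~\ref{def3} supplies a complementary pair among the $p_i$, say $p_j$ and $p_k=\neg p_j$. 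But then $I\models p_j$ and $I\models\neg p_j$, which is impossible. Hence $I\models\bigvee_{i=1}^m C_i^{+}$, proving the lemma. (Equivalently, one can invoke Lemma~\ref{lem3.1}: $\bigwedge_{i=1}^m C_i^{-}$ is a quasi-contradiction, so $I\models\bigwedge_{i=1}^m C_i^{-}$ is already impossible.)

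Next I would run a strong induction on the index $i$ to show that every clause $\Phi_i$ occurring in the deduction is a logical consequence of $S$. If $\Phi_i\in S$ this is immediate. Otherwise, by Definition~\ref{Def2.2.2}, $\Phi_i=\mathscr{C}_{k_i}(\Phi_{r_1},\ldots,\Phi_{r_{k_i}})$ with all $r_\ell<i$; by the induction hypothesis each $\Phi_{r_\ell}$ is entailed by $S$, and by the lemma $\Phi_i$ is entailed by $\{\Phi_{r_1},\ldots,\Phi_{r_{k_i}}\}$, so by transitivity of entailment $\Phi_i$ is entailed by $S$. Applying this at $i=t$, every model of $S$ must satisfy $\Phi_t$; but $\Phi_t$ is the empty clause, which no interpretation satisfies, so $S$ has no model, i.e. $S$ is unsatisfiable.

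The argument is largely bookkeeping once the CS-rule soundness lemma is in hand, so the only genuine obstacle lies inside that lemma, and within it the one delicate point is guaranteeing that a witnessing literal $p_i\in C_i^{-}$ always exists — this is exactly why Definition~\ref{def3.2}(2) forbids $C_i^{-}$ from being empty. Without that hypothesis the tuple $(p_1,\ldots,p_m)$ need not exist and the appeal to the standard-contradiction property would break down, so I would make a point of invoking condition~(2) explicitly at that step.
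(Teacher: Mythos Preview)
Your proof is correct and follows the natural semantic argument for soundness: first establish that a single CS application is entailment-preserving, then lift to the whole sequence by strong induction. Note, however, that the present paper does not actually give a proof of this theorem; it is quoted in the Preliminaries section as a result of the earlier work~\cite{xu2018contradiction}, so there is no in-paper proof to compare against. Your argument is exactly the standard one and would be the expected proof in the original source: the only substantive step is the lemma that $\bigvee_i C_i^{+}$ is a logical consequence of $\{C_1,\ldots,C_m\}$, and you handle it cleanly, correctly isolating the role of condition~(2) in Definition~\ref{def3.2} (non-emptiness of $C_i^{-}$) as the hypothesis that guarantees a witnessing literal $p_i$ and hence a tuple in $\prod_i C_i^{-}$ on which the standard-contradiction property can bite.
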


\begin{theorem}\cite{xu2018contradiction}
Suppose a clauses set $S = \{C_1, C_2, \ldots, C_m\}$ in
propositional logic. If $S$ is unsatisfiable, then there exists a CS
based deduction from $S$ to an empty clause.
\end{theorem}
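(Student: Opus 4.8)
The plan is to reduce the completeness claim to the purely semantic characterization already in hand, namely Lemma~\ref{lem3.1}, and then to exhibit an explicit one-step contradiction separation deduction. The starting observation is that ``$S$ is unsatisfiable'' means precisely that $\bigwedge_{i=1}^m C_i$ is unsatisfiable, so by Definition~\ref{def3} the set $S$ is a quasi-contradiction, and hence by Lemma~\ref{lem3.1} $S$ is itself a standard contradiction. But this is exactly the condition needed to apply the CS rule of Definition~\ref{def3.2} to all of $S$ simultaneously.

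First I would dispose of the degenerate case: if $S$ already contains the empty clause $\square$, then setting $\phi_1 = \square$ yields, via clause~(1) of Definition~\ref{Def2.2.2}, a CS based dynamic deduction from $S$ to the empty clause, and there is nothing more to do. Otherwise every $C_i$ is non-empty, and I would apply the CS rule to $C_1, C_2, \ldots, C_m$ using the trivial separation $C_i^{-} = C_i$ and $C_i^{+} = \square$ for every $i$. The three requirements of Definition~\ref{def3.2} are then routine to check: (1) $C_i = C_i^{-} \vee C_i^{+}$ holds because $C_i^{+} = \square$, and $C_i^{-}$ and $C_i^{+}$ share no literal since $C_i^{+}$ has none; (2) $C_i^{+}$ is permitted to be empty while $C_i^{-} = C_i$ is non-empty, as required; and (3) $\bigwedge_{i=1}^m C_i^{-} = \bigwedge_{i=1}^m C_i$ is a standard contradiction, which is exactly what the previous paragraph gave us. Therefore the CS rule fires and yields the contradiction separation clause
\begin{equation*}
\mathscr{C}_m(C_1, C_2, \ldots, C_m) \;=\; \bigvee_{i=1}^m C_i^{+} \;=\; \square .
\end{equation*}

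Finally I would assemble the deduction sequence itself: take $\phi_i = C_i$ for $i = 1, 2, \ldots, m$, each admissible by clause~(1) of Definition~\ref{Def2.2.2}, and then $\phi_{m+1} = \mathscr{C}_m(\phi_1, \phi_2, \ldots, \phi_m) = \square$, admissible by clause~(2) of Definition~\ref{Def2.2.2} with $k_{m+1} = m$ and $r_j = j$. Then $\phi_1, \phi_2, \ldots, \phi_{m+1}$ is a CS based dynamic deduction from $S$ to the empty clause, which proves the theorem. I do not expect any genuinely hard step here: essentially all of the content is shouldered by Lemma~\ref{lem3.1}, the equivalence between standard contradictions and quasi-contradictions, and the only points needing a little care are the degenerate case $\square \in S$ and the confirmation that the trivial separation respects the side conditions of the CS rule --- in particular the stipulation that each $C_i^{-}$ be non-empty. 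It is worth remarking that this is precisely why the CSE framework is appealing: the inference rule is strong enough that refutation completeness drops out directly from the semantic picture, with none of the semantic-tree or induction-on-atoms machinery that the completeness proof for ordinary binary resolution requires.
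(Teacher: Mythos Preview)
Your proof is correct. Note, however, that the theorem you are proving appears in this paper only in the Preliminaries section as a result cited from \cite{xu2018contradiction}, without an accompanying proof; so there is no proof in the present paper to compare your argument against directly.

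That said, your approach is the natural one and almost certainly matches the intent of the original: you exploit Lemma~\ref{lem3.1} to upgrade ``$S$ unsatisfiable'' to ``$S$ is a standard contradiction,'' and then fire the CS rule once on the trivial full separation $C_i^{-}=C_i$, $C_i^{+}=\square$. All three conditions of Definition~\ref{def3.2} are verified, the degenerate case $\square\in S$ is handled, and the deduction sequence is assembled correctly via Definition~\ref{Def2.2.2}. As you remark, this one-step refutation is precisely the point of the CS rule's design: condition~(3) pushes the entire semantic burden onto the standard-contradiction check, so completeness for the general CS rule really is this short. By contrast, the paper's own completeness proof for the \emph{restricted} Standard Extension variant (Theorem~\ref{thm3.3}) is much longer, requiring an induction on $|S|$, because that calculus does not allow an arbitrary standard contradiction to be separated in one step.
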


\begin{definition}\label{Def3.4}\cite{xu2018contradiction}
\textbf{(Standard Contradiction Separation Rule in First-Order Logic) }Suppose a clauses set $S = \{C_1, C_2,\cdots, C_m\}$ in first-order logic. Without loss of generality, assume that there does not exist the same variables among $C_1, C_2,\cdots, C_m$ (if there exists the same variables, there exists a rename substitution which makes them different). The following inference rule that produces a new clause from $S$ is called a standard contradiction separation rule, in short, an S-CS rule:

For each $C_i (i =1, 2,\cdots, m)$, firstly applying a substitution $\sigma_i$ to $C_i$ ($\sigma_i$ could be an empty substitution but not necessary the most general unifier), denoted as $C_i^{\sigma_i}$. Then separate $C_i^{\sigma_i}$ into two sub-clauses $C_i^{\sigma_i^-}$ and $C_i^{\sigma_i^+}$ such that
\begin{enumerate}
\item[(i)]  $C_i^{\sigma_i} = C_i^{\sigma_i^-} \vee C_i^{\sigma_i^+}$, where $C_i^{\sigma_i^-}$ and $C_i^{\sigma_i^+}$ have no common literals.

\item[(ii)] $C_i^{\sigma_i^+}$ can be an empty clause itself, but $C_i^{\sigma_i^-}$ cannot be an empty itself clause.

\item[(iii)] $\bigwedge_{i=1}^{m}C_i^{\sigma_i^+}$ is a standard contradiction, that is, for any $(x_1,\cdots, x_m) \in \Pi_{i=1}^m C_i^{\sigma_i^-}$, there exists at least one complementary pair among $\{x_1,\cdots, x_m\}$.
\end{enumerate} 
The resulting clause $\bigvee_{i=1}^{m}C_i^{\sigma_i^+}$, denoted as $C_m^{s\sigma}(C_1,\cdots, C_m)$ (here “ $s$ ” means “standard”, $\sigma = \bigcup _{i=1}^m \sigma_i$), is called a standard contradiction separation clause (S-CSC) of $C_1,\cdots, C_m $, and $\bigwedge_{i=1}^{m}C_i^{\sigma_i^-}$ is called a separated standard contradiction (S-SC). 

\end{definition}

\begin{definition}\label{Def3.5}\cite{xu2018contradiction}
\textbf{(Quasi-Contradiction Separation Rule in First-Order Logic) }Suppose a clauses set $S = \{C_1, C_2,\cdots, C_m\}$ in first-order logic. Without loss of generality, assume that there does not exist the same variables among $C_1, C_2,\cdots, C_m$ (if there exists the same variables, there exists a rename substitution which makes them different). The following inference rule that produces a new clause from $S$ is called a quasi-contradiction separation rule, in short, a Q-CS rule:

For each $C_i (i =1, 2,\cdots, m)$, firstly applying a substitution $C_i (i =1, 2,\cdots, m)$ to $C_i$ ($\sigma_i$ could be an empty substitution but not necessary the most general unifier), denoted as $C_i^{\sigma_i}$. Then separate $C_i^{\sigma_i}$ into two sub-clauses $C_i^{\sigma_i^-}$ and $C_i^{\sigma_i^+}$ such that 
\begin{enumerate}
\item[(i)]  $C_i^{\sigma_i} = C_i^{\sigma_i^-} \vee C_i^{\sigma_i^+}$, where $C_i^{\sigma_i^-}$ and $C_i^{\sigma_i^+}$ have no common literals.

\item[(ii)] $C_i^{\sigma_i^+}$ can be an empty clause itself, but $C_i^{\sigma_i^-}$ cannot be an empty itself clause.

\item[(iii)] $\bigwedge_{i=1}^{m}C_i^{\sigma_i^+}$ is unsatisfiable.
\end{enumerate} 
The resulting clause $\bigvee_{i=1}^{m}C_i^{\sigma_i^+}$, denoted as $C_m^{q\sigma}(C_1,\cdots, C_m)$ (here “ $q$ ” means “quasi”, $\sigma = \bigcup _{i=1}^m \sigma_i$), is called a quasi-contradiction separation clause (Q-CSC) of $C_1, C_2$, $\cdots, C_m $, and $\bigwedge_{i=1}^{m}C_i^{\sigma_i^-}$ is called a separated quasi-contradiction (S-QC).
\end{definition}

\begin{definition}\label{Def3.6}\cite{xu2018contradiction} 
Suppose a clauses set $S = \{C_1, C_2,\cdots, C_m\}$ in first-order logic. $\phi_ 1, \phi_ 2,\cdots, \phi_ t$ is called a quasi-contradiction separation based deduction sequence (or a Q-CS based deduction sequence from $S$ to a clause $\phi_ t$. If $\phi_ i (i = 1, 2, \cdots, t)$ satisfies:  
\begin{enumerate}
\item[(1)] $\phi_ i \in  S $, or
\item[(2)] there exist $r_1, r_2,\cdots, r_{k_i} < i$, $\phi_ i = C^{s\sigma_i}_{r_{k_i}}(\phi_{r_1},\phi_{r_2},\cdots, \phi_{r_{k_i}})$, where $\sigma_ i = \bigcup _{j=1}^{k_i} \sigma_j$, $\sigma_ j$ is a substitution to $\phi_{r_j}$, $j =1,2,\cdots, k_i $.  
\end{enumerate}
\end{definition}

\begin{definition}\label{Def3.7} \cite{xu2018contradiction}
Suppose a clauses set $S = \{C_1, C_2,\cdots, C_m\}$ in first-order logic. $\phi_ 1, \phi_ 2,\cdots, \phi_ t$ is called a standard contradiction separation based dynamic deduction sequence (or a   S-CS based dynamic deduction sequence from $S$ to a clause $\phi_ t $. If $\phi_ i (i = 1, 2, \cdots, t)$ satisfies:
\begin{enumerate}
\item[(1)] $\phi_ i \in  S $, or
\item[(2)] there exist $r_1, r_2,\cdots, r_{k_i} < i$, $\phi_ i = C^{q\sigma_i}_{r_{k_i}}(\phi_{r_1},\phi_{r_2},\cdots, \phi_{r_{k_i}})$, where $\sigma_ i = \bigcup _{j=1}^{k_i} \sigma_j$, $\sigma_ j$ is a substitution to $\phi_{r_j}$, $j =1,2,\cdots, k_i $.  
\end{enumerate}
\end{definition}

\begin{theorem}\label{thm3.3}\cite{xu2018contradiction}
\textbf{(Soundness Theorem of the S-CS Based Dynamic Deduction in First-Order Logic)} Suppose a clauses set $S = \{C_1, C_2,\cdots, C_m\}$ in first-order logic. $\phi_ 1, \phi_ 2,\cdots, \phi_ t$ is an  S-CS based dynamic deduction from $S$ to a clause $\phi_ t $. If $\phi_ t$ is an empty clause, then $S$ is unsatisfiable.
\end{theorem}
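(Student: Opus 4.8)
The plan is to prove soundness by the usual route for consequence-preserving calculi: show that each step of the deduction produces a clause that is a logical consequence of $S$, so that $\phi_t$ in particular is entailed by $S$; then, since the empty clause is false in every interpretation, conclude that $S$ has no model. Two ingredients are needed: (a) a first-order strengthening of Lemma~\ref{lem3.1}, namely that every separated standard contradiction is unsatisfiable, and (b) soundness of one application of the (standard/quasi) contradiction separation rule. The induction on the deduction sequence then assembles these into the theorem.

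The central and most delicate step is (a): \emph{any separated standard contradiction $\bigwedge_{i=1}^{m} C_i^{\sigma_i^-}$ is unsatisfiable}. I would reduce this to the propositional case. After the renaming assumed in Definition~\ref{Def3.4}, the clauses $C_1^{\sigma_1^-},\ldots,C_m^{\sigma_m^-}$ share no variables, so a single ground substitution $\tau$ over all their variables is available. If $M$ were a model of the conjunction, then for each $i$ there would be a literal $p_i \in C_i^{\sigma_i^-}$ with $M \models p_i\tau$; the tuple $(p_1,\ldots,p_m)$ lies in $\prod_{i=1}^{m} C_i^{\sigma_i^-}$, so by Definition~\ref{Def3.4}(iii) it contains a complementary pair $p_j = L$, $p_k = \neg L$. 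Since complementary literals remain complementary under $\tau$, we would get $M \models L\tau$ and $M \models \neg L\tau$, a contradiction. Equivalently, every ground instance of a first-order standard contradiction is a propositional standard contradiction, hence unsatisfiable by Lemma~\ref{lem3.1}, so the first-order conjunction has no model.

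For step (b), suppose $\phi = \bigvee_{i=1}^{m} C_i^{\sigma_i^+}$ is obtained from $C_1,\ldots,C_m$ by the rule, and let $M \models C_1 \wedge \cdots \wedge C_m$. By the substitution lemma, $M \models C_i^{\sigma_i} = C_i^{\sigma_i^-} \vee C_i^{\sigma_i^+}$ for every $i$. Fix an assignment $v$ (legitimate as a common assignment because the variables of the $C_i^{\sigma_i}$ are pairwise disjoint) and suppose toward a contradiction that every literal of every $C_i^{\sigma_i^+}$ is false under $(M,v)$; then for each $i$ some literal $p_i \in C_i^{\sigma_i^-}$ is true under $(M,v)$, and the complementary pair guaranteed by the separated standard contradiction (by step (a)), or by the hypothesis of unsatisfiability in the quasi-contradiction case of Definition~\ref{Def3.5}(iii), yields a contradiction exactly as above. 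Hence some literal of $\phi$ is true under $(M,v)$, and since $v$ was arbitrary, $M \models \phi$; thus $\{C_1,\ldots,C_m\}\models\phi$.

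Finally I would induct on the index $i$ in the sequence $\phi_1,\ldots,\phi_t$ of Definition~\ref{Def3.7}: if $\phi_i \in S$ then trivially $S \models \phi_i$; otherwise $\phi_i$ is obtained by the rule from $\phi_{r_1},\ldots,\phi_{r_{k_i}}$ with each $r_j < i$, which are entailed by $S$ by the induction hypothesis, so by step (b) and transitivity of logical consequence $S \models \phi_i$. In particular $S \models \phi_t$; if $\phi_t$ is the empty clause, it is false in every interpretation, so $S$ can have no model, i.e.\ $S$ is unsatisfiable. The main obstacle is step (a): one must handle the interplay of substitutions, variable assignments, and disjointness of variables across clauses carefully (the substitution lemma and preservation of complementarity under instantiation) rather than merely quoting the propositional Lemma~\ref{lem3.1}; once that is in place, steps (b) and the induction are routine consequence-preservation arguments.
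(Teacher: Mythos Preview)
This theorem is stated in the paper only as a preliminary result cited from \cite{xu2018contradiction}; the present paper provides no proof of its own, so there is no in-paper argument to compare your proposal against. Your proof is correct and follows the standard route for soundness of resolution-style calculi: the complementary-pair condition in Definition~\ref{Def3.4}(iii) immediately yields unsatisfiability of any separated standard contradiction under a fixed interpretation and assignment (your ground substitution $\tau$ is harmless but unnecessary, since for any single assignment $v$ a literal and its syntactic negation cannot both be true under $(M,v)$), the single-step soundness in (b) then follows via the substitution lemma, and the induction on the deduction sequence in (c) is routine. One minor remark: the disjointness of variables across the $C_i^{\sigma_i}$ is not actually needed to justify using a common assignment $v$ in step~(b)---any assignment already assigns values to all variables---so that parenthetical can be dropped.
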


\begin{theorem}\label{thm3.4}\cite{xu2018contradiction}
\textbf{(Completeness of the S-CS Based Dynamic Deduction in First-Order Logic)} Suppose a clauses set $S = \{C_1, C_2,\cdots, C_m\}$ in first-order logic. If $S$ is unsatisfiable, then there exists an S-CS based dynamic deduction from $S$ to an empty clause.
\end{theorem}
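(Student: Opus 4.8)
The plan is to reduce this first-order statement to the propositional completeness theorem stated above by way of Herbrand's theorem, and then to \emph{lift} the resulting ground deduction, step by step, to an S-CS based dynamic deduction over the original first-order clause set. Since $S=\{C_1,\dots,C_m\}$ is unsatisfiable, Herbrand's theorem yields a finite set $S^{*}=\{D_1,\dots,D_n\}$ of ground instances of clauses of $S$ (say $D_j=C_{i_j}\theta_j$) that is unsatisfiable when ground atoms are read as propositional variables. By Lemma~\ref{lem3.1} together with the propositional completeness theorem quoted before Definition~\ref{Def3.4}, there is a CS based dynamic deduction $\Psi_1,\Psi_2,\dots,\Psi_s=\square$ from $S^{*}$ to the empty clause $\square$ in the sense of Definition~\ref{Def2.2.2}.

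Next I would lift this ground deduction by induction on its length $s$, maintaining the invariant: for each index $j$ there is a first-order clause $\Phi_j$, obtained by an S-CS based dynamic deduction from $S$ (Definition~\ref{Def3.7}), such that $\Psi_j$ is an instance of $\Phi_j$ (equivalently, $\Phi_j$ subsumes $\Psi_j$). In the base case a clause $\Psi_j\in S^{*}$ is by construction an instance $C_{i_j}\theta_j$, so take $\Phi_j:=C_{i_j}\in S$. In the inductive step suppose $\Psi_j=\mathscr{C}_{k}(\Psi_{r_1},\dots,\Psi_{r_k})$ via the propositional CS rule, with separations $\Psi_{r_l}=\Psi_{r_l}^{-}\vee\Psi_{r_l}^{+}$ and $\bigwedge_{l=1}^{k}\Psi_{r_l}^{-}$ a standard contradiction. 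By the induction hypothesis each $\Psi_{r_l}$ is an instance $\Phi_{r_l}\tau_l$; after renaming (permitted by Definition~\ref{Def3.4}) we may assume $\Phi_{r_1},\dots,\Phi_{r_k}$ are variable-disjoint, so the $\tau_l$ combine into one substitution $\tau$. The task is to produce substitutions $\sigma_l$ on $\Phi_{r_l}$ and a separation $\Phi_{r_l}^{\sigma_l}=\Phi_{r_l}^{\sigma_l^{-}}\vee\Phi_{r_l}^{\sigma_l^{+}}$ so that $\bigwedge_{l=1}^{k}\Phi_{r_l}^{\sigma_l^{-}}$ is a first-order standard contradiction and $\bigvee_{l=1}^{k}\Phi_{r_l}^{\sigma_l^{+}}$ has $\Psi_j$ as an instance; then $\Phi_j:=C_k^{s\sigma}(\Phi_{r_1},\dots,\Phi_{r_k})$ extends the first-order deduction and preserves the invariant. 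Finally, since only $\square$ instantiates to $\square$, the lift $\Phi_s$ of $\Psi_s=\square$ is itself empty, giving the desired S-CS based dynamic deduction from $S$ to $\square$.

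The crux — and the step I expect to be the main obstacle — is the \emph{S-CS lifting lemma} invoked in the inductive step. Unlike Robinson's binary lifting lemma, a single combinatorial condition (a complementary pair in every tuple of the Cartesian product $\prod_{l=1}^{k}\Phi_{r_l}^{\sigma_l^{-}}$) couples all $k$ clauses simultaneously, so one must exhibit one substitution per clause that jointly witnesses all complementary pairs forced by the ground standard contradiction. Concretely I would: (i) for each complementary ground pair occurring in $\bigwedge_{l}\Psi_{r_l}^{-}$, record the induced unification constraint on the corresponding literals of the renamed first-order clauses; (ii) split each $\Phi_{r_l}$ according to which of its literals have $\tau$-instances lying in $\Psi_{r_l}^{-}$ versus $\Psi_{r_l}^{+}$, duplicating (via a fresh variant) any first-order literal whose instances straddle both parts; (iii) let $\sigma_l$ be the restriction to $\Phi_{r_l}$ of a most general solution of the collected constraints, chosen more general than $\tau$; (iv) verify, using Lemma~\ref{lem3.1}, that unsatisfiability of $\bigwedge_{l}\Psi_{r_l}^{-}$ together with the constructed $\sigma_l$ forces $\bigwedge_{l=1}^{k}\Phi_{r_l}^{\sigma_l^{-}}$ to be a standard contradiction, while $\Psi_j=\bigvee_{l}\Psi_{r_l}^{+}$ is an instance of $\bigvee_{l=1}^{k}\Phi_{r_l}^{\sigma_l^{+}}$ under the residual substitution. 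The delicate bookkeeping of renamed copies and of the $-/+$ straddling literals — ensuring the $k$ local substitutions remain globally compatible — is where the real work lies; the surrounding Herbrand reduction and induction are then routine.
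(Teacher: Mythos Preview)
The paper does not itself prove this statement; Theorem~\ref{thm3.4} is quoted in the Preliminaries as a result of \cite{xu2018contradiction}. The closest the present paper comes is the proof of the analogous completeness result for the Standard Extension variant (Theorem~\ref{thm5.4}) and the observation recorded in Remark~\ref{rmk3.5}: ordinary binary resolution is the $m=2$ instance of the S-CS rule, and since binary resolution is complete for first-order logic, every unsatisfiable $S$ already has a binary-resolution refutation, which is \emph{a fortiori} an S-CS based dynamic deduction to the empty clause. This one-line reduction is the intended argument.

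Your Herbrand-plus-lifting route is not wrong, but it is far more work than necessary, and the ``main obstacle'' you flag largely dissolves once you re-read Definition~\ref{Def3.4}: the substitutions $\sigma_i$ in the S-CS rule are \emph{arbitrary}, not required to be most general unifiers. Consequently the lift of a ground CS step is trivial---take $\sigma_l:=\tau|_{\Phi_{r_l}}$, so that $\Phi_{r_l}^{\sigma_l}=\Psi_{r_l}$ on the nose; the ground separation is then literally a first-order standard contradiction and $\Phi_j=\Psi_j$. Your steps (i)--(iv), the MGU machinery, and the worry about ``globally compatible'' substitutions are all unnecessary. But more to the point, even this simplified lifting is superfluous: the special-case-of-binary-resolution argument delivers completeness immediately, with no Herbrand reduction, no propositional CS completeness, and no lifting lemma at all.
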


\begin{theorem}\label{thm3.5}\cite{xu2018contradiction}
\textbf{(Soundness of the Q-CS Based Dynamic Deduction in First-Order Logic)} Suppose a clauses set $S = \{C_1, C_2,\cdots, C_m\}$ in first-order logic. $\phi_ 1, \phi_ 2,\cdots, \phi_ t$ is an  Q-CS based dynamic deduction from $S$ to a clause $\phi_ t $. If $\phi_ t$ is an empty clause, then $S$ is unsatisfiable.
\end{theorem}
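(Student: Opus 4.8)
The plan is to reduce the soundness of Q-CS based dynamic deduction to the soundness of the single-step Q-CS rule, and then establish the single-step soundness directly from the semantics of unsatisfiability. The overall structure mirrors the standard argument used for resolution soundness: show that each inference step produces a logical consequence of its premises, then induct along the deduction sequence.

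First I would prove the single-step lemma: if $C_1,\dots,C_m$ are clauses (with disjoint variables, as assumed in Definition~\ref{Def3.5}) and $\phi = C_m^{q\sigma}(C_1,\dots,C_m) = \bigvee_{i=1}^m C_i^{\sigma_i^+}$ is a Q-CSC obtained via substitutions $\sigma_i$ and the separation $C_i^{\sigma_i} = C_i^{\sigma_i^-}\vee C_i^{\sigma_i^+}$ with $\bigwedge_{i=1}^m C_i^{\sigma_i^-}$ unsatisfiable, then $\{C_1,\dots,C_m\} \models \phi$. The argument: let $I$ be any model of $\{C_1,\dots,C_m\}$, viewed via its Herbrand interpretation. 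For each $i$, since $C_i$ is true in $I$ (as a universally quantified clause), the instance $C_i^{\sigma_i}$ is also true in $I$, hence $C_i^{\sigma_i^-}\vee C_i^{\sigma_i^+}$ is true in $I$ under every ground instantiation. Now suppose for contradiction that $\phi$ is false in $I$, i.e. there is a ground substitution $\tau$ making every $C_i^{\sigma_i^+}\tau$ false. Then for each $i$, $C_i^{\sigma_i^-}\tau$ must be true in $I$. This forces a ground instance of $\bigwedge_{i=1}^m C_i^{\sigma_i^-}$ to be satisfiable in $I$, contradicting the hypothesis that $\bigwedge_{i=1}^m C_i^{\sigma_i^-}$ is unsatisfiable. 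Care is needed here: one must lift from "$\bigwedge C_i^{\sigma_i^-}$ is unsatisfiable" (a statement about the conjunction of these sub-clauses as a clause set) to the statement that no single interpretation can make a common ground instance of all of them true; this follows because the variables across the $C_i$ (and hence across the $C_i^{\sigma_i^-}$) are disjoint, so a single $\tau$ ranging over all of them corresponds exactly to choosing one ground instance per sub-clause, and unsatisfiability of the clause set means precisely that no such simultaneous choice can all be true.

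Next I would run the induction on the deduction sequence $\phi_1,\dots,\phi_t$. The claim is that every model $I$ of $S$ satisfies $\phi_i$ for all $i \le t$. For the base case, if $\phi_i \in S$ this is immediate. For the inductive step, $\phi_i = C_{r_{k_i}}^{s\sigma_i}(\phi_{r_1},\dots,\phi_{r_{k_i}})$ with all $r_j < i$; by the induction hypothesis $I$ satisfies each $\phi_{r_j}$, and by the single-step lemma $I$ satisfies $\phi_i$. (One should note the mild notational wrinkle that Definition~\ref{Def3.7}, which this theorem invokes, writes $C^{q\sigma_i}$ while Definition~\ref{Def3.6} writes $C^{s\sigma_i}$; in either reading the single-step soundness lemma applies, since a standard contradiction is in particular unsatisfiable by Lemma~\ref{lem3.1}, so the S-CS case is subsumed by the Q-CS case.) Applying this with $i = t$: if $\phi_t$ is the empty clause, then no interpretation can satisfy it, so $S$ has no model, i.e. $S$ is unsatisfiable, which is the desired conclusion.

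The main obstacle I anticipate is the single-step lemma's lifting argument — specifically, making fully rigorous the passage from the syntactic condition "$\bigwedge_{i=1}^m C_i^{\sigma_i^-}$ is unsatisfiable" to the semantic consequence that in any model of the premises, the disjunction $\bigvee_i C_i^{\sigma_i^+}$ must hold. The subtlety is that $\phi_i$ carries its own free variables (to be universally closed), and we need the implication to hold for every ground instance $\tau$; the disjointness-of-variables convention is what licenses treating the combined substitution $\tau$ as an independent choice of instance for each sub-clause $C_i^{\sigma_i^-}$, and it must be explicitly invoked. Everything else — the induction, the base case, the final contradiction from emptiness of $\phi_t$ — is routine.
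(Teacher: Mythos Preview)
The paper does not prove this theorem; it is stated in the preliminaries as a result cited from \cite{xu2018contradiction}. Your overall architecture (single-step soundness plus induction along the deduction) is the natural one, and the inductive part is fine. The gap is in the single-step lemma, exactly where you flagged the ``main obstacle.'' Your claimed resolution --- that unsatisfiability of the clause set $\{C_i^{\sigma_i^-}\}$ ``means precisely that no such simultaneous choice [of one ground instance per clause] can all be true'' --- is false in first-order logic. Unsatisfiability of a clause set means no interpretation satisfies the \emph{universal closure} of every clause; it does not rule out an interpretation satisfying one particular ground instance of each. Concretely: $\{P(x), \neg P(y)\}$ is an unsatisfiable clause set, yet any $I$ with $P(a)$ true and $P(b)$ false satisfies both $P(a)$ and $\neg P(b)$. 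Disjointness of variables across the $C_i$ does not rescue this; it is precisely what allows the two instances to be chosen independently.

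This is not merely a rigor issue you can patch. Take $C_1 = P(x)\vee Q(x)$, $C_2 = \neg P(y)\vee R(y)$ with identity substitutions and $C_1^{\sigma_1^-}=P(x)$, $C_2^{\sigma_2^-}=\neg P(y)$. Since $\{P(x),\neg P(y)\}$ is unsatisfiable, Definition~\ref{Def3.5} as written licenses the Q-CSC $\phi = Q(x)\vee R(y)$; but $\{C_1,C_2\}\not\models\phi$ (domain $\{a,b\}$ with $P(a), Q(b), R(a)$ true and $P(b), Q(a), R(b)$ false). Adjoining $C_3=\neg Q(a)$ and $C_4=\neg R(b)$ keeps $S$ satisfiable while a further Q-CS step on $\phi,C_3,C_4$ yields $\emptyset$. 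So under the definitions exactly as transcribed in this paper, the theorem itself fails. The likely explanation is a transcription issue --- note that Definitions~\ref{Def3.6} and~\ref{Def3.7} already have their $s$/$q$ superscripts interchanged, and condition~(iii) of Definition~\ref{Def3.4} writes $C_i^{\sigma_i^+}$ where $C_i^{\sigma_i^-}$ is clearly intended --- and the original \cite{xu2018contradiction} presumably intends a stronger reading of ``$\bigwedge_i C_i^{\sigma_i^-}$ unsatisfiable'' (for instance, propositionally unsatisfiable after substitution, which would force genuinely complementary literals and restore soundness). Your proof cannot be completed without first pinning down that intended reading from the source.
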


\section{\label{sec:level4}
Dynamic Automated Deduction by Contradiction Separation: The Standard Extension Method for Propositional Logic}

According to the dynamic deduction theory of contradiction separation, each deduction step involves dividing the participating clause set $\{C_1, C_2, \cdots, C_m\}$ into two components: a contradiction part and a contradiction-separation clause part.
Formally, each participating clause $C_i (i=1, 2, \cdots, m)$ is decomposed as $C_i = C_i^{-} \vee C_i^{+}$, where $C_i^{-}$ and $C_i^{+}$ denote, respectively, the negative and positive literal subsets of $C_i$. If the conjunction of all negative parts $\bigwedge_{i=1}^{m} C_i^{-}$ is unsatisfiable, then a contradiction separation is obtained, and the corresponding contradiction-separation clause is derived as $\bigvee_{i=1}^{m} C_i^{+}$. Thus, the efficiency of the dynamic deduction process depends critically on how contradictions are constructed. 

In the dynamic deduction method based on the Standard Extension (SE), the participating clauses and literals are extended through complementary literals, generating what is termed a standard contradiction. This mechanism provides an efficient and systematic means of realizing the contradiction separation theory in automated deduction.

\subsection{\label{sec:level4.1}ConceptS of the Standard Extension in Propositional Logic}
The core principle of the Standard Extension–based dynamic contradiction separation is to extend clauses through complementary literals to generate contradiction-separation clauses.
The intuitive idea of this contradiction-construction process in propositional logic is illustrated as follows.
\begin{enumerate}
\item[-] Select a clause in the clauses set $S = \{C_1, C_2,\cdots, C_n\}$ and denote it as $D_1$, and select a literal $x_1$ in $D_1$. Then extend the clause $D_1$ according to the literal $x_1$, and select a new clause which includes $\neg$$x_1$ in $S$ and denote it as $D_2$. 
\item[-] Select $x_2$ in $D_2 -\{\neg$$x_1\} $, and then proceed according to the literal $x_2$ to extend, that is, select a new clause which includes $\neg$$x_2$ in $S$ and denote it as $D_3$. 
\item[-] Select the literal $x_3$ in $D_3 -\{\neg$$x_1,  \neg$$x_2\} $, and then extend according to literal $x_3 $, that is, select a new clause includes $ \neg$$x_3$ in $S$ and denote it as $D_4 $. 
\item[-] Repeat this process in this way, the extension process will be terminated until a certain condition is met, and set the last extended literal at this time as $\neg$$x_{k-1}$, where the last extended clause is $D_k$.

\item[-] Connecting the extended literals $x_1 $, $x_2 $, $\cdots$, $x_{k-1} $, $\neg$$x_{k-1}$ as a boundary line, and the line divides each clause $D_i(1 \leq i \leq k)$ into two parts $D_i^+$ and $D_i^-$ as shown in Figure \ref{fig1}. Meanwhile, for each literal $x$ in $D_i(1 \leq i \leq k)$, if $x$ is complementary with $x_j$ in the extended literal set $\{x_1, x_2, \cdots, x_{i-1}\}$, then put $x_j$ in $D^-_i$. For intuitive representation, put $x$ on the same horizontal line with $x_j $.  In this case, the separated parts $D_i^-$ are taken together to form a contradiction based on standard extension of $D_1, D_2,\cdots, D_k$, i.e., $\bigwedge_{i=1}^kD_i^-$. The remaining parts form a new clause i.e., $\bigvee_{i=1}^kD_i^+$, which constitutes a contradiction separation clause based on standard extension of $D_1, D_2,\cdots, D_k $.   
 
\begin{figure}[hbt]\centering
  \includegraphics[width=0.55\textwidth]{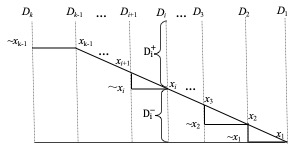} 
  \caption{Contradiction separation based on standard extension in propositional logic}
  \label{fig1}
\end{figure}
 
\end{enumerate}
\begin{definition}\label{Def4.1}
\textbf{(Contradiction and contradiction separation clause based on standard extension)} In propositional logic, let $S = \{C_1, C_2,\cdots, C_n\} $, $D_1, D_2,\cdots, D_m$ be involved clauses for constructing contradiction based on standard extension in $S$. If
\begin{enumerate}
\item[(1)] $D_i = D_i^+ \vee D_i^- (1 \leq i \leq m)$,
\item[(2)] $D_1^-= x_1$, $D_2^-= x_2 \vee \neg$$x_1$, $D_i^- =(x_i \vee  \neg$$x_{i-1}) \vee \bigvee_{x \in D^i}x$ $(3 \leq i \leq m -1)$, $D_m^- =  \neg$$x_{m-1} \bigvee _{x \in D^m} x $, where $D^j \subseteq \bigcup_{h=1}^{j-2}\{\neg$$x_h\}$ $(3 \leq j \leq m)$, 
\end{enumerate}
then $\bigwedge_{i=1}^mD_i^-$ is called a contradiction based on standard extension, and $R_s(D_1$, $D_2, \cdots, D_m) = \bigvee_{i=1}^mD_i^+$ is a contradiction separation clause based on standard extension of $D_1, D_2, \cdots, D_m$. The literal $x_i$ in $D_i^-$($ 1 \leq i \leq m -1 $) is called the extended literal, and its clause $D_i$ is called the extended clause. $\{x_1, x_2,\cdots, x_{m-1}\}$ are the literals on the main boundary, and $\{ \neg$$ x_1,  \neg$$x_2, \cdots,  \neg$$x_{m-1}\}$ are  the literals on the secondary boundary. 

\end{definition}
\begin{remark}\label{rmk4.1}

\begin{enumerate}
\item[(1)]  For the convenience of presentation, we call the process of the contradictions   construction and generation of contradictions separation clause based on standard extension in Definition \ref{Def4.1} is a contradiction separation based on standard extension.

\item[(2)] In the contradictions separation based on standard extension, the participating clauses can be repeated.

\item[(3)] In the contradictions separation based on standard extension, if there are only two clauses involved, then the contradictions separation is the same as binary resolution in this case. If multiple clauses participate in contradictions separation generally, then the contradictions are constructed by many clauses, and can be deleted. Therefore, the binary resolution is a special form of the contradictions separation based on the standard extension, and hence the contradictions separation based on the standard extension is an important extension of the binary resolution.

\item[(4)] In the contradictions separation based on standard extension, if the extended clauses are involved by all the clauses in the original clauses set $S$, and there exist some literals above the  main boundary, then $S$ is satisfiable, and the set of the extended literals in the main boundary is a satisfiable example of $S$.
\end{enumerate}
\end{remark}

\begin{example}\label{Exa1}
Let $S = \{C_1, C_2, C_3\}$  be the clauses set in propositional logic, $C_1$ = $\neg$$p \vee $$\neg$$ q \vee t$, $C_2$ = $p \vee t$, $C_3$ = $\neg$$ t $, where $p, q, t$ are the propositional variables. There exists a contradictions separation based on standard extension of $S$ as follows.

\begin{tabular}{lccc}\\\hline
  $C_1$ & $C_2$ & $C_3$\\
  \hline
  $\neg q$& & \\
  $\neg p$& $p$&\\
  $t$& $t$&$\neg t$\\\hline\\
\end{tabular}

Then $($$\neg$$ t) \wedge(p \vee  t) \wedge($$\neg$$ p \vee  t)$ is a contradiction  based on standard extension of $C_1, C_2, C_3$, the corresponding contradiction separation clause is $R_s(C_3, C_2, C_1)$ = $\neg$$ q $, where $\neg$$ t$ and $p$ are extended literals, $C_1, C_2, C_3$ are extended clauses.
\end{example}

\begin{theorem} 
In propositional logic, the contradiction $\bigwedge_{i=1}^mD_i^-$ in Definition \ref{Def4.1} is a standard contradiction.
\begin{proof}
It can be proved according to the definition of contradiction based on standard extension in Definition \ref{Def4.1} and Theorem 3.1.1 in \cite{xu2018contradiction}. 	
\end{proof}
\end{theorem}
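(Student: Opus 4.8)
The plan is to verify directly that the negative parts $D_i^{-}$ assembled in Definition~\ref{Def4.1} satisfy the defining condition of a standard contradiction from Definition~\ref{def3}: namely, that every tuple $(p_1,\dots,p_m)\in\prod_{i=1}^{m}D_i^{-}$ contains at least one complementary pair. The structure of the $D_i^{-}$ is highly rigid — $D_1^{-}=\{x_1\}$ is a singleton, $D_2^{-}=\{x_2,\neg x_1\}$, each intermediate $D_i^{-}=\{x_i,\neg x_{i-1}\}\cup D^i$ with $D^i\subseteq\{\neg x_1,\dots,\neg x_{i-2}\}$, and $D_m^{-}=\{\neg x_{m-1}\}\cup D^m$ with $D^m\subseteq\{\neg x_1,\dots,\neg x_{m-2}\}$ — so the argument is essentially combinatorial bookkeeping on which index $j$ gets picked from each $D_i^{-}$.

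First I would fix an arbitrary selector tuple $(p_1,\dots,p_m)$ with $p_i\in D_i^{-}$ and argue that a complementary pair must appear. Since $D_1^{-}$ is the singleton $\{x_1\}$, we always have $p_1=x_1$. Now consider $D_2^{-}=\{x_2,\neg x_1\}$: if $p_2=\neg x_1$, then $\{p_1,p_2\}=\{x_1,\neg x_1\}$ is the desired complementary pair and we are done. Otherwise $p_2=x_2$. I would then proceed inductively: suppose we have reached stage $i$ with $p_1=x_1,\ p_2=x_2,\ \dots,\ p_{i}=x_{i}$ (i.e., at every earlier stage the ``$x$-literal'' branch was taken). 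Look at $D_{i+1}^{-}$. It contains $\neg x_{i}$; if $p_{i+1}=\neg x_i$ we get the complementary pair $\{x_i,\neg x_i\}=\{p_i,p_{i+1}\}$. If $p_{i+1}$ is one of the literals in $D^{i+1}\subseteq\{\neg x_1,\dots,\neg x_{i-1}\}$, say $p_{i+1}=\neg x_h$ for some $h\le i-1$, then since $p_h=x_h$ we again obtain the complementary pair $\{p_h,p_{i+1}\}$. The only remaining possibility is $p_{i+1}=x_{i+1}$, which extends the inductive hypothesis. Carrying this to $i=m-1$ leaves only the case $p_1=x_1,\dots,p_{m-1}=x_{m-1}$; but then $p_m\in D_m^{-}=\{\neg x_{m-1}\}\cup D^m$, so $p_m=\neg x_{m-1}$ (pairing with $p_{m-1}$) or $p_m=\neg x_h$ with $h\le m-2$ (pairing with $p_h=x_h$). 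In every branch a complementary pair is produced, so $\bigwedge_{i=1}^{m}D_i^{-}$ is a standard contradiction.

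Alternatively — and this is what the authors' one-line proof gestures at — once one observes that $\bigwedge_{i=1}^{m}D_i^{-}$ is \emph{unsatisfiable} (any model would have to satisfy $x_1$, hence falsify $\neg x_1$, hence satisfy $x_2$ from $D_2^{-}$, hence falsify $\neg x_2$, and so on down the chain of ``secondary boundary'' literals, forcing $x_{m-1}$ true while $D_m^{-}$ demands $\neg x_{m-1}$ or some already-falsified $\neg x_h$), one may invoke Lemma~\ref{lem3.1} (``standard contradiction $\iff$ quasi-contradiction'' in propositional logic): unsatisfiability of $\bigwedge_{i=1}^{m}D_i^{-}$ immediately upgrades to its being a standard contradiction. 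This is the route suggested by the reference to Theorem~3.1.1 of~\cite{xu2018contradiction}, which is Lemma~\ref{lem3.1} here.

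The only genuine obstacle is making the induction airtight with respect to the index ranges in condition~(2) of Definition~\ref{Def4.1} — in particular checking that the $D^i$ can only contain $\neg x_h$ with $h\le i-2$ (so that $x_h$ really has already been ``committed'' as $p_h$ at an earlier stage) and handling the boundary cases $i=1,2$ and $i=m$ where the template for $D_i^{-}$ degenerates. These are routine to check but must be stated carefully; there is no deep difficulty, and the unsatisfiability-plus-Lemma~\ref{lem3.1} shortcut sidesteps even this bookkeeping.
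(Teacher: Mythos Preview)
Your proposal is correct, and the second route you sketch (show $\bigwedge_{i=1}^{m}D_i^{-}$ is unsatisfiable, then invoke Lemma~\ref{lem3.1}) is exactly what the paper's one-line proof points to via its citation of Theorem~3.1.1 in~\cite{xu2018contradiction}. Your first route, the direct inductive bookkeeping on tuples, is a clean explicit verification that the paper omits here; it is essentially the propositional specialization of the ``padding'' argument the paper gives later for the first-order analog (Theorem~\ref{thm5.1}), where each $D_i^{-}$ is enlarged to contain all $\neg x_h$ with $h<i$ and one observes that the original Cartesian product sits inside the enlarged one.
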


\begin{remark}\label{rmk4.2}
In the process of contradictions separation based on standard extension, if the clauses involved are $C_1, C_2,\cdots, C_k$, and the contradictions separation clause is $R_s(C_1, C_2,\cdots, C_k)$, then $R_s(C_1, C_2,\cdots, C_k)$ may be different, or does not even exist when $C_1$, $C_2$, $\cdots$, $C_k$ participate in the contradictions separation in different order. Therefore, The result of the contradictions  separation clause based on standard extension depends on the order of extended clauses.
Furthermore, the result of $R_s(C_1, C_2,\cdots, C_k)$ is related to the extended literals that participate in the contradictions separation based on standard extension.
\end{remark}

\begin{example}\label{Exa2}
Suppose the clauses set $S = \{C_1, C_2, C_3, C_4\}$ in propositional logic, $C_1 =p $, $C_2 =$$\neg$$ p \vee $$\neg$$q \vee r $, $C_3 =r \vee q $, $C_4 =$$\neg$$r$, where $p, q, r$ are propositional variables. Then there exists a contradiction separation clause based on standard extension of $S$:

\begin{tabular}{lccc}\\\hline
  $C_4$& $C_3$& $C_2$& $C_1$\\ \hline
  & & $r$ & \\
  $\neg$$r$ & $r$&  & \\
  & $q$& $ \neg$$q$  & \\
  &  & $\neg$$p$   &  $p$\\\hline\\
\end{tabular}

Therefore, $R_s(C_1, C_2, C_3, C_4)=r$. If the order of the clauses are changed, then $R_s(C_4, C_3, C_2, C_1)= \emptyset$, $R_s(C_1, C_2, C_4, C_3)$ and $R_s(C_1, C_4, C_2, C_3)$ does not exist. If the selection of extended literal is different, $R_s(C_1, C_2, C_3, C_4)$ may not exist, concretely, if the extended literal in $C_2$ of $R_s(C_1, C_2, C_3, C_4)$ is selected as $r$, then $R_s(C_1, C_2, C_3, C_4)$ does not exist.
\end{example}

It can be seen from Example \ref{Exa2}, in the process of contradictions separation based on standard extension, the contradictions separation may be different even if each clause in the contradiction only includes the main boundary and the secondary boundary, and the order of the extended literal or the extended clauses is changed.

\begin{remark}\label{rmk4.3} 
The law of associativity in the separation of contradictions based on standard extension does not hold. A simple example is shown as follows.

Suppose the set of clauses $S = \{C_1, C_2, C_3, C_4\} $ in propositional logic, $C_1 =$$\neg$$ p$ $\vee $$\neg$$ q \vee t$, $C_2 = p \vee t$, $C_3 = q \vee t$, $C_4 = $$\neg$$t$, where $p, q, t$ are propositional variables, then 

\begin{tabular}{lcc}\\\hline
  $C_1$& $C_2$& $C_4$\\ \hline
  $\neg$$ q$& &   \\
  $\neg$$p$ & $p$&  \\
  $t$& $t$& $\neg$$t$ \\\hline\\ 
\end{tabular}

$C_5$ = $R_s (C_4, C_2, C_1)$ = $\neg$$q$ = $R_s (R_s (C_1, C_2), C_4)$. However, $R_s (C_1, R_s (C_2, C_4))$ = $t \vee $$\neg$$ q$. Therefore, $R_s (R_s (C_1, C_2), C_4)$ $\neq$ $R_s (C_1, R_s (C_2, C_4))$.
\end{remark}

\begin{definition}\label{Def4.2}
\textbf{(Refutation Sequence under the Standard Extension in Propositional Logic)}  
Let $S = \{C_1, C_2, \ldots, C_n\}$ be a set of clauses in propositional logic, and let  
$w = \{\phi_1, \phi_2, \ldots, \phi_t\}$ denote a dynamic deduction sequence of contradiction separation based on the Standard Extension (S-CS) from $S$ to $\phi_t$.  
Each $\phi_i \ (i = 1, 2, \ldots, t)$ in the sequence satisfies one of the following conditions:
\begin{enumerate}
    \item[(1)] $\phi_i \in S$, or
    \item[(2)] there exist indices $r_1, r_2, \ldots, r_{k_i} < i$ such that  
    $\phi_i = R_s(\phi_{r_1}, \phi_{r_2}, \ldots, \phi_{r_{k_i}})$,
    where $R_s$ denotes the inference rule of contradiction separation under the Standard Extension.
\end{enumerate}
If $\phi_t = \emptyset$, then $w$ is called a \textit{refutation sequence of contradiction separation based on the Standard Extension} for $S$.
\end{definition}

\begin{example}\label{Exa3}
Let the clauses set $S = \{C_1, C_2, C_3, C_4, C_5, C_6\}$ in propositional logic, $C_1$ = $x_1 \vee x_5$, $C_2$ = $\neg$$ x_1 \vee  x_2$, $C_3$ = $\neg$$ x_1\vee  $$\neg$$ x_2 \vee  x_3$, $C_4$ = $\neg$$ x_1 \vee  $$\neg$$ x_3 \vee  x_4$, $C_5$ = $\neg$$ x_1 \vee  $$\neg$$ x_2 \vee  $$\neg$$ x_4$, $C_6$ = $x_1 \vee $$\neg$$ x_5$, where $x_1, x_2, x_3, x_4, x_5$ are propositional variables. It is easy to validate $S$ is unsatisfiable. Using the method of contradictions separation based on standard extension, the following deduction can be obtained.

\begin{enumerate}
\item[(1)]  $C_1 = x_1 \vee x_5$
\item[(2)]  $C_2 =$$\neg$$ x_1 \vee x_2$
\item[(3)]  $C_3 =$$\neg$$ x_1 \vee $$\neg$$ x_2 \vee  x_3$
\item[(4)]  $C_4 =$$\neg$$ x_1 \vee $$\neg$$ x_3 \vee x_4$
\item[(5)]  $C_5 =$$\neg$$ x_1 \vee $$\neg$$ x_2 \vee $$\neg$$ x_4$
\item[(6)]  $C_6 = x_1 \vee $$\neg$$ x_5$
\\-----------------------
\item[(7)]  $C_7 = R_s (C_1, C_2, C_3, C_4, C_5) = x_5$
\item[(8)]  $C_8 = R_s (C_7, C_6, C_2, C_3, C_4, C_5)= \emptyset$
\end{enumerate}

Therefore, $w = \{C_1, C_2, C_3, C_4, C_5, C_6, C_7, C_8\}$ is a refutation of contradiction separation based on standard extension of $S$.
\end{example}

\subsection{\label{sec:level4.2}Soundness and Completeness of the Dynamic Deduction of the Contradictions Separation Based on Standard Extension}

In propositional logic, the dynamic deduction of contradiction separation based on the Standard Extension is a special case of the general contradiction separation framework.
Since the soundness of the general dynamic deduction has been established, the soundness of the Standard Extension–based dynamic deduction follows directly; that is, Theorem \ref{thm3.2} holds.

\begin{theorem}\label{thm3.2}
\textbf{(Soundness of the dynamic deduction of contradictions separation based on standard extension)} Let $S = \{C_1, C_2,\cdots, C_n\}$ be the clauses set in propositional logic, $\phi_ 1, \phi_ 2,\cdots, \phi_ t$ a deductive sequence of contradiction separation based on standard extension from $S$ to $\phi_t$. If $\phi_ t = \emptyset$, then $S$ is unsatisfiable.
\end{theorem}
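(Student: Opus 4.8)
The plan is to obtain the statement as a direct specialization of the already-established soundness of the general contradiction-separation (CS) based dynamic deduction in propositional logic (the soundness theorem recalled in Section~\ref{sec:level3}, following \cite{xu2018contradiction}). The idea is to verify that every inference step of a Standard-Extension deduction is literally an instance of the CS rule of Definition~\ref{def3.2}, so that the sequence $\phi_1,\phi_2,\dots,\phi_t$ is in particular a CS based dynamic deduction sequence in the sense of Definition~\ref{Def2.2.2}; the soundness theorem for that kind of sequence then closes the argument at once.

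First I would take an arbitrary step of the second type, $\phi_i = R_s(\phi_{r_1},\dots,\phi_{r_{k_i}})$ (the case $\phi_i\in S$ is trivial), and inspect the accompanying decomposition $D_j = D_j^{+}\vee D_j^{-}$, $1\le j\le k_i$, prescribed by Definition~\ref{Def4.1}. I would then check the three requirements of the CS rule one at a time. Condition~(1), that $D_j = D_j^{-}\vee D_j^{+}$ with $D_j^{-}$ and $D_j^{+}$ having no common literal, is immediate: the literals listed in $D_j^{-}$ (namely $x_j$, $\neg x_{j-1}$, and the carry-over literals of $D^{j}\subseteq\bigcup_{h=1}^{j-2}\{\neg x_h\}$) all lie in $D_j$ by construction, and $D_j^{+}$ is by definition the rest of $D_j$, a clause being treated as a set of literals. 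Condition~(2), that $D_j^{-}$ is nonempty, follows from the explicit forms in Definition~\ref{Def4.1}: $D_1^{-}=x_1$, each $D_j^{-}$ with $2\le j\le k_i-1$ contains the extended literal $x_j$, and $D_{k_i}^{-}$ contains $\neg x_{k_i-1}$. Condition~(3), that $\bigwedge_{j=1}^{k_i}D_j^{-}$ is a standard contradiction, is precisely the content of the earlier theorem of Section~\ref{sec:level4.1} stating that the contradiction based on standard extension of Definition~\ref{Def4.1} is a standard contradiction. Hence $R_s(\phi_{r_1},\dots,\phi_{r_{k_i}})=\bigvee_{j=1}^{k_i}D_j^{+}$ is a contradiction-separation clause $\mathscr{C}_{k_i}(\phi_{r_1},\dots,\phi_{r_{k_i}})$ in the sense of the CS rule.

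With this equivalence in hand, each $\phi_i$ in the sequence is either a member of $S$ or a CSC of earlier clauses, so $w=\phi_1,\dots,\phi_t$ satisfies Definition~\ref{Def2.2.2} and is a genuine CS based dynamic deduction from $S$ to $\phi_t$. Since $\phi_t=\emptyset$ by hypothesis, the propositional soundness theorem for CS based dynamic deduction applies verbatim and yields that $S$ is unsatisfiable. The only step I expect to require any care is the verification of conditions~(1) and~(2), i.e.\ that the prescribed $D_j^{-}$ is always a nonempty sub-clause of $D_j$ disjoint from $D_j^{+}$; this is handled by the case split $j=1$, $2\le j\le k_i-1$, $j=k_i$ together with the structural constraint $D^{j}\subseteq\bigcup_{h=1}^{j-2}\{\neg x_h\}$, after which the whole argument is a short translation between Definitions~\ref{Def4.1} and~\ref{def3.2} plus the one appeal to the standard-contradiction theorem.
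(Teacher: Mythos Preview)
Your proposal is correct and follows exactly the paper's approach: the paper simply observes (in the paragraph immediately preceding the theorem) that Standard-Extension deduction is a special case of the general CS-based dynamic deduction, so soundness is inherited directly from the general soundness theorem of \cite{xu2018contradiction}. You have merely spelled out in more detail why each $R_s$-step satisfies the three conditions of Definition~\ref{def3.2}, invoking the theorem that SE contradictions are standard contradictions for condition~(3); this is the same argument, only more explicit.
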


\begin{theorem}\label{thm3.3}
\textbf{(Completeness of dynamic deduction based on contradiction separation based on standard extension)} Let $S = \{C_1,\cdots, C_n\}$ be the clauses set without tautologies and pure literal clauses in propositional logic. If $S$ is unsatisfiable, there exists a deduction of contradictions separation based on standard extension from $S$ to the empty clause $\emptyset$.
\end{theorem}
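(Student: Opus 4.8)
The plan is to derive the completeness of the Standard Extension refutation \emph{from} the classical refutation-completeness of binary resolution, exploiting the observation already recorded in Remark~\ref{rmk4.1}(3): a contradiction separation based on standard extension in which only two clauses participate is nothing but an ordinary binary resolution step. First I would make that observation precise. Instantiating Definition~\ref{Def4.1} with $m=2$ forces $D_1^-=x_1$ and $D_2^-=\neg x_1$ (for $m=2$ the index set governing $D^2$ is empty, so no additional negative literal can occur), hence $D_1 = D_1^+\vee x_1$ and $D_2 = D_2^+\vee\neg x_1$ with $x_1\notin D_1^+$ and $\neg x_1\notin D_2^+$, and $R_s(D_1,D_2)=D_1^+\vee D_2^+$ is exactly the binary resolvent of $D_1$ and $D_2$ on the complementary pair $x_1,\neg x_1$. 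Consequently every binary resolvent of two clauses of $S$ (or of an extension of $S$) is the result of a one-step standard extension, and, reading a resolution derivation in order, \emph{any} finite binary-resolution derivation from $S$ is literally a dynamic deduction sequence of contradiction separation based on standard extension in the sense of Definition~\ref{Def4.2}: the ordering requirement $r_j<i$ is inherited, and repetition of the participating clauses is explicitly allowed by Remark~\ref{rmk4.1}(2).

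Given this, it suffices to produce a binary-resolution refutation of $S$, which is the classical completeness of resolution for an unsatisfiable finite propositional clause set (Robinson~\cite{robinson1965machine}). To keep the argument self-contained I would insert one of the two standard short proofs. The semantic-tree version: since $S$ is unsatisfiable and finite, the complete semantic tree over the (finitely many) variables occurring in $S$ is closed, i.e.\ every branch reaches a failure node at which some clause of $S$ is falsified; one repeatedly selects an inference node both of whose children are failure nodes, resolves the two clauses falsified there on the branching variable to obtain a clause falsified already at that node, prunes, and inducts on the number of nodes until the empty clause is produced. (Alternatively, the Anderson--Bledsoe induction on the excess-literal count $\sum_{C\in S}(|C|-1)$: split a clause $C=C'\vee\ell$ with $C'\neq\emptyset$ into the two strictly smaller unsatisfiable sets $(S\setminus\{C\})\cup\{C'\}$ and $(S\setminus\{C\})\cup\{\ell\}$, refute each by induction, pull the first refutation back to $S$ by restoring $C$ in place of $C'$ so that it becomes a derivation of $\emptyset$ or of $\{\ell\}$, and in the latter case prepend it to the second refutation.) Composing with the first paragraph, the resulting resolution refutation is already a Standard Extension refutation of $S$, which is the theorem.

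The bookkeeping points to check are routine: concatenating and re-indexing the sub-derivations so that Definition~\ref{Def4.2}(2) is met; the degenerate cases of the excess-literal induction in which $C'$ or $\{\ell\}$ is already a member of $S$ (these only decrease the measure); and the place of the two standing hypotheses. The freedom-from-tautologies and freedom-from-pure-literal-clauses assumptions are not actually needed for the reduction route — a refutation may simply avoid pure-literal clauses (which never lie in a minimal unsatisfiable subset) and, if desired, avoid tautological resolvents — so they become relevant only if one prefers a constructive proof following the Standard Extension procedure step for step, in which a pure literal, or a tautological clause, among the participating clauses would stall the chain of complementary-literal extensions. I would therefore present the reduction as the principal argument and note that the hypotheses leave the constructive variant equally available. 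The only substantive ingredient is the completeness of binary resolution; the main ``obstacle'' is merely the editorial choice between citing it and inserting one of the two short proofs above, after which Remark~\ref{rmk4.1}(3) does all the remaining work.
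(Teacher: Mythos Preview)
Your argument is correct. In fact the paper itself explicitly endorses exactly this route in Remark~\ref{rmk3.5}, immediately after the theorem: binary resolution is a special case of the Standard Extension, so completeness of the latter follows from completeness of the former.

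That said, the proof the paper actually gives for Theorem~\ref{thm3.3} is \emph{not} this reduction but a direct induction on the number $|S|$ of clauses. The base cases $|S|=2,3$ are handled by an exhaustive case analysis of what an unsatisfiable two- or three-clause set can look like. For the inductive step from $n$ to $n+1$, the paper fixes $C_{n+1}=\bigvee_{t=1}^{T}x_t$, observes that $S$ is unsatisfiable iff each $C_1\wedge\cdots\wedge C_n\wedge x_t$ is, strips $\neg x_t$ from the clauses of $\{C_1,\dots,C_n\}$ to obtain an unsatisfiable set of $\le n$ clauses, applies the inductive hypothesis to get a Standard-Extension refutation of that reduced set, and then reinstates the deleted $\neg x_t$'s to turn each such refutation into a Standard-Extension derivation of $\neg x_t$ from $\{C_1,\dots,C_n\}$. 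Finally these unit clauses $\neg x_1,\dots,\neg x_T$ are combined with $C_{n+1}$ via genuinely multi-clause Standard-Extension steps such as $R_s(\neg x_2,C_{n+1},\neg x_1)$, iterated down to~$\emptyset$.

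The trade-off is clear. Your reduction is shorter, conceptually cleaner, and needs no hypothesis on tautologies or pure literals (as you note). The paper's direct proof is longer and more laborious, but it is self-contained and, crucially, it \emph{exercises} the multi-clause character of the Standard Extension rule in the final combination step, whereas your route never leaves the binary fragment and thus establishes completeness without ever illustrating what the new calculus adds. Both are valid; the paper's choice reflects a desire to demonstrate the mechanism rather than merely inherit completeness.
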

\begin{proof}
Let's prove this theorem by induction the number $|S| $ of clauses in $S$. If $|S| = 2$, then $S$ includes two clauses, i.e., $S = C_1 \wedge C_2$. Since $S$ is unsatisfiable, then any literals in $C_1$ and any literals in $C_2$ are complementary literals. Suppose $C_1 = x_1 \vee C_1^0 $, $C_2 =$$\neg$$ x_1 \vee C_2^0$, then for any literal $p$ in the $C_1^0$, $p$ is complementary to $\neg$$x_1$, hence  $p$ is the same as $x_1$, i.e., $C_1 = x_1$. In the same way, $C_2 =$$\neg$$ x_1$. Therefore, the conclusion holds for $|S| = 2$.

If $| S | = 3$, then $S$ includes $3$ clauses. Suppose $S = C_1 \wedge C_2 \wedge C_3 $, $C_1 = p_{11} \vee p_{12} \vee \cdots \vee p_{1n_1}$, $C_2 = p_{21} \vee p_{22} \vee \cdots \vee p_{2n_2}$, $C_3 = p_{31} \vee p_{32} \vee \cdots \vee p_{3n_3}$, where $p_{ij}$ is a literal. $C_1, C_2, C_3$ is sorted by the number of literals, i.e., $| C_1 | \leq | C_2 | \leq | C_3 | $. Since $S$ is unsatisfiable, then for any  literals $p_{1i} \in C_1$, $p_{2j} \in C_2$, $p_{3k} \in C_3$  ($1 \leq i \leq n_1, 1 \leq j \leq n_2, 1 \leq k \leq n_3$), there exists a complementary pair in  $\{p_{1i}, p_{2j}, p_{3k}\}$. Three cases exist as follows.
\begin{enumerate}
\item[(1)]  If $C_1$ is a unit clause, i.e., $C_1 = p_{11}$, then the following sub-cases exist.
\begin{enumerate}
\item[1)] If $C_2$ is a unit clause, then 
\begin{enumerate}
\item[i) ] if $C_2 = $$\neg$$ p_{11}$, then $C_3$ can be of any form.

\item[ii) ] if $C_2 = p_{21}$, $p_{21}$ and $p_{11}$ are not the same, then
\begin{enumerate}
\item [-] if $C_3$ is a unit clause, then $C_3 = $$\neg$$ p_{11} $, or $C_3 = $$\neg$$ p_{21} $.

\item[-] if $C_3$ is a 2-ary clause, $p_{11} \wedge p_{21}\wedge C_3 = p_{11} \wedge p_{21} \wedge (p_{31}\vee p_{32})= (p_{11} \wedge p_{21} \wedge p_{31})\vee (p_{11} \wedge p_{21}\wedge p_{32})$, then $p_{31}$, $p_{32}$ are complementary to $p_{11}$, $p_{21}$, i.e., $C_3 = $$\neg$$ p_{11} \vee $$\neg$$ p_{21}$.

\item[-] if $C_3$ includes more than or equal to $ 3$ literals, then $p_{11} \wedge p_{21}\wedge C_3 = p_{11} \wedge p_{21} \wedge (p_{31}\vee p_{32}\vee C_3^0)= (p_{11} \wedge p_{21} \wedge p_{31})\vee (p_{11} \wedge p_{21}\wedge p_{32})\vee (p_{11} \wedge p_{21}\wedge C_3^0)$. From the discussion above, $p_{31}$, $p_{32}$ are complementary to $p_{11}$, $p_{21}$. In this case, if $C_3^0 \neq \emptyset$, then for any literal $x$ in the $C_3^0$, $\{p_{11}, p_{21}, C_3^0\}$ includes no complementary pair. This is a contradictory to $\{p_{1i}, p_{2j}, p_{3k}\}$ includes a complementary pair.
\end{enumerate}
\end{enumerate}
\item[2)] If $C_2$ is a 2-ary clause, then the following cases exist.
\begin{enumerate}
\item[i) ] If $C_2$ includes $p_{11}$, then $C_1 \wedge C_2 \wedge C_3 = p_{11} \wedge (p_{11} \vee p_{21}) \wedge C_3 = p_{11} \wedge C_3$, and $C_3$ includes at least two literals, hence this is a contradictory to $\{p_{1i}, p_{2j}, p_{3k}\}$ includes a complementary pair.

\item[ii) ] If $C_2$ includes $ \neg$$p_{11}$, then $C_1 \wedge C_2 \wedge C_3 =  \neg$$p_{11} \wedge (p_{11} \vee p_{21}) \wedge C_3 = p_{11} \wedge p_{21} \wedge C_3$. $C_3$ includes at least two literals. According to the discussion in 1), $C_3 =  \neg$$p_{11} \vee  \neg$$p_{21}$.

\item[iii) ] If all the literals in $C_2$ are independent of $p_{11}$, then $C_1 \wedge C_2 \wedge C_3$ =  $\neg$$p_{11} \wedge (p_{21} \vee p_{22}) \wedge C_3$ = $(p_{11} \wedge p_{21} \wedge C_3) \vee (p_{11} \wedge p_{22} \wedge C_3)$. $C_3$ includes at least two literals. According to the discussion in 1), this is a contradictory to $\{p_{1i}, p_{2j}, p_{3k}\}$ includes a complementary pair.
\end{enumerate}
\item[3)] If $C_2$ includes more than or equal to 3 literals, then $C_2$ includes $\neg$$p_{11}$. According to the discussion in 2), $C_1 \wedge C_2 \wedge C_3 =  p_{11} \wedge (\neg$$p_{11} \vee p_{21}\vee p_{22} \vee C_2^0) \wedge C_3 = p_{11} \wedge (\neg$$p_{11} \vee p_{21}\vee p_{22} \vee C_2^0) \wedge C_3$. Therefore, this is a contradictory to $\{p_{1i}, p_{2j}, p_{3k}\}$ includes a complementary pair.
\end{enumerate}
\item[(2)] If $C_1$ includes more than or equal to 2 literals, similar to the discussion in (1), this is a contradictory to $\{p_{1i}, p_{2j}, p_{3k}\}$ includes a complementary pair.
\end{enumerate}

Therefore, if $S$ is unsatisfiable and $| S | = 3 $, then $S$ have 5 forms, i.e., $\{p_{11} \wedge  \neg$$p_{11} \wedge C_3, p_{11} \wedge p_{21} \wedge  \neg$$p_{11}, p_{11} \wedge p_{21} \wedge  \neg$$p_{21}, p_{11} \wedge p_{21} \wedge ( \neg$$p_{11} \vee  \neg$$p_{21}), p_{11} \wedge ( \neg$$p_{11} \vee p_{21})\wedge ( \neg$$p_{11} \vee  \neg$$p_{21})\} $. Obviously, a refutation of contradiction separation based on standard extension of $S$ can be obtained.

Suppose the conclusion holds for $| S | \leq n$. We will prove that the conclusion holds for $| S | = n+1 $. 

Let $S = \{C_1, C_2,\cdots, C_{n+1}\}$ be a unsatisfiable clauses set, and $C_{n+1} =\bigvee_{t=1}^Tx_t $, then $C_1 \wedge C_2 \wedge \cdots \wedge C_{n+1}$ is unsatisfiable if and only if for any $t \in  \{1,\cdots, T\}$, $C_1 \wedge C_2 \wedge  \cdots \wedge  C_n \wedge x_t$ is unsatisfiable. Then for any $t \in  \{1,\cdots, T\}$, suppose there exist complementary literals of $x_t$ in $C_1, C_2,\cdots, C_n $. Without loss of generality, let all the clauses included $\neg$$x_t$ in $C_1, C_2, \cdots, C_{n}$ be $C_i =  \neg$$x_t \vee C_i^0$ $(i = 1,2, \cdots, k_t)$. Thus 
\begin{equation}
\begin{aligned}
&C_1 \wedge C_2 \wedge \cdots \wedge C_n \wedge x_t\\ 
&= ( \neg x_t \vee (C_1^0 \wedge \cdots \wedge C_{k_t}^0))\wedge C_{k_t+1} \wedge \cdots \wedge C_n \wedge x_t. \\
&= ( \neg x_t \wedge C_{k_t}^0\wedge \cdots \wedge  C_n \wedge x_t) \vee (C_1^0 \wedge \cdots \wedge C_{k_t}^0  \wedge C_{k_t+1} \wedge \cdots \wedge C_n \wedge x_t)\\
&=C_1^0 \wedge \cdots \wedge C_{k_t}^0  \wedge C_{k_t+1}^0 \wedge \cdots \wedge C_n \wedge x_t.\nonumber
\end{aligned}
\end{equation}

Therefore, $C_1 \wedge C_2 \wedge \cdots \wedge  C_n \wedge x_t$ is  unsatisfiable if and only if $C_1^0 \wedge \cdots \wedge C_{k_t}^0  \wedge C_{k_t+1} \wedge \cdots \wedge C_n \wedge x_t$ is unsatisfiable. Note that $\{C_1^0, \cdots, C_{k_t}^0, C_{k_t+1},\cdots, C_n\}$ does not include $ \neg$$x_t$, hence $C_1 \wedge C_2 \wedge \cdots \wedge  C_n \wedge x_t$ is unsatisfiable if and only if $C_1^0 \wedge \cdots \wedge C_{k_t}^0  \wedge C_{k_t+1} \wedge \cdots \wedge C_n$ is unsatisfiable.

Since $C_1^0 \wedge \cdots \wedge C_{k_t}^0  \wedge C_{k_t+1} \wedge \cdots \wedge C_n$ is  unsatisfiable, and $| C_1^0 \wedge \cdots \wedge C_{k_t}^0  \wedge C_{k_t+1} \wedge \cdots \wedge C_n| \leq n$, according to the inductive hypothesis, there exists a refutation of contradiction separation based on standard extension of the clauses set $\{C_1^0, \cdots, C_{k_t}^0, C_{k_t+1}, \cdots, C_n\}$,  i.e., $w_t^* = \{D_{t_1}^*, \cdots, D_{t_u}^* = \emptyset\}$.

For the refutation $w_t^*$, the following cases exist.
\begin{enumerate}
\item[(1)] If all the involved clauses in $w_t^*$ are from $C_{k_t+1}, \cdots, C_n$, then $w_t^*$ is also the refutation of contradiction separation based on standard extension of $S$. In this case, this theorem holds. 
\item[(2)] If the involved clauses in $w_t^*$ include $C_j^0 (1 \leq j \leq k_t)$, since $C_j$ = $\neg$$x_t \vee C_j^0$ $(i = 1,2, \cdots, k_t)$, then add the literal $\neg$$x_t$ to $C_j^0$. Therefore, all the involved clauses $C_j^0 (1 \leq j \leq k_t)$ in the refutation $w_t^*$ are restored to $C_j (1 \leq j \leq k_t)$. In this case, there exists a deduction of the separation of contradictions based on standard extension from $C_1\wedge\cdots \wedge C_{n}$ to $\neg$$x_t$, hence there exists a refutation of contradiction separation based on standard extension of $C_1\wedge\cdots\wedge C_{n}\wedge x_t$. Therefore, this theorem holds for any $t \in \{1, \cdots, T\} $. Since  $\{C_1^0, \cdots, C_{k_t}^0, C_{k_t+1}, \cdots, C_n\}$ does not include $ \neg$$x_t$, then the following subcases exist.
\begin{enumerate}
\item[1)] $ \neg$$x_t$ is the literal in contradiction separation based on standard extension of $C_1, \cdots,C_{k_t}, C_{k_t+1},\cdots, C_n$.
\item[2)] Since $\{C_1^0, \cdots, C_{k_t}^0, C_{k_t+1}, \cdots, C_n\}$ does not include $\neg$$x_t$, then the extended literals in contradiction separation based on standard extension do not include $x_t$ and $ \neg$$x_t$. Therefore, the remaining $ \neg$$x_t$ does not affect contradiction separation based on standard extension.
\end{enumerate}
\end{enumerate}

Denote the contradiction separation deduction based on standard extension from $\{C_1, \cdots,C_{k_t}, C_{k_t+1},\cdots, C_n\}$ to $ \neg$$x_t$ as $w_t^0 = \{D_{t_1}^0, \cdots, D_{t_u}^0$ = $\neg$$x_t\}$, where $t \in  \{1,\cdots,T\}$. Note that the deduction of contradictions separation based on standard extension can be constructed as follows.
\begin{equation}
\begin{aligned}
& R_s ( \neg x_2, C_{n+1},  \neg x_1) = \bigvee _{i=3}^Tx_i.\\
& R_s ( \neg x_4, R_s( \neg x_2, C_{n+1},  \neg x_1),  \neg x_{3}) = R_s( \neg x_4, \bigvee _{i=3}^Tx_i,  \neg x_3) =  \bigvee _{i=5}^Tx_i.\\
&\cdots\nonumber
\end{aligned}
\end{equation}

Two cases exist as follows.
\begin{enumerate}
\item[(1)] If $T$ = $2m$ is an even number, then $R_s ( \neg$$x_{2m}, R_s( \neg$$x_{2m-2}, R_s(\cdots),  \neg$$x_{2m-3})$,  $\neg$$x_{2m-1})$ = $\emptyset $.
\item[(2)] If $T=2m+1$ is an odd number, then
$R_s ( \neg$$x_{2m+1}, R_s( \neg$$x_{2m}, R_s(\cdots),  \neg$$x_{2m-1}),$  $\neg$$x_{2m-1})$ = $\emptyset $.
\end{enumerate}

Hence a refutation of contradiction separation based on standard extension of $\{C_1, \cdots,C_{k_t}, C_{k_t+1},\cdots, C_n, C_{n+1}\}$ can be obtained. Therefore, the conclusion holds for $| S | = n+1 $.
\end{proof}

\begin{remark}\label{rmk3.5}
The completeness of the contradiction separation deduction based on the Standard Extension follows directly from that of binary resolution.  
Since binary resolution can be regarded as a special case of the Standard Extension–based contradiction separation, the completeness of the latter is guaranteed by the well-established completeness of the former.
\end{remark}

\subsection{\label{sec:level4.3}A Unified Dynamic Deduction Method Based on the Standard Extension in Propositional Logic}

This section presents a unified realization of the contradiction separation framework under the Standard Extension, which provides a systematic procedure for determining the satisfiability or unsatisfiability of propositional formulas.  
In particular, when a formula is satisfiable, the method is capable of constructing a corresponding satisfying assignment.

The unified dynamic deduction process for contradiction separation based on the Standard Extension in propositional logic is described as follows.

\begin{enumerate}
\item[Step 0.] (Pre-treatment) Let $S_0 = \{C_1, C_2, \cdots, C_n\}$ be a set of clauses. Use pure literal rule and tautology rule in $S_0$, and get a new clauses set $S$.

\item[Step 1.] For the clauses set $S$, select a clause in $S$, and denote it as $D_1 $. Then select a literal in $D_1$, and denote it as $x_1$. Extend the literal $x_1$, i.e., select its complementary literal $ \neg$$x_1$ in $S$, and denote its clause as $D_2(x_1)$. In this case, the clause $D_1$ is divided into two parts: clause $D_1^-$ and clause $D_1^+$, where $D_1^-= x_1$, $D_1^+=D_1-D_1^-$, which means that $D_1^+$ is gotten by deleting $D_1^-$ from $D_1$.
\item[Step 2.] Select a literal in $D_2(x_1) - \{\neg$$x_1\}$, and denote it as $x_2$. Extend the literal $x_2$, i.e., select its complementary literal $ \neg$$x_2$ in $S$, and denote its clause as $D_3(x_2)$. In this case, the clause $D_2(x_1)$ is divided into two parts: clause $D_2^-(x_1)$ and clause $D_2^+(x_1)$, where $D_2^-(x_1)= x_2 \vee  \neg$$x_1$, $D_2^+(x_1)= D_2(x_1) -D_2^-(x_1)$, which means that $D_2^+(x_1)$ is gotten by deleting $D_2^-(x_1)$ from $D_2(x_1)$.

\item[$\cdots$ ]

\item[Step $i (i \geq 3)$.] Select a literal in $D_{i}(x_{i-1})- \{\neg$$x_{i-1}\} -\cdots - \{\neg$$x_1\}$, and denote it as $x_i$. Extend the literal $x_i$, i.e., select its complementary literal $ \neg$$x_i$ in $S$, and denote its clause as $D_{i+1}(x_i)$. In this case, the clause $D_{i+1}(x_i)$ is divided into two parts: clause $D_{i+1}^-(x_i)$ and clause $D_{i+1}^+(x_i)$, where $D_i^-(x_{i-1})=(x_i \vee  \neg$$x_{i-1}) \vee D_i^0, D_i^0 = \bigvee_{x \in D^i}x$, $D^i \subseteq \{ \neg$$x_j | j = 1,2,\cdots, i-2\}, D_i^+(x_{i-1})=D_i(x_{i-1}) - D_{i}^-(x_{i-1})$, which means that $D_{i+1}^+(x_i)$ is gotten by deleting $D_{i+1}^-(x_i)$ from $D_{i+1}(x_i)$.

\item[$\cdots$ ] 

\item[Step $k-1$.] Select a literal in $D_{k-1}(x_{k-2})- \{\neg$$x_{k-2}\} -\cdots - \{\neg$$x_1\}$, and denote it as $x_{k-1}$. Extend the literal $x_{k-1}$, i.e., select its complementary literal $ \neg$$x_{k-1}$ in $S$, and denote its clause as $D_{k}(x_{k-1})$. In this case, the clause $D_{k}(x_{k-1})$ is divided into two parts: clause $D_{k}^-(x_{k-1})$ and clause $D_{k}^+(x_{k-1})$, where $D_{k-1}^-(x_{k-2})=(x_{k-1} \vee  \neg$$x_{k-2}) \vee D_{k-1}^0$, $D_{k-1}^0 = \bigvee_{x \in D^{k-1}}x$, $D^{k-1} \subseteq \{ \neg$$x_j | j = 1,2,\cdots, k-3\}$, $D_{k-1}^+(x_{k-2})=D_{k-1}(x_{k-2}) - D_{k-1}^-(x_{k-2})$, which means that $D_{k}^+(x_{k-1})$ is gotten by deleting $D_{k}^-(x_{k-1})$ from $D_{k}(x_{k-1})$.

\item[Step $k$.] For the clause $D_k(x_{k-1})$, if it cannot be extended or no longer needs to be extended, then stop the extension. In this case, the clause $D_k(x_{k-1})$ is divided into two parts: the clause  $D_k^-(x_{k-1})$ and clause $D_k^+(x_{k-1})$, where $D_{k-1}^-(x_{k-2})$ =  $\neg$$x_{k-1} \vee D_{k}^0$, $D_{k}^0$ = $\bigvee_{x \in D^{k}}x$, and $D^{k} \subseteq \{ \neg$$x_j | j = 1,2,\cdots, k-2\}$, $D_{k}^+(x_{k-1})$ = $D_{k}(x_{k-1}) - D_{k}^-(x_{k-1}) $.

\end{enumerate}

The contradiction based on the standard extension $\bigwedge_{i=1}^kD_i^-(x_{i-1})$ and the contradiction separation clause based on standard extension $R_s(D_1,D_2,\cdots,D_k)$ = $\bigvee_{i=1}^kD_i^+(x_{i-1})$  are obtained respectively, where $D_1^-(x_0)= D_1^-$, $D_1^+(x_0)= D_1^+$. Then the following cases exist.
\begin{enumerate}
\item[(1)] If $R_s(D_1,D_2,\cdots,D_k) = \emptyset$, then $S$ is  unsatisfiable.

\item[(2)]  If $R_s(D_1,D_2,\cdots,D_k) \neq \emptyset$, then the subcases exist.
\begin{enumerate}
\item[(i)] If some clauses of $S$ are not in the involved clauses $\{D_1,D_2,\cdots,D_k\}$, then $S = S \cup \{R_s(D_1,D_2,\cdots,D_k)\}$, and go to step 0.

\item[(ii)] Otherwise, if there exist the literal $y$ and index $j_0$, such that $y \in D_{j_0}^+ $, and  $x_{k-1},\cdots, x_{j_0}$ are different from $y $, and all the literals in $\{x_{k-1},\cdots, x_{j_0}\}$ are not the same with the literals in $\{x_{j_0-1},\cdots, x_1\}$,  then ($ \neg$$x_{k-1},\cdots,$ $\neg$$x _{j_0}, y, x_{j_0-1},\cdots, x_1$) is a satisfiable example of $S$, where $ \neg$$x_{k-1},\cdots,$ $ \neg$$x _{j_0}, y, x_{j_0-1},\cdots, x_1$ are the literals in $D_k, \cdots$, $D _{j_0+1}$, $D_{j_0}, D_{j_0-1},\cdots, D_1$, respectively.
\end{enumerate}

\end{enumerate}

\begin{remark}\label{rmk4.4}

\begin{enumerate}
\item[(1)] To further enhance the efficiency of the unified contradiction separation method based on the Standard Extension, it is useful to prioritize clauses and literals according to their characteristics, such as the number of logical symbols, the presence of ground terms, or assigned weights within literals and clauses.  
After sorting, the first literal in the ordered list is selected as the extended literal, and its complementary literal in the first clause is chosen as the corresponding extended clause.  
This process constructs the associated contradictions and their separations.  
Since sorting merely defines a priority order for selecting literals and clauses, it does not affect the completeness of the method.

\item[(2)] Additional constraints can be introduced to further improve efficiency when applying the unified method:
\begin{enumerate}
    \item All atoms of literals on the main boundary must be distinct; in other words, the literals should be neither identical nor complementary.
    \item When selecting literals on the main boundary, a literal should not duplicate any remaining literal used during the extension process.
    \item During the extension process, no complementary literal pairs should appear among the remaining literals.
\end{enumerate}
\end{enumerate}
\end{remark}

According to these principles, a unified algorithm for contradiction separation based on the Standard Extension in propositional logic is presented in Algorithm~\ref{agm1}.

\begin{algorithm}[h]
   \begin{scriptsize}
   \caption{A unified algorithm for deduction of contradictions separation based on standard extension in propositional logic} \label{agm1}
   \BlankLine
   \KwIn{$S= \{C_1, C_2, \cdots, C_n\}$ is the clauses set in propositional logic.}
   \KwOut{Judge the unsatisfiability and satisfiability of $S$. If $S$ is satisfiable, then also give its  counterexample.}
   Initiate: $T_0$ is the set of the conditions for stopping the contradiction extension.\\
   \While {$P_s \neq 0$ $\&$ $P_s \neq 1$ }{
   $S_0 \leftarrow S$ by pure literal rule and tautology rule;\\
   Select $D_1$ is $S_0$, and $x_1$ in $D_1$;\\
   Extend $x_1$ and select $\neg$$x_1$ in $S_0$, denote its clause as $D_2(x_1)$;\\
   k=1;\\
   $D_1 = D_1^- \cup D_1^+$, $D_1^- = x_1$, $D_1^+ = D_1 - D_1^-$.\\
     \While {$T_0$ is not met}{
     $k=k+1$;\\
     Select $x_k$ in $D_k(x_{k-1})-\bigvee_{j=1}^{k-1}\{$$\neg x_j\}$ in $S_0$, denote its clause as $D_{k+1}(x_k)$;\\
     $D_k = D_k^- \cup D_k^+$, $D_k^- =(x_i\vee \neg x_{i-1}) \vee D_k^0$, $D_k^0=\bigvee_{x \in D^k} x$, $D^k \subseteq \{\neg x_j | j=1,\cdots, k-2\}$.
     }
     $R_s=\bigvee_{i=1}^kD_i^+(x_{i-1})$.\\
     \If {$R_s$ = $\emptyset$}{
     $S$ is unsatisfiable, and $P_s=0$.\\
     }
     \Else {
     Denote the involved clause in $S$ as $IC$.\\
       \If  {$IC \neq  S$}{
           $S=S \cup R_s$.\\
           }
        \Else {
           $S$ is satisfiable, $P_s=1$;\\
           Choose $y \in D_{j_0}^+$ such that $y$ is different from the literals in $x_{k-1},\cdots, x_{j_0}$, then ($ \neg$$x_{k-1},\cdots,$ $\neg$$x _{j_0}, y, x_{j_0-1},\cdots, x_1$) is a satisfiable example of $S$.
           }
     }
   }
   \end{scriptsize}
\end{algorithm}

This algorithm systematically realizes the dynamic deduction framework of contradiction separation under the Standard Extension in propositional logic.  
By iteratively extending complementary literals and generating contradiction-separation clauses, it provides a constructive procedure for determining both satisfiability and unsatisfiability.  
If a contradiction is detected, the method yields a refutation demonstrating unsatisfiability; otherwise, it produces a concrete satisfying assignment, thereby ensuring both soundness and completeness.  
This unified process not only generalizes traditional resolution-based strategies but also serves as the operational foundation for subsequent extensions to first-order logic and for implementations in systems such as CSE, CSE\_E, CSI\_E, and CSI\_Enig as detailed in experimention section.

 \subsection{\label{sec:level4.4}Relationship Between the Standard Extension and Linear Resolution in Propositional Logic}

This section analyzes the theoretical connection between the Standard Extension–based contradiction separation and the classical linear resolution method.  
Both approaches are grounded in the resolution principle but differ in their structural formulation and inference dynamics.  
While linear resolution performs binary clause resolution in a sequential chain, the Standard Extension generalizes this process by allowing multi-clause cooperation through dynamic contradiction construction.  

Consequently, linear resolution can be regarded as a special case of the Standard Extension, where the inference is restricted to a single resolution path.

\begin{theorem}\label{thm4.6}
In propositional logic, let $S = \{C_1, C_2,\cdots, C_n\}$ be a clauses set, $D_i \subseteq C_i (1\leq i\leq n)$, $R_s(D_1, D_2, \cdots, D_n)$ be contradiction separation based on standard extension of $D_1, D_2, \cdots, D_n$, then there exists a linear resolvent $R_{n,n-1,\cdots,1} (R_{n,n-1,\cdots,2}$ $ (\cdots, R_{n,n-1} (D_n, D_{n-1}), D_{n-2}),\cdots), D_1) $ of $D_1, D_2, \cdots, D_n$, such that  
 $R_s(D_1, D_2,\cdots, D_n)$ = $R_{n,n-1,\cdots,1}(R_{n,n-1,\cdots,2}(\cdots, R_{n,n-1}(D_n, D_{n-1}), D_{n-2}),\cdots), D_1)$   
\end{theorem}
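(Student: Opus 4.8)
The plan is to exhibit the linear resolution chain explicitly by reading it off from the structure of the standard extension, and then to prove by induction on the number of extended clauses that the partial linear resolvents coincide with the accumulated positive parts $\bigvee_{j=i}^{n} D_j^{+}$. The key observation is that the standard extension of $D_1,\dots,D_n$ arranges the extended literals $x_1,\dots,x_{n-1}$ along a main boundary with their complements $\neg x_1,\dots,\neg x_{n-1}$ on the secondary boundary, so that consecutive clauses $D_i$ and $D_{i+1}$ are linked exactly through the complementary pair $x_i / \neg x_i$. I would first set $R_{n,n-1} := \operatorname{res}(D_n, D_{n-1})$, resolving away the pair $\neg x_{n-1}\in D_n^{-}$ and $x_{n-1}\in D_{n-1}^{-}$ (recall $D_n^{-}=\neg x_{n-1}\vee\bigvee_{x\in D^n}x$ and $D_{n-1}^{-}=(x_{n-1}\vee\neg x_{n-2})\vee D_{n-1}^{0}$ from Definition~\ref{Def4.1}), and then, at the $i$-th step, resolve the running resolvent against $D_{n-i}$ on the literal $x_{n-i}$ versus $\neg x_{n-i}$.

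The inductive claim I would carry is: for each $1\le i\le n-1$, the linear resolvent
$R_{n,n-1,\dots,n-i+1}$ equals
$\bigl(\bigvee_{j=n-i+1}^{n} D_j^{+}\bigr)\ \vee\ \bigl(\text{the as-yet-unresolved secondary-boundary literals and leftover }D^j\text{ fragments}\bigr)$,
and more precisely that after resolving with $D_{n-i}$ the only negative-side literal that survives into the next step is $x_{n-i}$ (sitting on the main boundary of $D_{n-i}$), together with whatever literals of $D^{n-i+1}\subseteq\{\neg x_h : h\le n-i-1\}$ are still pending — and these get cancelled later precisely when the chain reaches the clause $D_{h+1}$ whose main-boundary literal is $x_h$. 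The base case $i=1$ is the single resolution $\operatorname{res}(D_n,D_{n-1})$, where the hypothesis that $\bigwedge_{i=1}^n D_i^{-}$ is a standard contradiction guarantees that $\neg x_{n-1}$ and $x_{n-1}$ are genuinely complementary, so the resolvent exists; the inductive step is the analogous observation applied to the pair $x_{n-i}\in D_{n-i}^{-}$ and its complement occurring in the accumulated resolvent. At the end of the chain, after resolving with $D_1$ (whose negative part is just $x_1$, by $D_1^{-}=x_1$), every secondary-boundary literal has been matched against its main-boundary partner — this is exactly the content of the standard-contradiction property of $\bigwedge_{i=1}^m D_i^{-}$ — so the surviving clause is precisely $\bigvee_{i=1}^{n} D_i^{+} = R_s(D_1,\dots,D_n)$.

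The main obstacle I anticipate is bookkeeping the "leftover" literals $D^j$ that the standard extension permits on the negative side of $D_j$ (the fragments $D^j\subseteq\{\neg x_h : h\le j-2\}$), since these are not resolved at the immediately following step but only later, when the chain reaches clause $D_{h+1}$. I would handle this by strengthening the induction hypothesis to track the exact multiset of pending secondary-boundary literals at each stage, showing that (a) each such literal $\neg x_h$ is eventually resolved against the literal $x_h$ on the main boundary of $D_{h+1}$, and (b) no new complementary pair is ever created among the surviving literals during the chain — a fact that follows from the constraint (implicit in Definition~\ref{Def4.1} and made explicit in Remark~\ref{rmk4.4}) that the main-boundary atoms are pairwise distinct and that no complementary pair appears among the remaining literals. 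A secondary, more routine, obstacle is that the clause $D_i$ may literally equal a subset $D_i\subseteq C_i$ rather than $C_i$ itself; this is harmless because linear resolution is monotone under adding literals, and one can always take the linear resolvents over the $D_i$'s directly, which is what the statement asks for.
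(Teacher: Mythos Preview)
Your proposal is correct and essentially matches the paper's inductive proof; note the minor slip that the leftover literal $\neg x_h$ is resolved against $x_h\in D_h^{-}$, not against a literal of $D_{h+1}$. The paper streamlines the bookkeeping you anticipate by applying the inductive hypothesis directly to the standard extension on $D_2,\dots,D_{n+1}$ (which is itself a valid standard extension with main-boundary literals $x_2,\dots,x_n$) and then performing one final resolution with $D_1$, observing that this single step removes \emph{every} copy of $\neg x_1$ from the accumulated resolvent at once---so no explicit stage-by-stage tracking of pending secondary-boundary literals is required.
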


\begin{proof}
Suppose that the clauses participating in contradiction separation based on standard extension are $D_1, D_2,\cdots,D_k$, and its extended literals are $x_1, x_2,\cdots, x_{k-1}$. According to the definition of contradiction separation based on standard extension, the contradiction separation can be written as another form, i.e., $R_s (D_1,D_2,\cdots, $ $D_k)$ = $(D_1 - \{x_1\}) \vee (D_k(x_{k-1}) -\bigvee_{h=1}^{k-1}\{ \neg$$x_h\}) \vee \bigvee_{j=2}^{k-1}(D_j(x_{j-1})-\{x_j\}-\bigvee_{h=1}^{j-1}\{ \neg$$x_h\})$, where $k \geq 2 $, and $\bigvee_{j=2}^1(D_j(x_{j-1})-\{x_j\}-\bigvee_{h=1}^{j-1}\{ \neg$$x_h\})=\emptyset$. We need to prove that the linear resolvent is equal to contradiction separation.
 
We will prove this theorem by induction on the number $k$ of the involved clauses. If $k=2$, $R_s (D_1,D_2)$= $(D_1 -\{x_1\}) \vee (D_2(x_1)-\{ \neg$$x_1\})$. In this case, it is obvious that $R_s (D_1, D_2)$ = $R(D_2,D_1)$, that is, the conclusion holds for $k=2$.

Assuming that the conclusion holds for $k = n$. We need to prove that the conclusion holds for $k = n +1$. For the contradictions separation based on standard extension,
$R_s (D_1,D_2,\cdots,D_{n+1})$ = $(D_1 - \{x_1\}) \vee (D_k(x_{n+1}))-\bigvee_{h=1}^{n}\{ \neg$$x_h\}) \vee \bigvee_{j=2}^{n}(D_j(x_{j-1})-\{x_j\}-\bigvee_{h=1}^{j-1}\{\neg$$x_h\})$. According to the structure of linear resolution deduction and induction hypothesis, $R_{n+1,n,\cdots,1} (R_{n+1,n,\cdots,2} (\cdots (R_{n+1,n} (D_{n+1},D_n), \cdots),D_1)$ = $R_{n+1,n,\cdots,1} ((D_2 - \{x_2\}) \vee (D_{n+1}(x_{n}))-\bigvee_{h=2}^{n}\{ \neg$$x_h\}) \vee \bigvee_{j=3}^{n}(D_j(x_{j-1})-\{x_j\}-\bigvee_{h=2}^{j-1}\{ \neg$$x_h\}), D_1)$ = $R((D_2 - \{x_2\}) \vee (D_{n+1}(x_{n}))-\bigvee_{h=2}^{n}\{ \neg$$x_h\}) \vee \bigvee_{j=3}^{n}(D_j(x_{j-1})-\{x_j\}-\bigvee_{h=2}^{j-1}\{ \neg$$x_h\}), D_1)$.

Furthermore, according to construction method of contradiction separation based on standard extension, there exists $ \neg$$x_1$ in ($D_2 -\{x_2\}$), and its complementary literal $x_1 \in  D_1$. In the linear resolution deduction, if $(D_{n+1}(x_{n}))-\bigvee_{h=2}^{n}\{ \neg$$x_h\}) \vee \bigvee_{j=3}^{n}(D_j(x_{j-1})-\{x_j\}-\bigvee_{h=2}^{j-1}\{ \neg$$x_h\})$ includes $ \neg$$x_1$, then delete $ \neg$$x_1$ at the same time. Therefore, 

\begin{footnotesize}
$R((D_2 - \{x_2\}) \vee (D_{n+1}(x_{n}))-\bigvee_{h=2}^{n}\{ \neg$$x_h\}) \vee \bigvee_{j=3}^{n}(D_j(x_{j-1})-\{x_j\}-\bigvee_{h=2}^{j-1}\{ \neg$$x_h\}), D_1)$

= $((D_2 - \{x_2\}) \vee (D_{n+1}(x_{n}))-\bigvee_{h=2}^{n}\{ \neg$$x_h\}) \vee \bigvee_{j=3}^{n}(D_j(x_{j-1})-\{x_j\}-\bigvee_{h=2}^{j-1}\{ \neg$$x_h\})-\{ \neg$$x_1\}) \vee (D_1 -\{x_1\})$

= $(D_2 - \{x_2\}-\{ \neg$$x_1\}) \vee (D_{n+1}(x_{n}))-\bigvee_{h=2}^{n}\{ \neg$$x_h\}) \vee \bigvee_{j=3}^{n}(D_j(x_{j-1})-\{x_j\}-\bigvee_{h=1}^{j-1}\{ \neg$$x_h\}) \vee (D_1 -\{x_1\})$

= $(D_1 -\{x_1\}) \vee (D_{n+1}(x_{n}))-\bigvee_{h=1}^{n}\{ \neg$$x_h\}) \vee \bigvee_{j=3}^{n}(D_j(x_{j-1})-\{x_j\}-\bigvee_{h=1}^{j-1}\{ \neg$$x_h\})$
\end{footnotesize}

Therefore, $R_s (D_1,D_2,\cdots,D_{n+1})$ = $R_{n+1,n,\cdots,1} (R _{n+1,n,\cdots,2} (\cdots (R_{n+1,n }(D_{n+1},D_n), \cdots), D_1)$, that is, the conclusion holds $k$ = $n +1$.

\end{proof}

\begin{remark}\label{rmk4.6}
The inverse proposition of Theorem \ref{thm4.6} does not necessarily hold, that is, if $R_{1,2,\cdots,n} (R_{1,2,\cdots,n-1} (\cdots, R_{1,2 }(C_1, C_2), C_3), \cdots), C_n)$ is a linear resolvent of the clauses set $S = \{C_1, C_2,\cdots, C_n\}$, but it does not necessarily exist a contradictions separation based on standard extension in the order of $C_n, C_{n-1},\cdots, C_1$. A counterexample is shown as follows.

Let $C_1$ = $a \vee b \vee c \vee d$, $C_2$ = $\neg$$a $, $C_3$ = $\neg$$b$, $C_4$ = $\neg$$c$ in the propositional logic, where $a, b, c, d$ are the propositional variables. Then there exists a linear resolution deduction $R_{1,2,3,4}(R_{1,2,3}(R_{1,2}(C_1, C_2), C_3), C_4) = d$. Obviously, there exists no contradictions separation of based on standard extension in the order of $C_4, C_3, C_2, C_1$, such that $R_s (C_4, C_3, C_2, C_1) = d $.
\end{remark}

This theorem shows that the Standard Extension–based contradiction separation can be expressed as an equivalent sequence of linear resolution steps.  
Hence, the Standard Extension generalizes linear resolution by allowing the simultaneous participation of multiple clauses in a single deduction step, while retaining logical equivalence with the corresponding linear resolvent chain.  This relationship establishes a formal bridge between the classical resolution framework and the dynamic contradiction separation mechanism.

\section{\label{sec:level5}Dynamic Deduction by Contradiction Separation Based on the Standard Extension in First-Order Logic}

This section extends the framework of contradiction separation based on the Standard Extension (SE) from propositional logic to first-order logic.  
The goal is to establish a unified and logically sound deduction mechanism capable of handling quantified formulas, variable substitutions, and unification while preserving the theoretical properties of soundness and completeness established in the propositional case.

\subsection{\label{sec:level5.1}Concept of the Standard Extension in First-Order Logic}

Analogous to the propositional case, the Standard Extension in first-order logic constructs contradictions by extending literals and clauses through complementary pairs derived from substitution and unification.  
The underlying idea is to generalize the contradiction separation process so that it operates over quantified formulas with variable instantiation.  
The intuitive structure of the contradiction construction method based on the Standard Extension in first-order logic is illustrated as follows.

\begin{enumerate}
\item[-] Suppose that $ S = \{C_1, C_2,\cdots, C_n\}$ is a clauses set, all the clauses in  $ S $ are without any common variables. First select a clause in $S$ denoted as $D_1$, and select $x_1$ in $D_1$. According to the literal $x_1$ to extend, select the literal $y(x_1)$ in $S$ denoted its clause as $D_2(x_1)$, such that $x_1^{\theta(x_1)}$ =  $\neg$$y(x_1)^{\theta(y(x_1))}$, where $\theta(x_1)$ and $\theta(y(x_1))$ are substitutions of $x_1$ and $y(x_1)$ respectively, and merge the same literals in $D_1$ and $D_2(x_1)$, and obtain their instances $D_1^{\theta(x_1)}$ and $D_2(x_1)^{\theta(y(x_1))}$ respectively.

\item[-] Select a literal in $D_2(x_1)^{\theta(y(x_1))}- $$y(x_1)^{\theta(y(x_1))}$ denoted it as $x_2$. According to the literal $x_2$ to extend, select the literal $y(x_2)$ in $S$ denoted its clause as $D_3(x_2)$, such that $x_2^{\theta(x_2)}$ =  $\neg$$y(x_2)^{\theta(y(x_2))}$, where $\theta(x_2)$ and $\theta(y(x_2))$ are substitutions of $x_2$ and $y(x_2)$ respectively, and merge the same literals in $D_2(x_1)^{\theta(y(x_1))}$ and $D_3(x_2)$, and obtain their instances $D_2(x_1)^{\theta(y(x_1))\theta(x_2)}$ and $D_3(x_2)^{\theta(y(x_2))}$ respectively. Meanwhile, substitute $D_1^{\theta(x_1)}$ with $\theta(x_2)$, merge the same literals and get its instance $D_1^{\theta(x_1)\theta(x_2)}$. In this case, three extended clauses are gotten, i.e., $D_3(x_2)^{\theta(y(x_2))}$, $D_2(x_1)^{\theta(y(x_1))\theta(x_2)}$, and $D_1^{\theta(x_1)\theta(x_2)}$.

\item[-] Select a literal in $D_3(x_2)^{\theta(y(x_2))}- y(x_2)^{\theta(y(x_2))}-y(x_1)^{\theta(y(x_1))\theta(x_2)}$ denoted it as $x_3$. According to the literal $x_3$ to extend, select the literal $y(x_3)$ in $S$ denoted its clause as $D_4(x_3)$, such that $x_3^{\theta(x_3)}$ =  $\neg$$y(x_3)^{\theta(y(x_3))}$, where $\theta(x_3)$ and $\theta(y(x_3))$ are substitutions of $x_3$ and $y(x_3)$ respectively, and merge the same literals in $D_3(x_2)^{\theta(y(x_2))}$ and $D_4(x_3)$, and obtain their instances $D_3(x_2)^{\theta(y(x_2))\theta(x_3)}$ and $D_4(x_3)^{\theta(y(x_3))}$ respectively. Meanwhile, substitute $D_2(x_1)^{\theta(y(x_1))\theta(x_2)}$ and $D_1^{\theta(x_1)\theta(x_2)}$ with $\theta(x_3)$, merge the same literals and get their instances $D_2(x_1)^{\theta(y(x_1))\theta(x_2)\theta(x_3)}$ and $D_1^{\theta(x_1)\theta(x_2)\theta(x_3)}$. In this case, four extended clauses are gotten, i.e., $D_4(x_3)^{\theta(y(x_3))}$, $D_3(x_2)^{\theta(y(x_2))\theta(x_3)}$, $D_2(x_1)^{\theta(y(x_1))\theta(x_2)\theta(x_3)}$ and $D_1^{\theta(x_1)\theta(x_2)\theta(x_3)}$.

\item[-] Repeat the extended process above until some certain conditions are satisfied. Suppose the last extended literal is $x_{k-1}$ denoted its clause as $D_{k-1}(x_{k-2})$. According to the literal $x_{k-1}$ to extend, select the literal $y(x_{k-1})$ in $S$ denoted its clause as $D_k(x_{k-1})$, such that $x_{k-1}^{\theta(x_{k-1})}$ =  $\neg$$y(x_{k-1})^{\theta(y(x_{k-1}))}$, where $\theta(x_{k-1})$ and $\theta(y(x_{k-1}))$ are substitutions of $x_{k-1}$ and $y(x_{k-1})$ respectively, and merge the same literals in $D_{k-1}(x_{k-2})^{\theta(y(x_{k-2}))}$ and $D_{k}(x_{k-1})$, and obtain their instances $D_{k-1}(x_{k-2})^{\theta(y(x_{k-2}))\theta(x_{k-1})}$ and $D_{k}(x_{k-1})^{\theta(y(x_{k-1}))}$ respectively. Meanwhile, substitute $D_{k-2}(x_{k-3})^{\theta(y(x_{k-3}))\theta(x_{k-2})}$, $D_{k-3}(x_{k-4})^{\theta(y(x_{k-4}))\theta(x_{k-3})\theta(x_{k-2})}$, $\cdots$, $D_2(x_1)^{\theta(y(x_1))\theta(x_2)\cdots \theta(x_{k-2})}$ and $D_1^{\theta(x_1)\theta(x_2)\cdots \theta(x_{k-2})}$ with $\theta(x_{k-1})$, merge the same literals and get their instances $D_{k-2}(x_{k-3})^{\theta(y(x_{k-3}))\theta(x_{k-2})\theta(x_{k-1})}$, $D_{k-3}(x_{k-4})^{\theta(y(x_{k-4}))\theta(x_{k-3})\theta(x_{k-2})\theta(x_{k-1})}$, $\cdots$, \\$D_2(x_1)^{\theta(y(x_1))\theta(x_2)\cdots \theta(x_{k-2})\theta(x_{k-1})}$ and $D_1^{\theta(x_1)\theta(x_2)\cdots \theta(x_{k-2})\theta(x_{k-1})}$. In this case, $k$ extended clauses are gotten, i.e., $D_{k}(x_{k-1})^{\theta(y(x_{k-1}))}$, $D_{k-1}(x_{k-2})^{\theta(y(x_{k-2}))\theta(x_{k-1})}$, $D_{k-2}(x_{k-3})^{\theta(y(x_{k-3}))\theta(x_{k-2})\theta(x_{k-1})}$, $\cdots$, $D_2(x_1)^{\theta(y(x_1))\theta(x_2)\cdots \theta(x_{k-2})\theta(x_{k-1})}$ and $D_1^{\theta(x_1)\theta(x_2)\cdots \theta(x_{k-2})\theta(x_{k-1})}$.

\item[-]Therefore, the extended literals are $x_1^{\theta(x_1)\theta(x_2)\cdots \theta(x_{k-2})\theta(x_{k-1})}$, $x_2^{\theta(y(x_1))\theta(x_2)\cdots \theta(x_{k-2})\theta(x_{k-1})}$, $\cdots$, $x_{k-1}^{\theta(y(x_{k-2}))\theta(x_{k-1})}$, $\neg$$x_{k-1}^{\theta(y(x_{k-1}))}$. For simply representation, denote $\theta_i$ as the substitution of $x_i$ above, that is, $\theta_1=\theta(x_1)\theta(x_2)\cdots \theta(x_{k-2})\theta(x_{k-1})$, $\theta_2=\theta(y(x_1))\theta(x_2)\cdots \theta(x_{k-2})\theta(x_{k-1})$, $\cdots$, $\theta_{k-1}=\theta(y(x_{k-2}))\theta(x_{k-1})$, $\theta_k=\theta(y(x_{k-1}))$, and hence the extended literals are $x_1^{\theta_1}$, $x_2^{\theta_2}$, $\cdots$, $x_{k-1}^{\theta_{k-1}}$, $\neg$$x_{k-1}^{\theta_{k}}$. Connect the extended literals as the boundary line, which divided the extended clauses $D_i^{\theta_i}$($1 \leq i \leq k$) into two main parts, i.e., $D_i^{\theta_i+}$ and $D_i^{\theta_i-}$. Meanwhile, for each literal $x \in D_i^{\theta_{i}}(2 \leq i \leq k)$, if there exists a literal $x_j \in \{x_1^{\theta_1}$, $x_2^{\theta_2}$, $\cdots$, $x_{k-1}^{\theta_{k-1}}\}$, such that $x_j$ is complementary with $x$, then put $x$ in $D_i^{\theta_i-}$, and put $x$ on the same horizontal line as $x_j$ for intuitive representation. In this case, the separated part $D_i^{\theta_i-}$ ($i=1, 2, \cdots, k$) are taken together to form a contradiction $\bigwedge_{i=1}^kD_i^{\theta_i-}$ based on standard extension of $D_1, D_2,\cdots, D_k $, and the remaining parts $D_i^{\theta_i+}$ ($i=1, 2, \cdots, k $) extracted and merged the same literal to form a contradiction separation clause based on standard extension $\bigvee_{i=1}^kD_i^{\theta_i+}$ of $D_1, D_2, \cdots, D_k$. The schematic diagram of the contradictions separation based on standard extension is shown in Figure \ref{fig2}. 
\end{enumerate}

\begin{figure}[hbt]\centering
  \includegraphics[width=0.55\textwidth]{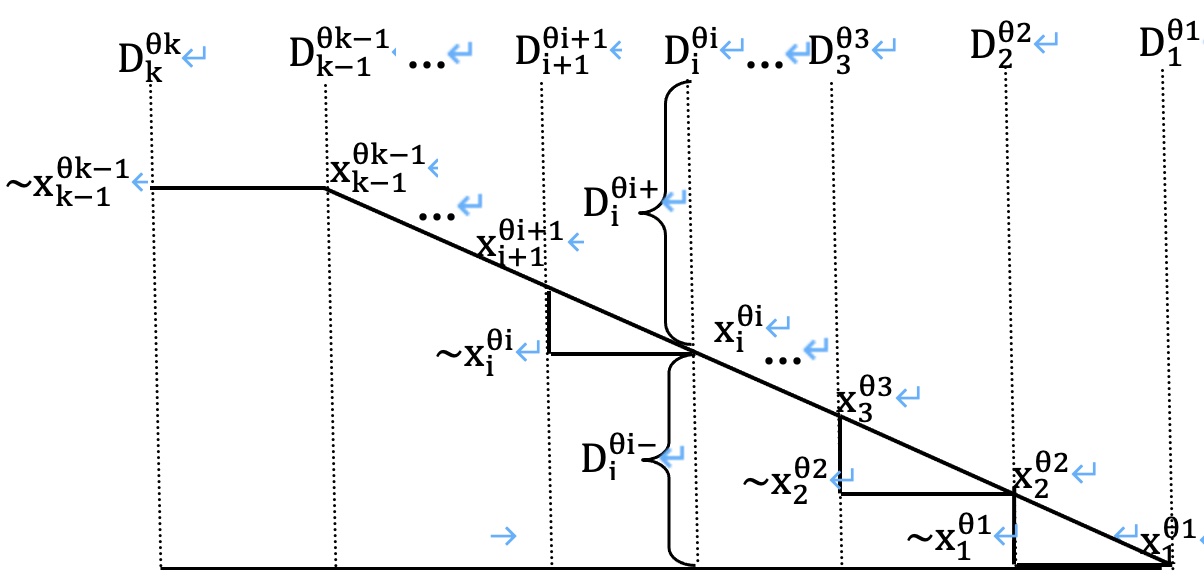} 
  \caption{Contradiction separation based on standard extension in first-order logic}
  \label{fig2}
\end{figure}

This process systematically constructs a sequence of extended clauses and corresponding substitutions, ensuring that all variable dependencies are resolved through unification.  
The resulting contradiction $\bigwedge_{i=1}^{k} D_i^{\theta_i^-}$ represents the joint unsatisfiable core generated by successive extensions, while the clause $\bigvee_{i=1}^{k} D_i^{\theta_i^+}$ denotes the corresponding contradiction separation clause under the Standard Extension.  
In essence, this procedure generalizes the propositional contradiction separation mechanism to first-order logic, where unification governs the interaction between literals and guarantees that the construction remains both sound and complete.  
The schematic illustration of this process is provided in Figure \ref{fig2}.

\begin{definition}\label{Def5.1}
(\textbf{Contradiction and the contradiction separation based on the standard extension}) In first-order logic, let $S = \{C_1, C_2,\cdots, C_n\}$ be a  clauses set, $D_1, D_2,\cdots, D_m (m \leq n)$ be the clauses that participate in the contradictions construction  based on standard extensions in $S$,  $\theta_i (1 \leq i \leq m)$ be a substitution of $D_i$. If it satisfies
\begin{enumerate}
\item[(1)] $D_i^{\theta_i} = D_i^{\theta_i +} \vee D_i^{\theta_i -}(1 \leq i \leq m)$,
\item[(2)]  $D_1^{\theta_1 -}= x_1^{\theta_1}$, $D_2^{\theta_2 -}=  \neg$$x_1^{\theta_1} \vee x_2^{\theta_2}$, $D_i^{\theta_i -}=(x_i^{\theta_i} \vee  \neg$$x_{i-1}^{\theta_{i-1}}) \vee \bigvee_{x \in D^{i}}x (3 \leq i \leq m -1)$, $D_m^{\theta_m -}= \neg$$x_{m-1}^{\theta_{m-1}} \vee \bigvee_{x\in D^m}x$, where $D^j \subseteq \bigcup_{h=1}^{j-2}\{ \neg$$x_h^{\theta_h}\}(3 \leq j \leq m) $, 
\end{enumerate}
then $\bigwedge_{i=1}^mD_i^{\theta_i-}$ is called a contradiction based on standard extension of $D_1^{\theta_1},\cdots, D_m^{\theta_m}$, $R_s (D_1^{\theta_1},\cdots, D_m^{\theta_m}) = \bigvee_{i=1}^mD_i^{\theta_i+}$ is called a contradiction separation  based on standard extension of $D_1^{\theta_1},\cdots, D_m^{\theta_m}$. The literal $x_i^{\theta_i}$ in $D_i^{\theta_i}$($1 \leq i \leq m -1$) is called the extended literal, and $D_i^{\theta_i}$($1 \leq i \leq m$) is called its extended clause accordingly. The literals in $\{x_1^{\theta_1}, x_2^{\theta_2},\cdots, x_{k-1}^{\theta_{k-1}}\}$ are called the literals on the main boundary, and the literals in  $\{ \neg$$x_1^{\theta_1},  \neg$$x_2^{\theta_2},\cdots,  \neg$$x_{k-1}^{\theta_{k-1}}\}$  are called the literals on the secondary boundary.  
\end{definition}

\begin{remark}\label{rmk5.1}
\begin{enumerate}
\item[(1)] In contradiction separation based on the Standard Extension for first-order logic, the same clause may participate in multiple extension steps, resulting in repeated use of extended clauses.

\item[(2)] When several clauses are involved in the Standard Extension–based contradiction separation and a standard contradiction is formed, that contradiction is removed, and a new clause is generated by disjoining the remaining parts to produce a contradiction-separation clause.  
In the special case where only two clauses are involved, the resulting contradiction separation coincides with the classical binary resolvent.  
Hence, contradiction separation based on the Standard Extension can be regarded as a natural generalization of binary resolution.
\end{enumerate}
\end{remark}

\begin{example}\label{Exa5.1} 
Let $S = \{C_1, C_2, C_3, C_4\}$ be a clauses set in first-order logic, $C_1$ = $\neg$$P(x_{11}, x_{12})$ $\vee$ $P(x_{12}, x_{11})$, $C_2$ = $\neg$$P(x_{21}, x_{22})$ $\vee   \neg$$P(x_{22}, x_{23}) \vee  P(x_{21}, x_{23})$, $C_3$ = $P(x_{31},f(x_{31}))$, $C_4$ = $\neg$$P (f(a_1), a_2) $, where $P$ is a predicate symbol, $x_{11}$, $x_{12}$, $x_{21}$, $x_{22}$, $x_{23}$, $x_{31}$, $x_{41}$ are variables symbols, $f$ is a function symbol, and $a_1, a_2$ are constant symbols. Then a contradiction of $S$ based on standard extension can be constructed as follows.

\begin{tabular}{lccc}\\\hline
  $C_4^{\sigma}$& $C_2^{\sigma}$& $C_1^{\sigma}$& $C_3^{\sigma}$\\ \hline
  & $ \neg$$P(a_1, a_2)$ & &  \\
 $ \neg$$P(f(a_1), a_2)$ & $P(f(a_1), a_2)$ & &  \\
  & $ \neg$$P(f(a_1), a_1)$ & $P(f(a_1), a_1)$ &  \\
 &   & $ \neg$$P(a_1, f(a_1))$& $P(a_1,f(a_1))$ \\\hline\\
\end{tabular}

Finally, a contradiction based on the standard extension of $C_4^{\sigma},C_2^{\sigma},C_1^{\sigma},C_3^{\sigma}$ is $P(a_1,f(a_1)) \wedge (P(f(a_1), a_1) \vee  \neg$$P(a_1,f(a_1))) \wedge ( \neg$$P(f(a_1), a_1) \vee P(f(a_1), a_2) \vee  \neg$$P(a_1, a_2))$ $\vee \neg$$P(f(a_1), a_2)$, and its corresponding contradiction separation is $R_s (C_3^{\sigma},C_1^{\sigma},C_2^{\sigma},C_4^{\sigma})$ = $ \neg$$P (a_1, a_2)$, where $\sigma = \{a_1 / x_{11}, f(a_1)/ x_{12}, f(a_1)/ x_{21}, a_1 / x_{22}, a_2 / x_{23}, a_1 / x_{31}\}$.
\end{example}

\begin{theorem}\label{thm5.1} 
The contradiction based on the standard extension in first-order logic is a standard contradiction.
\begin{proof}
Assuming that the vacant part of the $i^{th}$ row of the contradiction based on standard extension are added by $\neg$$y(x_{i-1})^{\theta(y(x_{i-1}))}$, $i=3,\cdots,k$, hence $D_i^{-}$ becomes to $D_i^{-*}$, that is, $D_i^{-*}= \{x_i^{\theta(y(x_{i-1}))}, \neg$$y(x_{i-1})^{\theta(y(x_{i-1}))},\cdots,  \neg$$y(x_{1})^{\theta(y_1)}\}$. In this case, $D_k^{-*}\times D_{k-1}^{-*}\times \cdots \times D_{1}^{-*}$ is a standard contradiction. $D_k^{-}\times D_{k-1}^{-}\times \cdots \times D_{1}^{-}$ is a cartesian subset of $D_k^{-*}\times D_{k-1}^{-*}\times \cdots \times D_{1}^{-*}$, so $D_k^{-}\times D_{k-1}^{-}\times \cdots \times D_{1}^{-}$  is a standard contradiction, that is, the contradiction based on the standard extension in first-order logic is a standard contradiction.
\end{proof} 
\end{theorem}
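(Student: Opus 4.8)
The plan is to strip the first-order dressing from the statement, reduce it to a combinatorial fact about the staircase shape of the negative parts prescribed by Definition~\ref{Def5.1}, and then invoke the characterisation of standard contradictions in Definition~\ref{def3}. First I would observe that Definition~\ref{Def5.1} already commits to a single fixed instance $x_h^{\theta_h}$ for each boundary literal, so each $D_i^{\theta_i-}$ is merely a finite set of concrete literals and $\neg x_h^{\theta_h}$ is literally the complement of $x_h^{\theta_h}$; no unification is therefore needed, and by Definition~\ref{def3} it suffices to show that every tuple $(p_1,\dots,p_m)\in\prod_{i=1}^m D_i^{\theta_i-}$ contains at least one complementary pair.

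The key step is to embed each row into a full lower-triangular array. I would set $D_1^{-*}=\{x_1^{\theta_1}\}$, $D_i^{-*}=\{x_i^{\theta_i}\}\cup\{\neg x_h^{\theta_h}:1\le h\le i-1\}$ for $2\le i\le m-1$, and $D_m^{-*}=\{\neg x_h^{\theta_h}:1\le h\le m-1\}$. These are obtained from the $D_i^{\theta_i-}$ by filling the empty cells of each row with the missing secondary-boundary literals, so in particular no new atom is introduced. Using the index constraints of Definition~\ref{Def5.1} --- chiefly $D^j\subseteq\bigcup_{h=1}^{j-2}\{\neg x_h^{\theta_h}\}$ --- and handling the first rows and the terminal row $D_m$ as separate cases, one checks that $D_i^{\theta_i-}\subseteq D_i^{-*}$ for every $i$, and hence $\prod_{i=1}^m D_i^{\theta_i-}\subseteq\prod_{i=1}^m D_i^{-*}$.

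Next I would prove that the padded array $\bigwedge_{i=1}^m D_i^{-*}$ is a standard contradiction by walking down the rows. Given any tuple $(p_1,\dots,p_m)$ with no complementary pair, a short induction on $i$ forces $p_i=x_i^{\theta_i}$ for $1\le i\le m-1$: the only alternatives to $x_i^{\theta_i}$ in row $i$ are the literals $\neg x_h^{\theta_h}$ with $h<i$, and choosing any of them would close a complementary pair with $p_h=x_h^{\theta_h}$. But then $p_m=\neg x_h^{\theta_h}$ for some $h\le m-1$, which is again complementary with $p_h=x_h^{\theta_h}$ --- a contradiction. Therefore every tuple over the padded array contains a complementary pair, and via the inclusion of Cartesian products so does every tuple over the original $D_i^{\theta_i-}$; Definition~\ref{def3} then yields the theorem. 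Once the staircase-into-triangle inclusion is in place, the walk may instead be replaced by a direct appeal to Theorem~3.1.1 of \cite{xu2018contradiction}, i.e. the propositional statement proved earlier, read with $x_h$ standing for $x_h^{\theta_h}$.

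The main obstacle is bookkeeping rather than genuine difficulty: one has to verify that the three irregular rows of Definition~\ref{Def5.1} --- rows $1$ and $2$, and the terminal row $D_m$ which carries no forward extension literal --- really do embed into the uniform triangular pattern, that the optional sets $D^j$ never reach past index $j-2$, and that padding adds no spurious atom. Because the substitutions $\theta_i$ are already baked in, the first-order layer contributes nothing beyond renaming, so the argument is in the end the propositional one of Theorem~3.1.1 of \cite{xu2018contradiction} with each $x_h$ replaced by $x_h^{\theta_h}$.
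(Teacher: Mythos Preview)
Your proposal is correct and follows essentially the same approach as the paper: pad each $D_i^{\theta_i-}$ to a full lower-triangular row $D_i^{-*}$, observe that the padded system is a standard contradiction, and conclude via the Cartesian-product inclusion. The paper's proof is terser---it simply asserts that the padded array is a standard contradiction---whereas you supply the inductive walk down the rows and spell out the edge-case bookkeeping; but the architecture is identical.
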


\begin{remark}\label{rmk5.2} 
\begin{enumerate}
\item[(1)] In the contradictions separation based on standard extension in first-order logic, if the clauses participating in the contradictions separation are $C_1$, $C_2$, $\cdots$, $C_k$, the contradiction separation clause is $R_s (C_1^{\theta_1}, C_2^{\theta_2},\cdots, C_k^{\theta_k})$, where $\theta_i (1 \leq i \leq k)$ is a substitution of $C_i$. Many factors can affect the result of $R_s (C_1^{\theta_1}, C_2^{\theta_2},\cdots, C_k^{\theta_k})$ such as the order of the extended clauses $C_1^{\theta_1}$, $C_2^{\theta_2}$, $\cdots$, $C_k^{\theta_k}$, extended literals in every extended clauses, the substitution $\theta_i$, and so on. If one of these factors is different, the result of contradiction separation may be different, and may not even constitute a contradiction separation based on standard extension.
\item[(2)] According to the contradiction separation construction method based on standard extension in first-order logic, the associative law does not hold. A counterexample is shown as follows. 
\end{enumerate}
\end{remark}

\begin{example}\label{Exa5.1.2} 
Let $S = \{C_1, C_2, C_3, C_4, C_5\}$ be a clauses set in first-order logic, $C_1$ = $\neg$$P(x_{11}, x_{12}) \vee P(x_{12}, x_{11}) $, $C_2$ = $\neg$$P(x_{21}, x_{22}) \vee  \neg$$P(x_{22}, x_{23}) \vee P(x_{21}, x_{23})$, $C_3$ = $P(x_{31},f(x_{31}))$, $C_4$ = $\neg$$P(f(a_1), a_2)$, $C_5$ = $P(a_1, a_2) $, where $P$ is a predicate symbol, $x_{11}, x_{12}, x_{21}, x_{22}, x_{23}, x_{31}, x_{41}$ is the variable symbols, $f$ is a function symbol, $a_1, a_2$ are constant symbols. According to Example \ref{Exa4.1},  there exists a contradiction separation of $C_3, C_1, C_2, C_4$, that is, $R_s(C_3^{\sigma_1}, C_1^{\sigma_1}, C_2^{\sigma_1}, C_4^{\sigma_1}) =  \neg$$P(a_1, a_2) $, where $\sigma= \{a_1 / x_{11}, f(a_1)/ x_{12}, f(a_1)/ x_{21}, a_1 / x_{22}, a_2 / x_{23}, a_1 / x_{31}\}$. Hence $R_s(R_s(C_3^{\sigma_1}, C_1^{\sigma_1}, C_2^{\sigma_1}$, $C_4^{\sigma_1}), C_5)$ = $\emptyset $. Since $C_4$ and $C_5$ are unit clauses, $R_s(C_1, C_2, C_4, C_5)$ does not exist. Therefore, $R_s(R_s(C_3^{\sigma_1}, C_1^{\sigma_1}, C_2^{\sigma_1}, C_4^{\sigma_1}), C_5) \neq R_s(C_3^{\sigma_1}, R_s(C_1^{\sigma_1}, C_2^{\sigma_1}, C_4^{\sigma_1}), C_5)$.  
\end{example}

\begin{definition}\label{Def5.2}
\textbf{(Refutation Sequence under the Standard Extension in First-Order Logic)}  
Let $S = \{C_1, C_2, \ldots, C_n\}$ be a set of clauses in first-order logic, and let  
$w = \{\phi_1, \phi_2, \ldots, \phi_t\}$ denote a deduction sequence of contradiction separation based on the Standard Extension (SE) from $S$ to $\phi_t$.  
Each formula $\phi_i$ $(i = 1, 2, \ldots, t)$ satisfies one of the following conditions:
\begin{enumerate}
    \item[(1)] $\phi_i \in S$, or
    \item[(2)] there exist indices $r_1, r_2, \ldots, r_{k_i} < i$ and a substitution $\theta_i$ such that  
    \[
    \phi_i = R_s(\phi_{r_1}^{\theta_i}, \phi_{r_2}^{\theta_i}, \ldots, \phi_{r_{k_i}}^{\theta_i}),
    \]
    where $R_s$ denotes the inference rule of contradiction separation under the Standard Extension.
\end{enumerate}
If $\phi_t = \emptyset$, then $w$ is called a \textit{refutation sequence of contradiction separation based on the Standard Extension} for $S$.
\end{definition}

\begin{example}\label{Exa5.2} 

Let $S = \{C_1, C_2, C_3, C_4, C_5, C_6, C_7\}$ be a clauses set in first-order logic, $C_1$ = $\neg$$P_1 (x_1) \vee P_2 (x_1) \vee P_3 (x_1,f_1(x_1))$, $C_2$ = $\neg$$P_1 (x_2) \vee P_2 (x_2) \vee P_4 (f_1 (x_2))$, $C_3$ = $P_5 (a)$, $C_4$ = $P_1 (a)$, $C_5$ = $\neg$$P_3 (a, x_3) \vee P_5 (x_3)$, $C_6$ = $\neg$$P_5 (x_4) \vee  \neg$$P_2 (x_4) $, $C_7$ = $\neg$$P_5 (x_5) \vee$  $\neg$$P_4 (x_5)$, where $P_1, P_2, \cdots, P_5$ are predicate symbols, $x_1, x_2,\cdots, x_5$ is variable symbols, $a$ is a constant symbol, and $f_1$ is a function symbol. Using the method of contradictions separation based on standard extension, the following deduction sequence can be obtained.

The first contradiction separation based on standard extension can be constructed as follows.

\begin{tabular}{lccccc}\\\hline
 $C_6 ^{\sigma_1}$ &$C_2^{\sigma_1}$ &$C_7 ^{\sigma_1}$ & $C_5 ^{\sigma_1}$&$C_1 ^{\sigma_1}$ &$C_4 ^{\sigma_1}$\\ \hline
  $ \neg$$P_2 (a)$& $P_2 (a)$& & & &\\
  & $P_4 (f_1 (a))$& $ \neg$$P_4 (f_1 (a))$& &&\\
  $ \neg$$P_5 (a)$& & $ \neg$$P_5 (f_1 (a))$& $P_5 (f_1 (a))$&$P_2 (a)$&\\
  & & & $ \neg$$P_3 (a, f_1 (a))$&$P_3 (a,f_1 (a))$&\\
  & $ \neg$$P_1 (a)$& & &$ \neg$$P_1 (a)$&$P_1 (a)$ \\\hline\\
\end{tabular}

Therefore, the contradiction separation clause can be obtained, that is, $C_8$ = $R_s (C_6^{\sigma_1}, C_2^{\sigma_1}, C_7 ^{\sigma_1}, C_5^{\sigma_1}, C_1^{\sigma_1}, C_4^{\sigma_1})$ = $P_2 (a) $, where $\sigma_1 $ = $\{a / x_1$, $a / x_2$, $f_1 (a)/ x_3$, $a / x_4$, $f_1 (a)/ x_5\}$.

The second contradiction separation based on standard extension can be constructed as follows.

\begin{tabular}{lcc}\\\hline
  $C_3^{\sigma_2}$& $C_6^{\sigma_2}$& $C_8^{\sigma_2}$\\
  \hline
  $P_5 (a)$ & $ \neg$$P_5 (a)$ & \\
  & $ \neg$$P_2 (a)$ & $P_2 (a)$\\\hline\\
\end{tabular}

Therefore, the contradiction separation clause can be obtained, that is,  $C_9 = R_s (C_3^{\sigma_2}, C_6^{\sigma_2}, C_8^{\sigma_2}) =\emptyset$, where $\sigma_2=\{a / x_4\}$. Therefore, a refutation of contradiction separation based on standard extension of $S$ is obtained, i.e., $C_1$, $C_2$, $C_3$, $C_4$, $C_5$, $C_6$, $C_7$, $C_8=P_2 (a)$, $C_9=\emptyset$.
\end{example}

\begin{theorem}\label{thm5.2} 
Suppose $S=\{C_1, C_2,\cdots, C_k\}$  is a clauses set in first-order logic, $R_s (C_1^{\sigma_1}, C_2^{\sigma_2},\cdots, C_k^{\sigma_k})$ is a contradiction separation based on standard extension of $S$, then there exists a linear resolution deduction of $S$ with $C_k$ as the top clause and $C_{k- 1},\cdots, C_1$ as the side clauses, such that the linear resolvent is $R_{k,k-1, k-2,\cdots,2, 1} (\cdots$ $R_{k,k-1, k-2} (R_{k,k-1}(C_k ^{\sigma_k}, C_{k- 1}^{\sigma_{k- 1}}), C_{k- 2}^{\sigma_{k- 2}}) \cdots, C_1^{\sigma_1})$ = $R_s (C_1 ^{\sigma_1}, C_2^{\sigma_2},\cdots, C_k^{\sigma_k})$, where $\sigma_i (1 \leq i \leq k)$ is a substitution of $C_i$.
\begin{proof}
The theorem follows similar to Theorem \ref{thm4.6} (propositional case), adapting unification and
 substitutions throughout.	
\end{proof} 
\end{theorem}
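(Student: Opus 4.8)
The plan is to prove Theorem~\ref{thm5.2} by induction on the number $k$ of clauses participating in the standard extension, mirroring the structure of the propositional argument of Theorem~\ref{thm4.6} while dragging the accompanying substitutions along. The first move is to reduce the first-order setting to an essentially propositional one. Since $C_1,\ldots,C_k$ share no variables (the standing renaming convention), the union $\sigma=\bigcup_{i=1}^{k}\sigma_i$ is a well-defined substitution, and once one passes to the instantiated clauses $C_1^{\sigma_1},\ldots,C_k^{\sigma_k}$, every complementary pair of extended literals produced by the construction is already syntactically complementary. Hence each resolution step along the linear chain can be taken with the trivial unifier, the linear resolvent of the instantiated clauses is computed exactly as in propositional logic, and by Theorem~\ref{thm5.1} the separated negative parts $D_i^{\sigma_i-}$ play the same role as in Definition~\ref{Def4.1}. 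The remaining task is then purely a clause-set equality between the ``peeled'' form of $R_s$ and an iterated binary resolvent.

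For the base case $k=2$, Definition~\ref{Def5.1} gives $D_1^{\sigma_1-}=x_1^{\sigma_1}$ and $D_2^{\sigma_2-}=\neg x_1^{\sigma_1}$, so $R_s(C_1^{\sigma_1},C_2^{\sigma_2})=(C_1^{\sigma_1}-\{x_1^{\sigma_1}\})\vee(C_2^{\sigma_2}-\{\neg x_1^{\sigma_1}\})$, which is precisely the binary resolvent $R_{2,1}(C_2^{\sigma_2},C_1^{\sigma_1})$; this is the content of Remark~\ref{rmk5.1}(2). For the inductive step, assume the claim for $k=n$ and take a standard extension with participating clauses $D_1,\ldots,D_{n+1}$, extended literals $x_1,\ldots,x_n$, and the usual secondary-boundary literals. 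Rewriting the extension clause in peeled form, exactly as in the proof of Theorem~\ref{thm4.6} but with each $D_i$ replaced by $D_i^{\sigma_i}$ and each $x_h$ by its instance,
\[
R_s(D_1^{\sigma_1},\ldots,D_{n+1}^{\sigma_{n+1}})=\bigl(D_1^{\sigma_1}-\{x_1^{\sigma_1}\}\bigr)\vee\Bigl(D_{n+1}^{\sigma_{n+1}}-\bigvee_{h=1}^{n}\{\neg x_h^{\sigma_h}\}\Bigr)\vee\bigvee_{j=2}^{n}\Bigl(D_j^{\sigma_j}-\{x_j^{\sigma_j}\}-\bigvee_{h=1}^{j-1}\{\neg x_h^{\sigma_h}\}\Bigr),
\]
one observes that $D_2,\ldots,D_{n+1}$ themselves realise a standard extension (with extended literals $x_2,\ldots,x_n$), so the induction hypothesis supplies a linear derivation with top clause $C_{n+1}$ and side clauses $C_n,\ldots,C_2$ whose resolvent equals $R_s(D_2^{\sigma_2},\ldots,D_{n+1}^{\sigma_{n+1}})$. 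It remains to perform one further resolution step against $C_1=D_1$: the clause $D_1$ contributes only the single negative-part literal $x_1$, while the tail resolvent contains $\neg x_1^{\sigma_1}$, which entered through $D_2$ and was never removed, because every literal resolved upon in the tail derivation is some $x_j^{\sigma_j}$ with $j\ge 2$ whose atom differs from that of $x_1$ (distinctness of main-boundary literals). Resolving on the pair $x_1^{\sigma_1},\neg x_1^{\sigma_1}$ deletes $x_1$ from $D_1$ and all copies of $\neg x_1^{\sigma_1}$ from the tail resolvent (copies introduced by the various $D^j$ merge, clauses being sets of literals), and a direct comparison with the displayed peeled form for $k=n+1$ closes the induction.

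The main obstacle is the substitution bookkeeping in the inductive step, namely the right to apply the induction hypothesis to the tail $\{D_2,\ldots,D_{n+1}\}$: in the full derivation the substitution attached to $D_i$ accumulates factors arising from later extension steps, whereas the sub-derivation on $D_2,\ldots,D_{n+1}$ nominally uses shorter substitutions. The clean way around this is the reduction described at the outset---once one works with the already-instantiated clauses $C_i^{\sigma_i}$, every extension step is a trivial unification and the argument becomes literally the propositional one of Theorem~\ref{thm4.6}, with substitutions entering only through the final identification $\sigma=\bigcup_{i=1}^{k}\sigma_i$ and the observation that this union is conflict-free because the $C_i$ are variable-disjoint. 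One should also dispatch the routine edge cases---$D_i^{+}$ and $D_i^{-}$ sharing no literal, an empty $D_i^{+}$, and possibly repeated participating clauses (Remark~\ref{rmk5.1}(1))---none of which disturbs the set-level clause equalities; and note, via Remark~\ref{rmk4.6}, that the converse direction genuinely fails, so no stronger equivalence should be attempted.
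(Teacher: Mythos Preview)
Your proposal is correct and follows essentially the same approach as the paper, which simply defers to the propositional argument of Theorem~\ref{thm4.6} with the remark that unification and substitutions must be carried along; your write-up makes explicit precisely the reduction step the paper leaves implicit, namely that after passing to the instantiated clauses $C_i^{\sigma_i}$ all complementary pairs are already syntactic and the induction becomes verbatim that of Theorem~\ref{thm4.6}.
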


\begin{remark}\label{rmk5.4}
The inverse proposition of Theorem \ref{thm5.2} does not necessarily hold, that is, if $R_{k,k-1, k-2,\cdots,2, 1} (\cdots$ $R_{k,k-1, k-2} (R_{k,k-1}(C_k ^{\sigma_k}, C_{k- 1}^{\sigma_{k- 1}}), C_{k- 2}^{\sigma_{k- 2}}) \cdots, C_1^{\sigma_1})$ is a linear resolvent of the clauses set $S = \{C_1, C_2,\cdots, C_k\} $, but there does not necessarily exist a standard  contradiction separation based on standard extension of $C_1^{\sigma_1}, C_2^{\sigma_2},\cdots, C_k^{\sigma_k}$, where $\sigma_i (1 \leq i \leq k)$ is a substitution of $C_i$. A counterexample is shown as follows.
\end{remark}

\begin{example}\label{Exa4.3}
Let $S = C_1 \wedge C_2  \wedge C_3 \wedge C_4$ be a clauses set in first-order logic, $C_1$ = $P_1 (x_1) \vee P_2 (x_2) \vee P_3 (x_3) \vee P_4 (x_4)$, $C_2$ = $\neg$$P_1 (a_1)$, $C_3$ = $\neg$$P_2 (a_2)$, $C_4$ = $\neg$$P_3 (a_3) $, where $P_1, P_2, P_3, P_4$ are predicate symbols, $x_1, x_2, x_3, x_4$ are variable symbols, $a_1, a_2, a_3$ are constant symbols. Then there exist a linear resolution deduction $R_{1,2,3,4}(R_{1,2,3} (R_{1,2}$ $(C_1^{\sigma}, C_2^{\sigma})$, $C_3^{\sigma}), C_4^{\sigma})$ = $P_4 (x_4) $, where $\sigma =\{a_1 / x_1, a_2 / x_2, a_3 / x_3\} $.

However, there does not exist a deduction of contradiction separation based on standard extension of $C_4, C_3, C_2, C_1$, such that $R_s (C_4, C_3, C_2, C_1)$ = $P_4(x_4)$.
\end{example}

\subsection{\label{sec:level5.2}Soundness and Completeness} 
In first-order logic, since the deduction of contradiction separation based on standard extension is a special case of the deduction of contradiction separation, and contradiction separation method is sound, hence the soundness of the deduction of contradiction separation based on standard extension holds accordingly, that is, Theorem \ref{thm5.3} holds.

\begin{theorem}\label{thm5.3}
(\textbf{Soundness}) Suppose $S =\{C_1,\cdots, C_n\}$ is a clauses set in first-order logic, $\phi_ 1, \phi_ 2,\cdots, \phi_ t$ is a deduction of contradiction separation based on standard extension from $S$ to $\phi_ t$. If $\phi_t = \emptyset$, then $S$ is unsatisfiable.
\end{theorem}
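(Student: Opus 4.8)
The plan is to deduce the statement from the soundness of the general S-CS based dynamic deduction in first-order logic proved in \cite{xu2018contradiction} (the Soundness Theorem of the S-CS Based Dynamic Deduction, restated in Section~\ref{sec:level3}), by showing that every deduction of contradiction separation based on the Standard Extension is, step by step, an S-CS based dynamic deduction in the sense of Definition~\ref{Def3.7}. Given a sequence $\phi_1,\phi_2,\ldots,\phi_t$ as in Definition~\ref{Def5.2}, the only non-trivial steps are those of the form $\phi_i = R_s(\phi_{r_1}^{\theta_i},\ldots,\phi_{r_{k_i}}^{\theta_i})$, and for each such step I would verify that the decomposition of every participating clause into its negative part $D_j^{\theta_j-}$ and positive part $D_j^{\theta_j+}$, as specified in Definition~\ref{Def5.1} and pictured in Figure~\ref{fig2}, satisfies conditions (i)--(iii) of the S-CS rule in Definition~\ref{Def3.4}. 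Once this is done, the whole sequence is a legitimate S-CS based dynamic deduction from $S$, and the general soundness theorem yields that $S$ is unsatisfiable as soon as $\phi_t = \emptyset$.

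First I would align the substitution bookkeeping. In the Standard Extension each extended clause $D_j$ carries an accumulated substitution $\theta_j$ obtained by composing the unifier that brought $D_j$ into the extension with the unifiers of all later extension steps, whereas Definition~\ref{Def3.4} attaches to each clause a single substitution $\sigma_j$; I would set $\sigma_j := \theta_j$ and note this is a well-defined substitution of $D_j$ because the clauses of $S$ are renamed apart, so the composed substitutions act on disjoint variable sets. Conditions (i) and (ii) of Definition~\ref{Def3.4} then follow directly from Definition~\ref{Def5.1}: clause~(1) gives $D_j^{\theta_j} = D_j^{\theta_j-}\vee D_j^{\theta_j+}$ with the two parts sharing no literal (the positive part being exactly what remains after cutting along the boundary), and clause~(2) shows each negative part is non-empty --- $D_1^{\theta_1-}=x_1^{\theta_1}$, and for $2\le j\le m$ the negative part always contains the extended literal $x_j^{\theta_j}$ or the complement $\neg x_{j-1}^{\theta_{j-1}}$ of the previous one --- while the positive part is allowed to be empty.

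The essential condition is (iii): $\bigwedge_{j=1}^{m} D_j^{\theta_j-}$ must be a standard contradiction, i.e.\ every tuple in the Cartesian product of the negative parts contains a complementary pair. This is precisely Theorem~\ref{thm5.1}, whose proof pads the vacant rows of the extension diagram with the missing complementary literals to obtain a full standard contradiction and then observes that the actual family of negative parts is a Cartesian subset of it, and that a Cartesian subset of a standard contradiction is again one. Invoking Theorem~\ref{thm5.1} discharges (iii), so each inference step is a genuine application of the S-CS rule, the sequence $\phi_1,\ldots,\phi_t$ is an S-CS based dynamic deduction from $S$, and since $\phi_t=\emptyset$ the general soundness theorem gives that $S$ is unsatisfiable.

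I expect the main obstacle to be exactly this translation between the two formalisms: making fully precise that the accumulated substitutions $\theta_j$ of the Standard Extension play the role of the per-clause substitutions $\sigma_j$ in Definition~\ref{Def3.4} without inducing variable clashes or altering which literals become syntactically identical after instantiation, so that one extension diagram corresponds to a single, well-formed application of the S-CS rule. A self-contained alternative, if one prefers to bypass the reduction, is a direct semantic induction on $t$: each inferred clause is a logical consequence of its parents, because any model of all parents must satisfy a literal in the positive part of some parent --- otherwise, choosing from each parent a satisfied literal of its negative part would produce a tuple in the Cartesian product of the negative parts all of whose literals hold in the model, contradicting the standard-contradiction property guaranteed by Theorem~\ref{thm5.1} (the residual free variables being handled by a grounding/Herbrand argument exactly as in \cite{xu2018contradiction}). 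Either route concludes that a Standard-Extension deduction ending in $\emptyset$ witnesses the unsatisfiability of $S$.
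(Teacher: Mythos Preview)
Your proposal is correct and follows exactly the paper's approach: the paper simply observes (in the paragraph preceding Theorem~\ref{thm5.3}) that the Standard Extension deduction is a special case of the general S-CS based deduction of \cite{xu2018contradiction}, whose soundness then carries over. You have spelled out in detail what the paper leaves implicit---namely, that Definition~\ref{Def5.1} verifies conditions (i)--(ii) of Definition~\ref{Def3.4} and that Theorem~\ref{thm5.1} supplies condition~(iii)---but the underlying argument is the same reduction.
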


\begin{theorem}\label{thm5.4} 
(\textbf{Completeness}) Suppose $S =\{C_1,\cdots, C_n\}$ is a clauses set in first-order logic. If $S$ is unsatisfiable, then there exists a refutation of contradiction separation based on standard extension of $S$.
\begin{proof}
In first-order logic, on the one hand, the binary resolution deduction is a special case of the deduction of contradiction separation based on standard extension. Since the binary resolution deduction is complete, then the completeness of the contradiction separation based on standard extension holds accordingly. On the other hand, there exist more than two clauses in the dynamic deduction process. Therefore, the completeness of deduction of contradiction separation based on standard extension holds.
\end{proof}
\end{theorem}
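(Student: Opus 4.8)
The plan is to derive first-order completeness from the already-established propositional completeness (Theorem~\ref{thm3.3}) by a Herbrand-style ground reduction followed by a lifting argument, in the same spirit as the classical passage from ground resolution to first-order resolution, but carried out for the standard-extension rule of Definition~\ref{Def3.4}.

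First I would ground the problem: since $S$ is unsatisfiable, Herbrand's theorem yields a finite set $S_g$ of ground instances of clauses of $S$ that is propositionally unsatisfiable. Deleting tautologies from $S_g$ and applying the pure-literal rule leaves a still-unsatisfiable ground clause set $S_g^{\ast}$, all of whose members remain ground instances of clauses of $S$ and which satisfies the hypotheses of Theorem~\ref{thm3.3}. That theorem then supplies a refutation $w_g=\{\psi_1,\dots,\psi_s=\emptyset\}$ of $S_g^{\ast}$ in which each non-input $\psi_j$ comes from earlier ground clauses by a single application of the propositional standard-extension rule $R_s$ (Definition~\ref{Def4.1}).

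Second --- the technical core --- I would prove a lifting lemma for the standard-extension rule: if ground clauses $E_1,\dots,E_k$, with $E_i$ a ground instance of a general clause $D_i$ (the $D_i$ renamed apart), admit a standard-extension contradiction along a ground boundary line $x_1',\dots,x_{k-1}'$ with ground contradiction-separation clause $R'=\bigvee_i E_i^{+}$, then there exist substitutions $\sigma_1,\dots,\sigma_k$ for which $D_1^{\sigma_1},\dots,D_k^{\sigma_k}$ admit a standard-extension contradiction in the sense of Definition~\ref{Def5.1} and the resulting clause $R_s(D_1^{\sigma_1},\dots,D_k^{\sigma_k})$ has $R'$ as a ground instance. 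The $\sigma_i$ are produced by replaying the ground extension schedule one step at a time: at the step where the ground derivation identifies an extended literal with the complement of a literal in the next clause, I take the most general unifier of the two general templates instead --- it exists because their ground instances are genuinely complementary --- and then propagate that substitution back through the already-extended clauses while merging equal literals, exactly the construction described just before Definition~\ref{Def5.1}. Composing the lifted steps along all of $w_g$ yields an S-CS based dynamic deduction (Definition~\ref{Def5.2}) from $S$ whose last clause has the empty clause as a ground instance and is therefore itself empty; this is the required refutation.

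The step I expect to be the main obstacle is showing that lifting preserves the \emph{standard}-contradiction shape rather than merely unsatisfiability of the negative parts. Concretely, one must check that the rigid bookkeeping of Definition~\ref{Def5.1} survives the unifications --- that $D_1^{\sigma_1-}=x_1^{\sigma_1}$, that each $D_i^{\sigma_i-}$ still has the form $(x_i^{\sigma_i}\vee\neg x_{i-1}^{\sigma_{i-1}})\vee\bigvee_{x\in D^i}x$ with $D^i\subseteq\bigcup_{h\le i-2}\{\neg x_h^{\sigma_h}\}$, and that every ground complementary pair lying on a single horizontal line comes from a complementary pair of general literals on that same line. I would establish this by induction on $k$, mirroring the argument already used for Theorem~\ref{thm5.1} (which shows the constructed negative parts form a standard contradiction), together with the standard fact that a most general unifier is more general than any ground unifier, so that the property that $R'$ is a ground instance of $R_s(D_1^{\sigma_1},\dots,D_k^{\sigma_k})$ propagates through the composition of lifted steps. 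As a lighter-weight alternative one may observe that binary resolution together with the literal merging the rule already performs is a special case of the standard-extension rule (Remark~\ref{rmk5.1}(2) and the discussion preceding Definition~\ref{Def5.1}), so refutation completeness of that classical calculus already forces completeness here; the Herbrand-plus-lifting route is preferable only because it exhibits the refutation as a genuine multi-clause standard-extension deduction rather than a degenerate chain of binary steps.
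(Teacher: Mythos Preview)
Your proposal is correct, but the paper takes the much shorter route you flag at the end as the ``lighter-weight alternative'': it simply observes that binary resolution (with the literal merging already built into the first-order rule, cf.\ Remark~\ref{rmk5.1}(2)) is a special case of contradiction separation based on standard extension, and since binary resolution is refutation complete for first-order logic, so is the standard-extension calculus. That is the entire argument in the paper --- three sentences, no Herbrand reduction, no lifting lemma.

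Your main route via Herbrand's theorem plus a dedicated lifting lemma for the standard-extension rule is a genuinely different and more ambitious approach. What it buys you is exactly what you say: a refutation that is a bona fide multi-clause standard-extension derivation lifted from the ground one produced by Theorem~\ref{thm3.3}, rather than a chain of degenerate two-clause steps. The cost is the technical work you correctly identify as the main obstacle --- verifying that the rigid boundary structure of Definition~\ref{Def5.1} survives unification and back-propagation of substitutions, and that merging literals after each unifier does not break the $D_i^{\sigma_i-}$ shape. This is doable along the lines you sketch (induction on $k$, mgu more general than the ground unifier, literal merging already sanctioned by the rule), but it is real work that the paper completely sidesteps. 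The paper's argument is cleaner for establishing bare completeness; yours would be the right one if the goal were to show that \emph{every} ground standard-extension refutation lifts, which is a strictly stronger statement.
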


\begin{example}\label{Exa5.4}
Let $S = \{C_1, C_2, C_3, C_4, C_5\}$ be a clauses set in first-order logic, $C_1 = P(a_1)$, $C_2$ = $\neg$$D(x_{21}) \vee L (a_1, x_{22})$, $C_3$ = $\neg$$P(x_{31})$ $\vee \neg$$Q(x_{32})$ $\vee \neg$$L(x_{31}, x_{32})$, $C_4$ = $D(a_2)$, $C_5$ = $Q(a_2)$, where $P, D, L, Q$ are predicate symbols, $x_{21}, x_{22}, x_{31}, x_{32}$ are variable symbols, and $a_1, a_2$ are constant symbols. It is easy to verify $S$ is unsatisfiable, and a refutation $w$ of contradiction separation based on standard extension of $S$ can be obtained as follows.

\begin{enumerate}
\item[(1)] $C_1$ = $P(a_1)$
\item[(2)] $C_2$ = $\neg$$D(x_{21}) \vee  L(a_1, x_{22})$

\item[(3)] $C_3$ = $\neg$$P(x_{31}) \vee \neg$$Q(x_{32}) \vee \neg$$L(x_{31}, x_{32})$

\item[(4)] $C_4$ = $D(a_2)$

\item[(5)] $C_5$ = $Q(a_2)$
 
 ------------
\item[(6)] $C_6$ = $R_s (C_1^{\theta_1}, C_3^{\theta_1}, C_5 ^{\theta_1})$ =  $\neg$$L(a_1, a_2) $, where $\theta_1=\{a_1 / x_{31}, a_2 / x_{32}\}$

\item[(7)] $C_7 = R_s (C_6^{\theta_2}, C_2^{\theta_2}, C_4 ^{\theta_2})= \emptyset$, where $\theta_2=\{a_2 / x_{21}, a_2 / x_{22}\}$
\end{enumerate}

Therefore, $w=\{C_1, C_2, C_3, C_4, C_5, C_6, C_7\}$ is a refutation of contradiction separation based on standard extension of $S$.
\end{example}

The above example illustrates the correctness and practical realizability of the deduction process under the Standard Extension in first-order logic.  
Together, Theorems~\ref{thm5.3} and~\ref{thm5.4} establish that the contradiction separation method based on the Standard Extension is both sound and complete, ensuring that every unsatisfiable clause set can be refuted, while no satisfiable set produces a false contradiction.  
These properties provide the theoretical foundation for the unified algorithm introduced in the following subsection, enabling efficient automated reasoning through systematic construction and separation of contradictions.

\subsection{\label{sec:level5.3}A Unified Dynamic Deduction Method Based on the Standard Extension in First-Order Logic}

This section introduces a unified dynamic deduction framework for contradiction separation under the Standard Extension in first-order logic.  
The method provides a systematic procedure for determining whether a formula is satisfiable or unsatisfiable.  
In particular, when a formula is satisfiable, the approach can also construct a corresponding satisfying example.

The unified deduction procedure based on the Standard Extension in first-order logic is outlined as follows.

\begin{enumerate}
\item[Step 0.] (Preprocessing)
\begin{enumerate}
\item[(1)] Let $S=C_1 \wedge C_2 \wedge \cdots \wedge C_n$ be a clauses set in first-order logic. Rename variable symbols of the clauses if there exist common variable symbols between all the clauses in $S$, and get a new clauses set denoted as $S_0$.
\item[(2)] Delete the subsumed clauses and tautologies in $S_0$, and get a new clauses set denoted as $S_1$.
\end{enumerate}
\item[Step 1.] Select a clause in $S_1$ denoted as $D_1$, and select a literal in $D_1$ denoted as $x_1$. According to $x_1$, select a clause $D_2(x_1)$ in $S_1$, and $D_2(x_1)$ satisfies: there exist the literal $y(x_1) \in D_2(x_1)$, and substitutions $\theta(x_1)$ and $\theta(y(x_1))$ such that $x_1^{\theta(x_1)}$ and $y(x_1)^{\theta(y(x_1))}$ is a complementary pair, i.e., $x_1^{\theta(x_1)}$ = $\neg$$y(x_1)^{\theta(y(x_1))}$. 
\begin{enumerate}
\item[(1)] Substitute $D_1$ with $\theta(x_1)$, and merge the same literals to obtain a new clause $D_1^{\theta(x_1)}$.
\item[(2)]  Substitute $D_2(x_1)$ with $\theta(y(x_1))$, and merge the same literals to obtain a new clause $D_2(x_1)^{\theta(y(x_1))}$.
\item[(3)] Divide the clause $D_1^{\theta(x_1)}$ into two parts: $D_1^{\theta(x_1)-}$ and $D_1^{\theta(x_1)+}$, where $D_1^{\theta(x_1)-}$ = $x_1^{\theta(x_1)}$, $D_1^{\theta(x_1)+}$ = $D_1^{\theta(x_1)}-x_1^{\theta(x_1)-}$, which means that $D_1^{\theta(x_1)+}$ is gotten by deleting $D_1^{\theta(x_1)-}$ from $D_1^{\theta(x_1)}$.
\end{enumerate}

In this case, the extended literal is $x_1^{\theta(x_1)}$, and the extended clauses are $D_1^{\theta(x_1)}$, $D_2(x_1)^{\theta(y(x_1))}$.
\item[Step 2.] Select a literal in $D_2(x_1)^{\theta(y(x_1))}-y(x_1)^{\theta(y(x_1))}$ denoted as $x_2$. According to $x_2$, select a clause $D_3(x_2)$ in $S_1$, and $D_3(x_2)$ satisfies: there exist the literal $y(x_2) \in D_3(x_2)$, and substitutions $\theta(x_2)$ and $\theta(y(x_2))$ such that $x_2^{\theta(x_2)}$ and $y(x_2)^{\theta(y(x_2))}$ is a complementary pair, i.e., $x_2^{\theta(x_2)}$ = $\neg$$y(x_2)^{\theta(y(x_2))}$. 
\begin{enumerate}
\item[(1)]  Substitute $D_1^{\theta(x_1)}, D_2(x_1)^{\theta(y(x_1))}$ with $\theta(x_2)$, and merge the same literals to obtain the new clauses $D_1^{\theta(x_1)\theta(x_2)}$, $D_2(x_1)^{\theta(y(x_1))\theta(x_2)}$.
\item[(2)] Substitute $D_3 (x_2)$ with $\theta(y(x_2))$, and merge the same literals to obtain a new clause $D_3 (x_2)^{\theta(y(x_2))}$.
\item[(3)] Divide the clause $D_2(x_1)^{\theta(y(x_1))\theta(x_2)}$ into two parts: $D_2(x_1)^{\theta(y(x_1))\theta(x_2)-}$ and $D_2(x_1)^{\theta(y(x_1))\theta(x_2)+}$, where $D_2(x_1)^{\theta(y(x_1))\theta(x_2)-}= x_2^{\theta(x_2)} \vee y(x_1)^{\theta(y(x_1))\theta(x_2)}$. 

$D_2(x_1)^{\theta(y(x_1))\theta(x_2)+}=D_2(x_1)^{\theta(y(x_1))\theta(x_2)}-D_2(x_1)^{\theta(y(x_1))\theta(x_2)-}$, which means that $D_2(x_1)^{\theta(y(x_1))\theta(x_2)+}$ is gotten by deleting $D_2(x_1)^{\theta(y(x_1))\theta(x_2)-}$ from $D_2(x_1)^{\theta(y(x_1))\theta(x_2)}$.
\end{enumerate}

In this case, the extended literals are $x_1^{\theta(x_1)\theta(x_2)}$ and $x_2^{\theta(x_2)}$, and the extended clauses are $D_1^{\theta(x_1)\theta(x_2)}$, $D_2(x_1)^{\theta(y(x_1))\theta(x_2)}$ and $D_3 (x_2)^{\theta(y(x_2))}$.

\item[$\cdots$]  $ $

\item[Step $i(i \geq 3) $] Select a literal in $D_i(x_{i-1})^{\theta(y(x_{i-1}))}-\bigvee_{j=1}^{i-2}y(x_j)^{\theta(y(x_j))\theta(x_{j+1})\cdots\theta(x_{i-1})}-y(x_{i-1})^{\theta(y(x_{i-1}))}$ denoted as $x_i$. According to $x_i$, select a clause $D_{i+1}(x_i)$ in $S_1$, and $D_{i+1}(x_i)$ satisfies: there exist the literal $y(x_i) \in  D_{i + 1 }(x_i) $, and substitutions $\theta (x_i)$ and $\theta (y(x_i)) $ such that $x_i^{\theta(x_i)}$  and $y(x_i)^{\theta(y(x_i))}$ is a complementary pair, i.e., $x_i^{\theta(x_i)}$ = $\neg$$y(x_i)^{\theta(y(x_i))}$. 
\begin{enumerate}
\item[(1)]  Substitute $D_1^{\theta(x_1)\theta(x_2)\cdots\theta(x_{i-1})}$, $D_2(x_1)^{\theta(y(x_1))\theta(x_2)\cdots\theta(x_{i-1})}$, $\dots$, $D_i(x_{i-1})^{\theta(y(x_{i-1}))}$ with $\theta(x_{i})$, and merge the same literals to obtain the new clauses $D_1^{\theta(x_1)\theta(x_2)\cdots\theta(x_{i-1})\theta(x_{i})}$, $D_2(x_1)^{\theta(y(x_1))\theta(x_2)\cdots\theta(x_{i-1})\theta(x_{i})}$, $\dots$, $D_i(x_{i-1})^{\theta(y(x_{i-1}))\theta(x_{i})}$.
\item[(2)] Substitute $D_{i+1}(x_{i})$ with $\theta(y(x_i)) $, and merge the same literals to obtain a new clause $D_{i+1}(x_{i})^{\theta(y(x_i))}$.
\item[(3)] Divide the clause $D_{i}(x_{i-1})^{\theta(y(x_{i-1}))\theta(x_i)}$ into two parts: $D_{i}(x_{i-1})^{\theta(y(x_{i-1}))\theta(x_i)-}$ and $D_{i}(x_{i-1})^{\theta(y(x_{i-1}))\theta(x_i)+}$, where $D_{i}(x_{i-1})^{\theta(y(x_{i-1}))\theta(x_i)-}$ = $(x_i^{\theta(x_i)} \vee y(x_{i-1})^{\theta(y(x_{i-1}))\theta(x_i)})$ $\vee D_i^0$, $D_i^0$ = $\bigvee_{x\in D^i}x$, and $D^i \subseteq \{y(x_j)^{\theta(y(x_j))\theta(x_{j+1})\cdots\theta(x_{i})}|j=1,2, \cdots, i-2\}$. 

$D_{i}(x_{i-1})^{\theta(y(x_{i-1}))\theta(x_i)+}$ = $D_{i}(x_{i-1})^{\theta(y(x_{i-1}))\theta(x_i)}- D_{i}(x_{i-1})^{\theta(y(x_{i-1}))\theta(x_i)-}$, which means that $D_{i}(x_{i-1})^{\theta(y(x_{i-1}))\theta(x_i)+}$ is gotten by deleting $D_{i}(x_{i-1})^{\theta(y(x_{i-1}))\theta(x_i)-}$ from $D_{i}(x_{i-1})^{\theta(y(x_{i-1}))\theta(x_i)}$.
\end{enumerate}

In this case, the extended literals are $x_1^{\theta(x_1)\cdots\theta(x_i)}$, $x_2^{\theta(x_2)\cdots\theta(x_i)}$, $\cdots$, $x_{i-1}^{\theta(x_{i-1})\theta(x_i)}$, and $x_i^{\theta(x_i)}$, and the extended clauses are $D_1^{\theta(x_1)\theta(x_2)\cdots\theta(x_i)}$, $D_2(x_1)^{\theta(y(x_1))\theta(x_2)\cdots\theta(x_i)}$, $\cdots$, $D_{i}(x_{i-1})^{\theta(y(x_{i-1}))\theta(x_i)}$, and $D_{i+1}(x_i)^{\theta(y(x_i))}$.

\item[$\cdots$ ] $ $ 

\item[Step $k-1$.] Select a literal in $D_{k-1}(x_{k-2})^{\theta(y(x_{k-2}))}-\bigvee_{j=1}^{k-3}y(x_j)^{\theta(y(x_j))\theta(x_{j+1})\cdots\theta(x_{k-2})}-y(x_{k-2})^{\theta(y(x_{k-2}))}$  denoted as $x_{k-1}$. According to $x_{k-1}$, select a clause $D_k (x_{k-1})$ in $S_1$, and $D_k (x_{k- 1})$ satisfies: there exist the literal $y(x_{k-1}) \in D_k (x_{k-1}) $, and substitutions $\theta(x_{k-1})$ and $\theta(y(x_{k-1}))$ such that $x_{k-1}^{\theta(x_{k-1})}$  and $y(x_{k-1})^{\theta(y(x_{k-1}))}$ is a complementary pair, i.e., $x_{k-1}^{\theta(x_{k-1})}$ = $\neg$$y(x_{k-1})^{\theta(y(x_{k-1}))}$.
\begin{enumerate}
\item[(1)] Substitute $D_1^{\theta(x_1)\theta(x_2)\cdots\theta(x_{k-2})}$, $D_2(x_1)^{\theta(y(x_1))\theta(x_2)\cdots\theta(x_{k-2})}$, $\dots$, $D_{k-1}(x_{k-2})^{\theta(y(x_{k-2}))}$ with $\theta(x_{k-1})$, and merge the same literals to obtain the new clauses $D_1^{\theta(x_1)\theta(x_2)\cdots\theta(x_{k-2})\theta(x_{k-1})}$, $D_2(x_1)^{\theta(y(x_1))\theta(x_2)\cdots\theta(x_{k-2})\theta(x_{k-1})}$, $\dots$, $D_{k-1}(x_{k-2})^{\theta(y(x_{k-2}))\theta(x_{k-1})}$.
\item[(2)] Substitute $D_k (x_{k-1})$ with $\theta (y(x_{k-1}))$, and merge the same literals to obtain a new clause $D_{k}(x_{k-1})^{\theta(y(x_{k-1}))}$.
\item[(3)] Divide the clause $D_{k-1}(x_{k-2})^{\theta(y(x_{k-2}))\theta(x_{k-1})}$  into two parts: $D_{k-1}(x_{k-2})^{\theta(y(x_{k-2}))\theta(x_{k-1})-}$ and $D_{k-1}(x_{k-2})^{\theta(y(x_{k-2}))\theta(x_{k-1})+}$, where $D_{k-1}(x_{k-2})^{\theta(y(x_{k-2}))\theta(x_{k-1})-}= (x_{k-1}^{\theta(x_{k-1})} \vee y(x_{k-2})^{\theta(y(x_{k-2}))\theta(x_{k-1})})\vee D_{k-1}^0$, $D_{k-1}^0=\bigvee_{x\in D^{k-1}}x$, and $D^{k-1}\subseteq \{y(x_j)^{\theta(y(x_j))\theta(x_{j+1})\cdots\theta(x_{k-1})}$ $|j=1,2, \cdots, k-3\}$. 

$D_{k-1}(x_{k-2})^{\theta(y(x_{k-2}))\theta(x_{k-1})+}= D_{k-1}(x_{k-2})^{\theta(y(x_{k-2}))\theta(x_{k-1})}- D_{k-1}(x_{k-2})^{\theta(y(x_{k-2}))\theta(x_{k-1})-}$, which means that $D_{k-1}(x_{k-2})^{\theta(y(x_{k-2}))\theta(x_{k-1})+}$ is gotten by deleting $D_{k-1}(x_{k-2})^{\theta(y(x_{k-2}))\theta(x_{k-1})-}$ from $D_{k-1}(x_{k-2})^{\theta(y(x_{k-2}))\theta(x_{k-1})}$.
\end{enumerate}

In this case, the extended literals are $x_1^{\theta(x_1)\cdots\theta(x_{k-1})}$, $x_2^{\theta(x_2)\cdots\theta(x_{k-1})}$, $\cdots$, $x_{k-2}^{\theta(x_{k-2})\theta(x_{k-1})}$, and $x_{k-1}^{\theta(x_{k-1})}$, and the extended clauses are $D_1^{\theta(x_1)\theta(x_2)\cdots\theta(x_{k-1})}$, $D_2(x_1)^{\theta(y(x_1))\theta(x_2)\cdots\theta(x_{k-1})}$, $\cdots$, $D_{k-1}(x_{k-2})^{\theta(y(x_{k-2}))\theta(x_{k-1})}$, and $D_{k}(x_{k-1})^{\theta(y(x_{k-1}))}$.

\item[Step $k$.] If it cannot be extended or no longer needs to be extended, then stop the extension. In this case, the clause $D_{k}(x_{k-1})^{\theta(y(x_{k-1}))}$ is divided into two parts: $D_{k}(x_{k-1})^{\theta(y(x_{k-1}))-}$ and $D_{k}(x_{k-1})^{\theta(y(x_{k-1}))+}$, where $D_{k}(x_{k-1})^{\theta(y(x_{k-1}))-}$ = $(y(x_{k-1})^{\theta(x_{k-1})}$ $\vee D_{k}^0$, $D_{k}^0$ = $\bigvee_{x\in D^{k}}x$, and $D^{k}\subseteq \{y(x_j)^{\theta(y(x_j))\theta(x_{j+1})\cdots\theta(x_{k-1})}|j=1,2, \cdots, k-2\}$.

$D_{k}(x_{k-1})^{\theta(y(x_{k-1}))+}$ = $D_{k}(x_{k-1})^{\theta(y(x_{k-1}))}- D_{k}(x_{k-1})^{\theta(y(x_{k-1}))-}$, which means that\\ $D_{k}(x_{k-1})^{\theta(y(x_{k-1}))+}$ is gotten by deleting $D_{k}(x_{k-1})^{\theta(y(x_{k-1}))-}$ from $D_{k}(x_{k-1})^{\theta(y(x_{k-1}))}$.
\end{enumerate}

For simply representation, the substitution of every extended clause above is denoted as $\theta_i (1 \leq i \leq k)$, that is, $\theta_1=\theta(x_1)\theta(x_2)\cdots \theta(x_{k-2})\theta(x_{k-1})$, $\theta_2=\theta(y(x_1))\theta(x_2)\cdots \theta(x_{k-2})\theta(x_{k-1})$, $\cdots$, $\theta_{k-1}=\theta(y(x_{k-2}))\theta(x_{k-1})$, $\theta_k=\theta(y(x_{k-1}))$. Finally, the contradiction $\bigwedge_{i=1}^kD_i(x_{i-1})^{\theta_i-}$ and the corresponding contradiction separation clause $R_s(D_1^{\theta_1}, \cdots,D_k^{\theta_k}) = \bigvee_{i=1}^kD_i(x_{i-1})^{\theta_i+}$ based on standard extension are obtained respectively.

Two cases exist as follows.
\begin{enumerate}
\item[(1)] $R_s(D_1^{\theta_1}, \cdots,D_k^{\theta_k}) = \emptyset$, then stop, and $S$ is unsatisfiable.

\item[(2)] $R_s(D_1^{\theta_1}, \cdots,D_k^{\theta_k}) \neq \emptyset $, two sub-cases exist as follows.
\begin{enumerate}
\item[1)] If $k \geq n$, all the clauses in $S_1$ are involved for contradiction constructing, and there exists a literal $y \in D_{j_0}(x_{j_0-1})^{\theta_{j_0}+} (1 \leq j_0 \leq n)$, and satisfies one of the following conditions.
\begin{enumerate}
\item [-] The predicate symbol of $y$ is not the same with all the predicate symbols in $\{y (x_{k-1}), \cdots, y (x_{j_0}), x_{j_0-1}, \cdots, x_1\}$,  or

\item [-] The predicate symbol of $y$ is complementary with some the predicate symbols in $\{y (x_{k-1}), \cdots, y (x_{j_0}), x_{j_0-1}, \cdots, x_1\}$, but all the complementary literals are ground literals with the different constant terms. 
\end{enumerate}

Then stop, and $S_1$ is satisfiable, $(y(x_{k-1}),\cdots, y(x_{j_0}), y, x _{j_0-1},\cdots, x_1)$ is a satisfiable example of $S_1$. Among them, if there exist the same clauses in the extended clause, the same literal is selected in the satisfiable example.

\item[2)] Otherwise, the contradiction separation $R_s(D_1^{\theta_1}, \cdots,D_k^{\theta_k})$ is gotten, and $S = S \cup \{R_s(D_1^{\theta_1}, \cdots,D_k^{\theta_k})\} $, then go to Step 0.
\end{enumerate}
\end{enumerate}

\begin{remark}\label{rmk5.5} 
\begin{enumerate}
\item[(1)] In first-order logic, when using the contradiction separation deduction method based on standard extension to verify a logical formula, it is necessary to give the priority of the literals and clauses for further improving its efficiency such as the number of literals in a clause, the proportion of logical symbols in the literal, the number of ground items, weights, etc. According to the priority, we can sort the literals and clauses, and select the first literal as the extended literal, which is used to be extended for contradiction and contradiction separation construction. Furthermore, the priority just gives the order of the selection of extended literals and clauses, it does not affect the completeness of the method.
\item[(2)] In the process of contradiction construction based on standard extension, $x_i$ and $y (x_i)$  are unified respectively by substitutions $\theta(x_i)$ and $\theta (y(x_i))$ such that  $x_i^{\theta (y(x_{i-1}))\theta(x_i)}$ = $\neg$$y(x_i)^{\theta (y(x_i))}$. In this case, $\theta(x_i)$ and $\theta(y (x_i))$ not only substitute their clauses $D_i (x_{i-1})$ and $D_{i+1}(x_i)$, but also should substitute the clauses $D_{i-1} (x_{i-2}),\cdots, D_{2}(x_{1}), D_1$. This is because the substitution $\theta(x_i)$ affects the literal $y(x_{i-1})^{\theta (y(x_{i-1}))}$ in $D_i (x_{i-1})^{\theta (y(x_{i-1}))}$, i.e., $y(x_{i-1})^{\theta (y(x_{i-1}))\theta(x_i)}$. But the literal $y(x_{i-1})^{\theta (y(x_{i-1}))}$ should be the complementary literal with $x_{i-1}^{\theta(x_{i-1})}$ in $D_{i-1} (x_{i-2})^{\theta(x_{i-1})}$, hence $\theta(x_i)$ should substitute $x_{i-1}^{\theta(x_{i-1})}$, such that $x_{i-1}^{\theta(x_{i-1})\theta(x_i)}$ = $\neg$$y(x_{i-1})^{\theta (y(x_{i-1}))\theta(x_i)}$. This is the similar case for $D_{i-2} (x_{i-3}),D_{i-1}(x_{i-2}), \cdots, D_1$. Such substitutions are called reverse substitutions in the contradiction constructing.
\end{enumerate}
\end{remark}

The procedure described above formalizes a complete operational framework for contradiction separation based on the Standard Extension in first-order logic.  
Through iterative clause extension, unification, and reverse substitution, the method dynamically constructs contradiction structures and their corresponding separation clauses, ensuring that every deduction step remains logically consistent.  
This unified approach not only guarantees soundness and completeness but also provides a practical foundation for efficient implementation in automated reasoning systems.  

According to this method, a unified algorithm for deduction of contradictions separation based on standard extension in first-order logic is shown in Algorithm \ref{agm2} as shown in the Appendix.

\begin{algorithm}[h]
   \caption{A unified algorithm for deduction of contradictions separation based on standard extension in first-order logic} \label{agm2}
   \BlankLine
   \KwIn{$S= \{C_1, C_2, \cdots, C_n\}$ is the clauses set in first-order logic.}
   \KwOut{Judge the unsatisfiability and satisfiability of $S$. If $S$ is satisfiable, and satisfies some conditions, then give its  satisfiable example.}
   Initiate: $T_0$ is the set of the conditions for stopping the contradiction extension; $T_1$ is the set of the conditions for judging satisfiability of $S$.\\
   \While {$P_s \neq 0$ $\&$ $P_s \neq 1$ }{
   $S_0 \leftarrow S$ by deleting the subsumed clauses and tautologies of $S$;\\
   $S_1 \leftarrow S_0$ by renaming teh variable symbols of all the clauses in $S_0$; k=1;\\
   Select $D_1$ is $S_1$, and $x_1$ in $D_1$;\\
   Extend $x_1$ and select $y(x_1)$ in $S_1$, denote its clause as $D_2(x_1)$, such that $x_1^{\theta(x_1)}$ = $\neg$$y(x_1)^{\theta(y(x_1))}$;\\
   The extended literal is $x_1^{\theta(x_1)}$, and the extended clauses are $D_1^{\theta(x_1)}$, $D_2(x_1)^{\theta(y(x_1))}$;\\
   $D_1^{\theta(x_1)}=D_1^{\theta(x_1)-}+D_1^{\theta(x_1)+}$, where $D_1^{\theta(x_1)-}$ = $x_1^{\theta(x_1)}$.\\
     \While {$T_0$ is not met}{
     $k=k+1$;\\
     Select $x_{k-1}$ in $D_{k-1}(x_{k-2})^{\theta(y(x_{k-2}))}-\bigvee_{j=1}^{k-3}y(x_j)^{\theta(y(x_j))\theta(x_{j+1})\cdots\theta(x_{k-2})}-y(x_{k-2})^{\theta(y(x_{k-2}))}$;\\
     Extend $x_{k-1}$ and select $y(x_{k-1})$ in $S_1$, denote its clause as $D_k(x_{k-1})$, such that $x_{k-1}^{\theta(x_{k-1})}$ = $\neg$$y(x_{k-1})^{\theta(y(x_{k-1}))}$;\\
     The extended literals are $x_1^{\theta(x_1)\cdots\theta(x_{k-1})}$, $x_2^{\theta(x_2)\cdots\theta(x_{k-1})}$, $\cdots$, $x_{k-2}^{\theta(x_{k-2})\theta(x_{k-1})}$, and $x_{k-1}^{\theta(x_{k-1})}$, and the extended clauses are $D_1^{\theta(x_1)\theta(x_2)\cdots\theta(x_{k-1})}$, $D_2(x_1)^{\theta(y(x_1))\theta(x_2)\cdots\theta(x_{k-1})}$, $\cdots$, $D_{k-1}(x_{k-2})^{\theta(y(x_{k-2}))\theta(x_{k-1})}$, and $D_{k}(x_{k-1})^{\theta(y(x_{k-1}))}$;\\
     $D_{k-1}(x_{k-2})^{\theta(y(x_{k-2}))\theta(x_{k-1})}=D_{k-1}(x_{k-2})^{\theta(y(x_{k-2}))\theta(x_{k-1})-}+D_{k-1}(x_{k-2})^{\theta(y(x_{k-2}))\theta(x_{k-1})+}$, where $D_{k-1}(x_{k-2})^{\theta(y(x_{k-2}))\theta(x_{k-1})-}= (x_{k-1}^{\theta(x_{k-1})} \vee y(x_{k-2})^{\theta(y(x_{k-2}))\theta(x_{k-1})})\vee D_{k-1}^0$, $D_{k-1}^0=\bigvee_{x\in D^{k-1}}x$, and $D^{k-1}\subseteq \{y(x_j)^{\theta(y(x_j))\theta(x_{j+1})\cdots\theta(x_{k-1})}$ $|j=1,2, \cdots, k-3\}$.\\
       \If {$T_0$ is met}{
       $D_{k}(x_{k-1})^{\theta(y(x_{k-1}))}=D_{k}(x_{k-1})^{\theta(y(x_{k-1}))-}+D_{k}(x_{k-1})^{\theta(y(x_{k-1}))+}$, where $D_{k}(x_{k-1})^{\theta(y(x_{k-1}))-}$ = $(y(x_{k-1})^{\theta(x_{k-1})}$ $\vee D_{k}^0$, $D_{k}^0$ = $\bigvee_{x\in D^{k}}x$, and $D^{k}\subseteq \{y(x_j)^{\theta(y(x_j))\theta(x_{j+1})\cdots\theta(x_{k-1})}|j=1,2, \cdots, k-2\}$.\\
       }
     }
     Denote the substitution of the extended clause $D_i (1 \leq i\leq n)$ as $\theta_i$, and then $D_i(x_{i-1})^{\theta_i}=D_i(x_{i-1})^{\theta_i-}+D_i(x_{i-1})^{\theta_i+}$;\\
     
     $R_s(D_1^{\theta_1}, \cdots,D_k^{\theta_k}) = \bigvee_{i=1}^kD_i(x_{i-1})^{\theta_i+}$.\\
     \If {$R_s$ = $\emptyset$}{
     $S$ is unsatisfiable, and $P_s=0$.\\
       \ElseIf{$R_s \neq \emptyset$ $\&$ $T_1$ is met}{
       $S$ is satisfiable, and give a satisfiable example of $S$; 
       $P_s=1$.\\
       }
       \Else{
       $S=S \cup R_s$.
       }
     
     }
   }
\end{algorithm}

This algorithm provides a constructive realization of the Standard Extension in first-order logic.  
By systematically performing substitutions and unifications among complementary literals, it dynamically constructs contradiction-separation clauses while ensuring logical consistency across variable instances.  
When a standard contradiction is detected, the method establishes the unsatisfiability of the clause set; otherwise, it generates a concrete satisfiable model under the specified conditions.  
Through this unified mechanism, the algorithm extends the propositional version of the Standard Extension to first-order logic, preserving both soundness and completeness. It bridges the theoretical principles of dynamic contradiction separation with executable inference strategies, directly supporting the design of advanced theorem provers such as CSE, CSE\_E, CSI\_E, and CSI\_Enig.

\begin{theorem}\label{thm5.8}
Let $S = \{C_1,\cdots, C_n\}$ be a clauses set in first-order logic, $y(x_{n-1})$, $\cdots$, $y(x_{j_0})$, $y$, $x _{j_0-1}$, $\cdots$, $x_1$ are the literals in $C_n,\cdots, C_1$ respectively. If $S$ satisfies one of the following conditions.
\begin{enumerate}
\item[(1)] The predicate symbol of $y$ it not the same with the predicate symbols in $\{y(x_{k-1})$, $\cdots, y(x_{j_0}), x_{j_0-1}, \cdots, x_1\}$.
\item[(2)] The predicate symbol of $y$ has complementary predicate symbols in $\{y(x_{k-1})$, $\cdots$, $y(x_{j_0}), x_{j_0-1}, \cdots, x_1\}$, but all the complementary predicate symbols are ground literals with different constant terms.
\end{enumerate}
then $S$ is satisfiable.

\begin{proof}
\begin{enumerate}
\item[(1)]  Let $S (x,y)$ = $\{y(x_{n-1}),\cdots, y(x_{j_0 }),y, x_{j_0-1},\cdots, x_1\}$, $C (x,y)$ = $\{c |c$ is the constant symbol in $S(x,y)\} $, $F(x,y) = \{f^e|f^e$ is an $e$-ary function symbol in  $S (x,y)\}$. Give an interpretation $I_0 = (D, \sigma, \mu, \nu)$ as follows. $D$ is an domain of interpretation. 
\begin{enumerate}
\item For any constant symbol, 
\begin{equation}
\begin{aligned}
\sigma: C(x,y) &\longrightarrow D,\\
c &\mapsto d. \nonumber
\end{aligned}
\end{equation}
\item For any function symbol, 
$\mu$: $F(x, y)  \longrightarrow \{f_0^e : D^e \longrightarrow  D | f_0^e$ is an $e$-ary function in $D^e\}$, that is, if $f^e$ is an $e$-ary function symbol in $F(x, y)$, then $\mu(f^e)(d_1,\cdots, d_e)=f_0^e (d_1,\cdots, d_e)= d$, where $(d_1,\cdots, d_e) \in D^e$ and $d \in D$.

\item For any predicate symbol, since $S (x, y)$ does not include complementary predicates, the interpretation can be given as follows.

$\nu$:  $S (x, y)\longrightarrow  \{r_0 ^h: D^h \longrightarrow \{0,1\} | r_0^h$ is an $h$-ary relation in $D^h \}$, that is, if $r^h$ is an $h$-ary predicate symbol in $S(x,y)$, then for any $r^h$, $\nu(r^h)(d_1,\cdots, d_h)=1 $.
\end{enumerate}
In this case, the interpretation $I_0$ satisfies $(\forall z_1)\cdots(\forall z_m)(y(x_{n-1})\wedge\cdots\wedge y(x_{j_0}))\wedge y\wedge x_{j_0-1} \wedge \cdots\wedge x_1) $.

\item[(2)]  Let $G (x, y)$ = $\{g | g$ is the ground literal with different constant symbols in $S(x, y)\}$. Give an interpretation $I_1 = (D, \sigma, \mu, \nu)$ as follows.
\begin{enumerate}
\item $D=D_0 \cup H(S (x, y)) $, where $D_0$ is a domain of interpretation, and $H(S(x, y))$ is the Herbrand domain of $S(x, y)$.

\item For any constant symbol, 
\begin{equation}
\begin{aligned}
\sigma: C(x,y) &\longrightarrow D,\\
c &\mapsto c. \nonumber
\end{aligned}
\end{equation}
\item For any function symbol, $\mu: F(x, y) \longrightarrow\{f_0^e : D^e \longrightarrow D |f_0^e$ is an $e$-ary function in $D^e \}$, if $f^e$ is an $e$-ary function symbol in $F(x, y)$, then two sub-cases exist as follows.
\begin{enumerate}
\item[i)] If $f^e$ is an $e$-ary ground function symbol $f^e (t_1,\cdots, t_e)$, then 

\begin{equation}
\mu(f^e)(d_1,\cdots,d_e)= \left\{
\begin{array}{cl}
f_0^e(t_1,\cdots,t_e), &\mbox{if } (t_1,\cdots,t_e) \in H(S(x,y))^e,\\ &\mbox{and } (d_1,\cdots,d_e)=(t_1,\cdots,t_e).\\
d, & otherwise.
\end{array} \right.\nonumber
\end{equation}

\item[ii)] Otherwise, 
$\mu(f^e)(d_1,\cdots,d_e)=f_0^e (d_1,\cdots,d_e)=d$.
\end{enumerate}

\item For any predicate symbol, $\nu: S (x,y)\longrightarrow\{r_0^h : D^h \longrightarrow\{0,1\} | r_0^h$ is the $h$-ary relation in $D^h\}$. If $r^h$ is the $h$-ary predicate symbol in $S (x,y)$, then for any $(d_1, \cdots, d_h) \in  D^h $, since the complementary predicates include different ground terms, hence the interpretation can be given as follows.
\begin{enumerate}
\item[i)] If the ground literal in $G (x, y)$ is an $h$-ary ground atom $r^ h (t_1, \cdots, t_h)$, then
\begin{equation}
\nu(r^h)(d_1,\cdots,d_h)= \left\{
\begin{array}{cl}
1, &\mbox{if } (t_1, \cdots, t_h) \in H(S(x,y))^h,\\ &\mbox{and } (d_1,\cdots,d_h)=(t_1,\cdots,t_h).\\
0, & otherwise . \nonumber
\end{array} \right. 
\end{equation}

\item[ii)] If the ground literal in $G (x, y)$ is the negation of an $h$-ary ground atom $r^ h (t_1, \cdots, t_h)$, i.e., $\neg$$r^ h (t_1, \cdots, t_h)$, then
\begin{equation}
\nu(r^h)(d_1,\cdots,d_h)= \left\{
\begin{array}{cl}
0,& \mbox{if } (t_1, \cdots, t_h) \in H(S(x,y))^h,\\ 
  &\mbox{and } (d_1,\cdots,d_h)=(t_1,\cdots,t_h).\\
1, & otherwise. \nonumber
\end{array} \right.
\end{equation}

\item [iii)] If $r^h$ is the predicate symbol of a literal $L$ in $S(x,y)-G(x,y)$, then let $\nu(r^h) (d_1, \cdots, d_h)$ such that $\nu(L)(d_1,\cdots,d_h)=1$, that is,
\begin{enumerate}
\item [-] if $L$ = $r^h (d_1,\cdots,d_h)$, then $\nu(r^h)(d_1,\cdots,d_h)$ = $1$.

\item [-] if $L$ = $\neg$$r^h (d_1,\cdots,d_h)$, then $\nu(r^h)(d_1,\cdots,d_h)=0 $.
\end{enumerate}
\end{enumerate}	
\end{enumerate}	

In this case, $I _1$ satisfies $(\forall z_1)\cdots(\forall z_m) (y(x_{n-1})\wedge \cdots\wedge y(x_{j_0}))\wedge y\wedge x_{j_0-1}\wedge \cdots\wedge x_1)$.
\end{enumerate}	

Note that $S = (\forall z_1)\cdots(\forall z m) (C_1 \wedge \cdots\wedge C_n)$ $\geq \bigvee_{(P_1,\cdots,P_n) \in C_1\times\cdots\times C_n} (\forall z_1)\cdots(\forall z_m) (P_1 \wedge \cdots\wedge P_n)$. If there exist $(P_1,\cdots,P_n) \in  C_1 \times\cdots\times C_n $, such that $(\forall z_1)\cdots(\forall z_m)(P_1 \wedge \cdots\wedge P_n)$ is satisfiable, then $S$ is satisfiable.

Therefore, according to conditions (1) and (2),  only one of the conditions is satisfied, and $S$ is satisfiable.
\end{proof}
\end{theorem}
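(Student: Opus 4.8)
The plan is to reduce satisfiability of $S$ to satisfiability of a single universally quantified conjunction of the literals selected during the Standard Extension, and then to exhibit an explicit interpretation satisfying that conjunction under whichever of the two hypotheses holds.

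First I would record the reduction step. Writing $S = (\forall z_1)\cdots(\forall z_m)(C_1 \wedge \cdots \wedge C_n)$ and using that every $P_i \in C_i$ satisfies $P_i \models C_i$, we obtain $(\forall z_1)\cdots(\forall z_m)(P_1 \wedge \cdots \wedge P_n) \models S$ for each choice $(P_1,\dots,P_n) \in C_1 \times \cdots \times C_n$. In particular, if the conjunction $(\forall z_1)\cdots(\forall z_m)\big(y(x_{n-1}) \wedge \cdots \wedge y(x_{j_0}) \wedge y \wedge x_{j_0-1}\wedge\cdots\wedge x_1\big)$ of the selected literals is satisfiable, then so is $S$. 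Thus it suffices to build a model of this conjunction; write $S(x,y)$ for the underlying literal set, $C(x,y)$ for the constants occurring in it, $F(x,y)$ for the function symbols, and $G(x,y)$ for the ground literals in it.

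Under condition (1) I would take an arbitrary nonempty domain $D$, interpret the constant and function symbols of $C(x,y)$ and $F(x,y)$ by arbitrary total maps into $D$ (resp. $D^e \to D$), and then define $\nu$ on predicate symbols: since no two literals of $S(x,y)$ share a predicate symbol that could produce a complementary pair, the truth value a literal of $S(x,y)$ forces on its predicate is uniquely determined, so $\nu$ can be chosen to make every literal of $S(x,y)$ true on all tuples simultaneously; each predicate is then interpreted by a constant relation, so the universal quantifiers hold automatically, yielding $I_0 \models (\forall z_1)\cdots(\forall z_m)\bigwedge S(x,y)$ and hence a model of $S$. Under condition (2) I would instead pass to a Herbrand-style domain $D = D_0 \cup H(S(x,y))$, interpret each constant as itself and each function symbol so that it acts freely on Herbrand terms (and arbitrarily elsewhere), and define $\nu$ predicate by predicate: on the ground tuple $(t_1,\dots,t_h)$ carried by a ground literal of $G(x,y)$ set $\nu(r^h)$ to match that literal's sign, and on the tuples relevant to the remaining non-ground literals of $S(x,y)\setminus G(x,y)$ set $\nu(r^h)$ so that each such literal is true, assigning remaining tuples arbitrarily. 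Because the complementary pairs are ground literals over \emph{distinct} constant terms, these pointwise prescriptions never collide, so $\nu$ is well defined and $I_1 \models (\forall z_1)\cdots(\forall z_m)\bigwedge S(x,y)$, whence $S$ is satisfiable.

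The main obstacle is the consistency verification inside case (2): one must check that a ground literal $\pm r^h(t_1,\dots,t_h)$ and a non-ground literal of $S(x,y)$ that share the same predicate symbol $r^h$ cannot demand incompatible values of $\nu(r^h)$. This is handled by a short case split on whether the tuple instantiating the non-ground literal coincides with $(t_1,\dots,t_h)$, using exactly the ``different constant terms'' clause of the hypothesis to keep the ground constraints pairwise disjoint and separated from the generic assignments used for the non-ground part; carrying out this bookkeeping cleanly — together with checking that $D$ is large enough to support the universal quantifiers — is the only delicate point, the rest being routine construction of Herbrand interpretations.
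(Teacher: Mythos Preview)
Your proposal is correct and follows essentially the same approach as the paper: reduce to satisfiability of the universally closed conjunction of one selected literal per clause, then build an explicit interpretation case by case---an arbitrary domain with constant-relation predicates under hypothesis~(1), and a Herbrand-style domain $D = D_0 \cup H(S(x,y))$ with pointwise predicate assignments under hypothesis~(2). The only cosmetic difference is that you place the reduction step first whereas the paper records it last; your treatment of the consistency check in case~(2) is in fact more explicit than the paper's, which simply lists the three sub-cases of $\nu$ without flagging the potential collision you identify.
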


Theorem~\ref{thm5.8} not only provides a theoretical guarantee for the satisfiability of clause sets under the Standard Extension framework but also has direct algorithmic relevance. 
It shows that when the involved predicate symbols are either distinct or complementary over non-overlapping ground instances, a valid interpretation can always be constructed. 
This means that the dynamic deduction process of contradiction separation will correctly identify satisfiable cases without unnecessary extensions or contradictions. 
In practice, this theorem underpins the satisfiability-checking mechanism of the unified deduction algorithms described in Section~\ref{sec:level5.3}, 
and serves as the logical basis for their implementation in automated reasoning systems such as CSE and CSI as detailed in subsequent section. 
Hence, Theorem~\ref{thm5.8} bridges the formal semantics of Standard Extension with its operational realization in dynamic automated deduction frameworks.

\section{\label{sec:level6}Experimental Evidence and System Relevance}

The Standard Extension (SE) algorithm proposed in this paper establishes a unified framework for contradiction-separation-based reasoning. Although this paper primarily focuses on the theoretical and algorithmic formulation of the SE for dynamic contradiction separation, its effectiveness has been extensively validated through successive generations of automated reasoning systems developed upon this foundation.  
These systems, including \textbf{CSE}, \textbf{CSE\_E}, \textbf{CSI\_E}, and \textbf{CSI\_Enig}, translate the theoretical principles and construction mechanisms introduced here into practical, high-performance automated theorem provers.  
Their cumulative results across multiple competitions and benchmark evaluations provide empirical evidence for the soundness, completeness, and efficiency of the proposed method.

\subsection{Algorithmic Lineage of CSE-Based Systems}

Because that 2018 “CSE” name was already established (meaning Contradiction Separation Extension), later systems retained it as the base acronym when new integrations appeared. 

The lineage of systems implementing the Standard Extension traces directly from the theoretical CSE framework in \cite{xu2018contradiction} to the algorithmic elaboration presented in this paper.  
The 2018 theory paper established the logical foundation for dynamic contradiction separation but did not specify the concrete construction mechanism by which contradictions are algorithmically formed and separated.  The present work closes this gap by formalizing the \textit{Standard Extension} (SE) procedure, which defines how contradictions are extended, unified, and resolved dynamically.  

The SE algorithm presented in this paper provides the first concrete procedural realization of CSE-based reasoning, which forms the procedural foundation for a series of automated reasoning systems that have appeared in the \textbf{CADE ATP System Competition (CASC)} from 2018 through 2025.
According to the official \textit{Systems and Entrants Lists} published on the CASC website~\cite{CASC2018,CASC2019,CASC2020,CASC2021,CASC2022,CASC2023,CASC2024,CASC2025}, demonstrating the efficiency, scalability, and adaptability of the CSE framework across propositional, first-order, and equality reasoning domains. The following systems incorporate or extend the SE approach:

\begin{itemize}
  \item \textbf{CSE (2018)} — Contradiction Separation Extension, the initial dynamic multi-clause reasoning prototype based on contradiction separation; first appeared in CASC-J9~(2018).
  \item \textbf{CSE\_E (2019--2023)} — CSE integrated with the E theorem prover, extends CSE by integrating the SE mechanism with E’s superposition calculus for equality reasoning; listed in CASC-27~(2019) through CASC-29~(2023).
  \item \textbf{CSI\_E (2024)} — combining CSI 1.0 and E, where CSI 1.0 is a multi-layer inverse and parallel prover based on CSE; registered as an entrant in CASC-J12~(2024).
  \item \textbf{CSI\_Enigma (2025)} — combining CSI 1.1 and Enigma, where CSI 1.1 is a multi-layer inverse and parallel prover based on CSE 1.6, it is a learning-guided evolution of CSI employing ENIGMA-style clause ranking; registered in CASC-30~(2025).
\end{itemize}

These successive systems progressively instantiate, optimize, and empirically validate the algorithmic principles proposed in this paper.  
Thus, while the theoretical results are presented here for the first time in complete form, the experimental confirmation of their correctness and efficiency has been achieved through this evolving system lineage.

This lineage is consistent with the development path: from propositional and FOL implementations, adding stronger equality handling and CSE refinements, to learning-guided variants, reflects the scalability of the Standard Extension inference model across increasingly expressive logical domains. The presence and descriptions on the official entrants pages substantiate these roles across years.

\subsection{Representative Competition Performance of CSE-Based Systems}
Performance data recorded in the official \textit{Compact Summaries} and \textit{Full Summaries} of CASC editions from 2018 to 2025 confirm the consistent participation and advancement of CSE-based systems.

\paragraph{CASC-J9 (2018).}
The \textbf{CSE} and \textbf{CSE\_E} systems first appeared in the \emph{First-Order Theorems (FOF)} division, demonstrating the feasibility of dynamic multi-clause reasoning via CSE.

\paragraph{CASC-27 (2019).}
Both systems were re-entered and achieved mid-tier positions in the FOF ranking; CSE\_E showed measurable improvement in equality handling relative to its prototype predecessor.

\paragraph{CASC-J10 (2020).}
The summaries list both systems with higher solved-problem counts, maintaining stability under uniform time constraints across the TPTP benchmark corpus.

\paragraph{CASC-28 (2021) and CASC-J11 (2022).}
The CSE and CSE\_E systems maintained steady performance in FOF divisions, with CSE\_E solving a large proportion (approximately 360/500) of benchmark problems within 120 seconds.

\paragraph{CASC-29 (2023).}
CSE and CSE\_E again appeared in the official FOF rankings, confirming the sustained competitiveness of the contradiction-separation approach against leading systems such as Vampire and E.

\paragraph{CASC-J12 (2024).}
The \textbf{CSI\_E}, \textbf{CSE\_E}, and \textbf{CSE} systems are all listed among FOF entrants. 
Both CSI\_E and CSE\_E are additionally included in the \emph{I Challenge yoU (ICU)} division, documenting their first appearance in cross-prover challenge results.

\paragraph{CASC-30 (2025).}
\textbf{CSE\_E} and \textbf{CSI\_Enigma} are recorded in both the FOF and ICU divisions. 
The Compact Summary reports solved-problem counts comparable to other high-performance systems, validating the S-CS method’s adaptability to hybrid symbolic–learning architectures.

Overall, across eight CASC editions (2018–2025), these systems exhibit continuous participation and measurable improvement, providing long-term empirical evidence for the viability of S-CS as a competitive reasoning framework.

\subsection{Interpretation and Impact}
The sustained record of CSE-based provers in CASC over eight consecutive years establishes both the \emph{algorithmic stability} and \emph{practical impact} of the SE framework.

CSE-based systems are repeatedly entered and consistently solve substantial portions of the FOF benchmark sets under uniform time limits, with later variants (CSI\_E, CSI\_Enigma) also recorded in the ICU division. This sustained presence and measured progress—documented on the official CASC results pages—constitute compelling indirect empirical validation for the SE algorithm introduced here: the dynamic, multi-clause contradiction construction scales, integrates with equality and unification, and benefits from modern learning-guided selection when combined with ENIGMA-style guidance.

By enabling dynamic multi-clause synergy and flexible contradiction construction, the SE method serves as the algorithmic nucleus of a family of state-of-the-art CSE-based theorem provers and confluence checkers. These results demonstrate that the approach not only advances theoretical understanding but also achieves tangible gains in reasoning power, efficiency, and adaptability.

Each iteration demonstrates an increased ability to manage complex first-order search spaces while maintaining theoretical soundness. Their repeated presence in CASC’s \emph{First-Order Theorems} and \emph{I Challenge yoU} divisions confirms that the CSE approach is not only a conceptual advance but also a robust, empirically validated contribution to state-of-the-art automated reasoning.

\section{\label{sec:level7}Conclusions}

This paper has presented a comprehensive theoretical and experimental study of the \textit{dynamic deduction and contradiction construction method based on Standard Extension (SE)} in both propositional and first-order logic. 
The proposed framework provides a unified, semantically grounded approach for automated reasoning, extending traditional binary resolution by introducing dynamic multi-clause interaction and contradiction-driven deduction. 
Through the Standard Extension mechanism, the process of contradiction \textit{construction} and \textit{separation} becomes systematic and algorithmically realizable, enabling simultaneous handling of refutation and satisfiability reasoning within a coherent logical model.

Formally, the paper established the \textit{soundness} and \textit{completeness} of the SE based CSE deduction system, demonstrating that all derivations under this method preserve logical validity and that every unsatisfiable clause set can be refuted through finite contradiction construction. 
In the first-order logic case, the framework integrates substitution and unification into the contradiction construction process, extending its applicability to quantified reasoning while preserving theoretical rigor and computational efficiency.

A unified deduction algorithm SE was developed to operationalize the CSE framework, capable of dynamically constructing contradictions and determining both satisfiability and unsatisfiability of formulas. 
This algorithm serves as the reasoning core of a series of automated deduction systems—\textsc{CSE}, \textsc{CSE\_E}, \textsc{CSI\_E}, and \textsc{CSI\_Enig}—which have participated in multiple editions of the \textit{CADE ATP System Competition (CASC)} from 2018 to 2025. 
Experimental outcomes consistently demonstrate strong performance across divisions in first-order theorem proving and equality reasoning, providing empirical validation of the proposed contradiction construction mechanism.

From a broader perspective, the Standard Extension framework bridges the gap between logical theory and automated reasoning practice. 
It unifies \textit{refutation}, \textit{satisfiability verification}, and \textit{contradiction construction} within a dynamic procedural paradigm that scales effectively across propositional, first-order, and equality-based domains. 
The success of S-CS-based systems demonstrates that contradiction construction is not only a theoretically elegant extension of classical resolution but also a practically powerful method for large-scale automated deduction.

Future research will focus on improving the efficiency and scalability of CSE-based deduction through heuristic-guided literal and clause selection, learning-based contradiction construction strategies, and exploration of higher-order and non-classical reasoning extensions. 
These developments are expected to further consolidate the Standard Extension as a foundational paradigm for next-generation automated theorem provers and intelligent reasoning systems.

\bibliography{sc21}

%% BioMed_Central_Bib_Style_v1.01

\begin{thebibliography}{48}
% BibTex style file: bmc-mathphys.bst (version 2.1), 2014-07-24
\ifx \bisbn   \undefined \def \bisbn  #1{ISBN #1}\fi
\ifx \binits  \undefined \def \binits#1{#1}\fi
\ifx \bauthor  \undefined \def \bauthor#1{#1}\fi
\ifx \batitle  \undefined \def \batitle#1{#1}\fi
\ifx \bjtitle  \undefined \def \bjtitle#1{#1}\fi
\ifx \bvolume  \undefined \def \bvolume#1{\textbf{#1}}\fi
\ifx \byear  \undefined \def \byear#1{#1}\fi
\ifx \bissue  \undefined \def \bissue#1{#1}\fi
\ifx \bfpage  \undefined \def \bfpage#1{#1}\fi
\ifx \blpage  \undefined \def \blpage #1{#1}\fi
\ifx \burl  \undefined \def \burl#1{\textsf{#1}}\fi
\ifx \doiurl  \undefined \def \doiurl#1{\url{https://doi.org/#1}}\fi
\ifx \betal  \undefined \def \betal{\textit{et al.}}\fi
\ifx \binstitute  \undefined \def \binstitute#1{#1}\fi
\ifx \binstitutionaled  \undefined \def \binstitutionaled#1{#1}\fi
\ifx \bctitle  \undefined \def \bctitle#1{#1}\fi
\ifx \beditor  \undefined \def \beditor#1{#1}\fi
\ifx \bpublisher  \undefined \def \bpublisher#1{#1}\fi
\ifx \bbtitle  \undefined \def \bbtitle#1{#1}\fi
\ifx \bedition  \undefined \def \bedition#1{#1}\fi
\ifx \bseriesno  \undefined \def \bseriesno#1{#1}\fi
\ifx \blocation  \undefined \def \blocation#1{#1}\fi
\ifx \bsertitle  \undefined \def \bsertitle#1{#1}\fi
\ifx \bsnm \undefined \def \bsnm#1{#1}\fi
\ifx \bsuffix \undefined \def \bsuffix#1{#1}\fi
\ifx \bparticle \undefined \def \bparticle#1{#1}\fi
\ifx \barticle \undefined \def \barticle#1{#1}\fi
\bibcommenthead
\ifx \bconfdate \undefined \def \bconfdate #1{#1}\fi
\ifx \botherref \undefined \def \botherref #1{#1}\fi
\ifx \url \undefined \def \url#1{\textsf{#1}}\fi
\ifx \bchapter \undefined \def \bchapter#1{#1}\fi
\ifx \bbook \undefined \def \bbook#1{#1}\fi
\ifx \bcomment \undefined \def \bcomment#1{#1}\fi
\ifx \oauthor \undefined \def \oauthor#1{#1}\fi
\ifx \citeauthoryear \undefined \def \citeauthoryear#1{#1}\fi
\ifx \endbibitem  \undefined \def \endbibitem {}\fi
\ifx \bconflocation  \undefined \def \bconflocation#1{#1}\fi
\ifx \arxivurl  \undefined \def \arxivurl#1{\textsf{#1}}\fi
\csname PreBibitemsHook\endcsname

%%% 1
\bibitem[\protect\citeauthoryear{Xu et~al.}{2018}]{xu2018contradiction}
\begin{barticle}
\bauthor{\bsnm{Xu}, \binits{Y.}},
\bauthor{\bsnm{Liu}, \binits{J.}},
\bauthor{\bsnm{Chen}, \binits{S.}},
\bauthor{\bsnm{Zhong}, \binits{X.}},
\bauthor{\bsnm{He}, \binits{X.}}:
\batitle{Contradiction separation based dynamic multi-clause synergized automated deduction}.
\bjtitle{Information Sciences}
\bvolume{462},
\bfpage{93}--\blpage{113}
(\byear{2018})
\end{barticle}
\endbibitem

%%% 2
\bibitem[\protect\citeauthoryear{Robinson}{1965}]{robinson1965machine}
\begin{barticle}
\bauthor{\bsnm{Robinson}, \binits{J.A.}}:
\batitle{A machine-oriented logic based on the resolution principle}.
\bjtitle{Journal of the ACM (JACM)}
\bvolume{12}(\bissue{1}),
\bfpage{23}--\blpage{41}
(\byear{1965})
\end{barticle}
\endbibitem

%%% 3
\bibitem[\protect\citeauthoryear{Hilbert et~al.}{1952}]{hilbert1999principles}
\begin{barticle}
\bauthor{\bsnm{Hilbert}, \binits{D.}},
\bauthor{\bsnm{Ackermann}, \binits{W.}},
\bauthor{\bsnm{Luce}, \binits{R.E.}}:
\batitle{Principles of mathematical logic}.
\bjtitle{Philosophy}
\bvolume{27}(\bissue{103}),
\bfpage{375}--\blpage{376}
(\byear{1952})
\end{barticle}
\endbibitem

%%% 4
\bibitem[\protect\citeauthoryear{Pelletier}{1998}]{pelletier1998natural}
\begin{barticle}
\bauthor{\bsnm{Pelletier}, \binits{F.J.}}:
\batitle{Natural deduction theorem proving in {THINKER}}.
\bjtitle{Studia Logica}
\bvolume{60}(\bissue{1}),
\bfpage{3}--\blpage{43}
(\byear{1998})
\end{barticle}
\endbibitem

%%% 5
\bibitem[\protect\citeauthoryear{Bibel}{1981}]{bibel1981matrices}
\begin{barticle}
\bauthor{\bsnm{Bibel}, \binits{W.}}:
\batitle{On matrices with connections}.
\bjtitle{Journal of the ACM (JACM)}
\bvolume{28}(\bissue{4}),
\bfpage{633}--\blpage{645}
(\byear{1981})
\end{barticle}
\endbibitem

%%% 6
\bibitem[\protect\citeauthoryear{Bachmair and Ganzinger}{1994}]{bachmair1994rewrite}
\begin{barticle}
\bauthor{\bsnm{Bachmair}, \binits{L.}},
\bauthor{\bsnm{Ganzinger}, \binits{H.}}:
\batitle{Rewrite-based equational theorem proving with selection and simplification}.
\bjtitle{Journal of Logic and Computation}
\bvolume{4}(\bissue{3}),
\bfpage{217}--\blpage{247}
(\byear{1994})
\end{barticle}
\endbibitem

%%% 7
\bibitem[\protect\citeauthoryear{Robinson and Voronkov}{2001}]{robinson2001handbook}
\begin{bbook}
\beditor{\bsnm{Robinson}, \binits{A.}},
\beditor{\bsnm{Voronkov}, \binits{A.}} (eds.):
\bbtitle{Handbook of Automated Reasoning}.
\bpublisher{Elsevier Science Publishers B. V.},
\blocation{NLD}
(\byear{2001})
\end{bbook}
\endbibitem

%%% 8
\bibitem[\protect\citeauthoryear{Boyer et~al.}{1995}]{boyer1995boyer}
\begin{barticle}
\bauthor{\bsnm{Boyer}, \binits{R.S.}},
\bauthor{\bsnm{Kaufmann}, \binits{M.}},
\bauthor{\bsnm{Moore}, \binits{J.S.}}:
\batitle{The {B}oyer-{M}oore theorem prover and its interactive enhancement}.
\bjtitle{Computers \& Mathematics with Applications}
\bvolume{29}(\bissue{2}),
\bfpage{27}--\blpage{62}
(\byear{1995})
\end{barticle}
\endbibitem

%%% 9
\bibitem[\protect\citeauthoryear{Liang and Lee}{1973}]{liang1973symbolic}
\begin{bbook}
\bauthor{\bsnm{Liang}, \binits{C.C.}},
\bauthor{\bsnm{Lee}, \binits{R.C.}}:
\bbtitle{Symbolic Logic and Mechanical Theorem Proving}.
\bpublisher{Academic Press}, \blocation{???}
(\byear{1973})
\end{bbook}
\endbibitem

%%% 10
\bibitem[\protect\citeauthoryear{Loveland}{1969}]{loveland1969simplified}
\begin{bchapter}
\bauthor{\bsnm{Loveland}, \binits{D.W.}}:
\bctitle{A simplified format for the model elimination theorem-proving procedure}.
In: \bbtitle{Automation of Reasoning},
pp. \bfpage{233}--\blpage{248}.
\bpublisher{Springer}, \blocation{???}
(\byear{1969})
\end{bchapter}
\endbibitem

%%% 11
\bibitem[\protect\citeauthoryear{Stickel}{1986}]{stickel1986schubert}
\begin{barticle}
\bauthor{\bsnm{Stickel}, \binits{M.E.}}:
\batitle{Schubert's steamroller problem: Formulations and solutions}.
\bjtitle{Journal of Automated Reasoning}
\bvolume{2}(\bissue{1}),
\bfpage{89}--\blpage{101}
(\byear{1986})
\end{barticle}
\endbibitem

%%% 12
\bibitem[\protect\citeauthoryear{Loveland}{1970}]{loveland1970linear}
\begin{bchapter}
\bauthor{\bsnm{Loveland}, \binits{D.W.}}:
\bctitle{A linear format for resolution}.
In: \bbtitle{Symposium on Automatic Demonstration},
pp. \bfpage{147}--\blpage{162}
(\byear{1970}).
\bcomment{Springer}
\end{bchapter}
\endbibitem

%%% 13
\bibitem[\protect\citeauthoryear{Boyer}{1971}]{boyer1971locking}
\begin{botherref}
\oauthor{\bsnm{Boyer}, \binits{R.S.}}:
Locking: a restriction of resolution.
PhD thesis,
University of Texas at Austin
(1971)
\end{botherref}
\endbibitem

%%% 14
\bibitem[\protect\citeauthoryear{Slagle}{1967}]{slagle1967automatic}
\begin{barticle}
\bauthor{\bsnm{Slagle}, \binits{J.R.}}:
\batitle{Automatic theorem proving with renamable and semantic resolution}.
\bjtitle{Journal of the ACM (JACM)}
\bvolume{14}(\bissue{4}),
\bfpage{687}--\blpage{697}
(\byear{1967})
\end{barticle}
\endbibitem

%%% 15
\bibitem[\protect\citeauthoryear{Wos et~al.}{1965}]{wos1965efficiency}
\begin{barticle}
\bauthor{\bsnm{Wos}, \binits{L.}},
\bauthor{\bsnm{Robinson}, \binits{G.A.}},
\bauthor{\bsnm{Carson}, \binits{D.F.}}:
\batitle{Efficiency and completeness of the set of support strategy in theorem proving}.
\bjtitle{Journal of the ACM (JACM)}
\bvolume{12}(\bissue{4}),
\bfpage{536}--\blpage{541}
(\byear{1965})
\end{barticle}
\endbibitem

%%% 16
\bibitem[\protect\citeauthoryear{Robinson}{1965}]{robinson1965automatic}
\begin{barticle}
\bauthor{\bsnm{Robinson}, \binits{J.A.}}:
\batitle{Automatic deduction with hyper-resolution}.
\bjtitle{International Journal of Computing and Mathematics}
\bvolume{1},
\bfpage{227}--\blpage{234}
(\byear{1965})
\end{barticle}
\endbibitem

%%% 17
\bibitem[\protect\citeauthoryear{Chang}{1970}]{chang1970unit}
\begin{barticle}
\bauthor{\bsnm{Chang}, \binits{C.-L.}}:
\batitle{The unit proof and the input proof in theorem proving}.
\bjtitle{Journal of the ACM (JACM)}
\bvolume{17}(\bissue{4}),
\bfpage{698}--\blpage{707}
(\byear{1970})
\end{barticle}
\endbibitem

%%% 18
\bibitem[\protect\citeauthoryear{Robinson and Wos}{1983}]{robinson1983paramodulation}
\begin{bbook}
\bauthor{\bsnm{Robinson}, \binits{G.}},
\bauthor{\bsnm{Wos}, \binits{L.}}:
In: \beditor{\bsnm{Siekmann}, \binits{J.H.}},
\beditor{\bsnm{Wrightson}, \binits{G.}} (eds.)
\bbtitle{Paramodulation and Theorem-Proving in First-Order Theories with Equality},
pp. \bfpage{298}--\blpage{313}.
\bpublisher{Springer},
\blocation{Berlin, Heidelberg}
(\byear{1983})
\end{bbook}
\endbibitem

%%% 19
\bibitem[\protect\citeauthoryear{Bachmair and Ganzinger}{1998}]{bachmair1998equational}
\begin{barticle}
\bauthor{\bsnm{Bachmair}, \binits{L.}},
\bauthor{\bsnm{Ganzinger}, \binits{H.}}:
\batitle{Equational reasoning in saturation-based theorem proving}.
\bjtitle{Automated deduction—a basis for applications}
\bvolume{1},
\bfpage{353}--\blpage{397}
(\byear{1998})
\end{barticle}
\endbibitem

%%% 20
\bibitem[\protect\citeauthoryear{Portoraro}{1998}]{portoraro1998strategic}
\begin{barticle}
\bauthor{\bsnm{Portoraro}, \binits{F.D.}}:
\batitle{Strategic construction of {F}itch-style proofs}.
\bjtitle{Studia Logica}
\bvolume{60}(\bissue{1}),
\bfpage{45}--\blpage{66}
(\byear{1998})
\end{barticle}
\endbibitem

%%% 21
\bibitem[\protect\citeauthoryear{Schulz}{2013}]{schulz2013system}
\begin{bchapter}
\bauthor{\bsnm{Schulz}, \binits{S.}}:
\bctitle{System description: {E} 1.8}.
In: \bbtitle{International Conference on Logic for Programming Artificial Intelligence and Reasoning},
pp. \bfpage{735}--\blpage{743}
(\byear{2013}).
\bcomment{Springer}
\end{bchapter}
\endbibitem

%%% 22
\bibitem[\protect\citeauthoryear{Urban et~al.}{2010}]{urban2010evaluation}
\begin{bchapter}
\bauthor{\bsnm{Urban}, \binits{J.}},
\bauthor{\bsnm{Hoder}, \binits{K.}},
\bauthor{\bsnm{Voronkov}, \binits{A.}}:
\bctitle{Evaluation of automated theorem proving on the {M}izar {M}athematical library}.
In: \bbtitle{International Congress on Mathematical Software},
pp. \bfpage{155}--\blpage{166}
(\byear{2010}).
\bcomment{Springer}
\end{bchapter}
\endbibitem

%%% 23
\bibitem[\protect\citeauthoryear{Urban}{2007}]{urban2007malarea}
\begin{botherref}
\oauthor{\bsnm{Urban}, \binits{J.}}:
Ma{LAR}ea: a metasystem for automated reasoning in large theories.
ESARLT
\textbf{257}
(2007)
\end{botherref}
\endbibitem

%%% 24
\bibitem[\protect\citeauthoryear{Sutcliffe}{2016}]{sutcliffe2016cade}
\begin{barticle}
\bauthor{\bsnm{Sutcliffe}, \binits{G.}}:
\batitle{The cade {atp} system competition—{casc}}.
\bjtitle{AI Magazine}
\bvolume{37}(\bissue{2}),
\bfpage{99}--\blpage{101}
(\byear{2016})
\end{barticle}
\endbibitem

%%% 25
\bibitem[\protect\citeauthoryear{Sutcliffe}{2017}]{sutcliffe2017tptp}
\begin{barticle}
\bauthor{\bsnm{Sutcliffe}, \binits{G.}}:
\batitle{The tptp problem library and associated infrastructure}.
\bjtitle{Journal of Automated Reasoning}
\bvolume{59}(\bissue{4}),
\bfpage{483}--\blpage{502}
(\byear{2017})
\end{barticle}
\endbibitem

%%% 26
\bibitem[\protect\citeauthoryear{Ospanov and Yousefzadeh}{2025}]{Ospanov2025APOLLO}
\begin{botherref}
\oauthor{\bsnm{Ospanov}, \binits{A.}},
\oauthor{\bsnm{Yousefzadeh}, \binits{R.}}:
Apollo: Automated llm and lean collaboration for advanced formal reasoning.
arXiv preprint arXiv:2505.05758
(2025)
\end{botherref}
\endbibitem

%%% 27
\bibitem[\protect\citeauthoryear{Hu et~al.}{2025}]{Hu2025HybridProver}
\begin{botherref}
\oauthor{\bsnm{Hu}, \binits{J.}},
\oauthor{\bsnm{Zhang}, \binits{J.}},
\oauthor{\bsnm{Zhao}, \binits{Y.}},
\oauthor{\bsnm{Ringer}, \binits{T.}}:
Hybridprover: Augmenting theorem proving with llm-driven proof synthesis and refinement.
arXiv preprint arXiv:2505.15740
(2025)
\end{botherref}
\endbibitem

%%% 28
\bibitem[\protect\citeauthoryear{Ambati}{2025}]{Ambati2025ProofNet}
\begin{botherref}
\oauthor{\bsnm{Ambati}, \binits{M.}}:
Proofnet++: A neuro-symbolic system for formal proof verification with self-correction.
arXiv preprint arXiv:2505.24230
(2025)
\end{botherref}
\endbibitem

%%% 29
\bibitem[\protect\citeauthoryear{Chen et~al.}{2025}]{Chen2025SeedProver}
\begin{botherref}
\oauthor{\bsnm{Chen}, \binits{L.}}, et al.:
Seed-prover: Deep and broad reasoning for automated theorem proving.
arXiv preprint arXiv:2507.23726
(2025)
\end{botherref}
\endbibitem

%%% 30
\bibitem[\protect\citeauthoryear{Kov{\'a}cs and Voronkov}{2013}]{kovacs2013first}
\begin{bchapter}
\bauthor{\bsnm{Kov{\'a}cs}, \binits{L.}},
\bauthor{\bsnm{Voronkov}, \binits{A.}}:
\bctitle{First-order theorem proving and {V}ampire}.
In: \bbtitle{International Conference on Computer Aided Verification},
pp. \bfpage{1}--\blpage{35}
(\byear{2013}).
\bcomment{Springer}
\end{bchapter}
\endbibitem

%%% 31
\bibitem[\protect\citeauthoryear{Jakubuv and Urban}{2020}]{Jakubuv2020ENIGMA}
\begin{barticle}
\bauthor{\bsnm{Jakubuv}, \binits{J.}},
\bauthor{\bsnm{Urban}, \binits{J.}}:
\batitle{Enigma: Efficient learning-based inference guiding machine for automated theorem proving}.
\bjtitle{Journal of Automated Reasoning}
\bvolume{64}(\bissue{3}),
\bfpage{513}--\blpage{532}
(\byear{2020})
\doiurl{10.1007/s10817-019-09525-6}
\end{barticle}
\endbibitem

%%% 32
\bibitem[\protect\citeauthoryear{Cao et~al.}{2019a}]{cao2019first}
\begin{bchapter}
\bauthor{\bsnm{Cao}, \binits{F.}},
\bauthor{\bsnm{Xu}, \binits{Y.}},
\bauthor{\bsnm{Chen}, \binits{S.}},
\bauthor{\bsnm{Ning}, \binits{X.}},
\bauthor{\bsnm{Wu}, \binits{G.}}:
\bctitle{A first-order logic clause set preprocessing method based on clause deduction distance}.
In: \bbtitle{2019 IEEE 14th International Conference on Intelligent Systems and Knowledge Engineering (ISKE)},
pp. \bfpage{69}--\blpage{73}
(\byear{2019}).
\bcomment{IEEE}
\end{bchapter}
\endbibitem

%%% 33
\bibitem[\protect\citeauthoryear{Cao et~al.}{2019b}]{cao2019contradiction}
\begin{barticle}
\bauthor{\bsnm{Cao}, \binits{F.}},
\bauthor{\bsnm{Xu}, \binits{Y.}},
\bauthor{\bsnm{Chen}, \binits{S.}},
\bauthor{\bsnm{Zhong}, \binits{J.}},
\bauthor{\bsnm{Wu}, \binits{G.}}:
\batitle{A contradiction separation dynamic deduction algorithm based on optimized proof search}.
\bjtitle{International Journal of Computational Intelligence Systems}
\bvolume{12}(\bissue{2}),
\bfpage{1245}--\blpage{1254}
(\byear{2019})
\end{barticle}
\endbibitem

%%% 34
\bibitem[\protect\citeauthoryear{Cao et~al.}{2021}]{cao2021multi}
\begin{barticle}
\bauthor{\bsnm{Cao}, \binits{F.}},
\bauthor{\bsnm{Xu}, \binits{Y.}},
\bauthor{\bsnm{Liu}, \binits{J.}},
\bauthor{\bsnm{Chen}, \binits{S.}},
\bauthor{\bsnm{Yi}, \binits{J.}}:
\batitle{A multi-clause dynamic deduction algorithm based on standard contradiction separation rule}.
\bjtitle{Information Sciences}
\bvolume{566},
\bfpage{281}--\blpage{299}
(\byear{2021})
\end{barticle}
\endbibitem

%%% 35
\bibitem[\protect\citeauthoryear{Chen et~al.}{2018}]{chen2018look}
\begin{bchapter}
\bauthor{\bsnm{Chen}, \binits{S.}},
\bauthor{\bsnm{Xu}, \binits{Y.}},
\bauthor{\bsnm{Liu}, \binits{J.}},
\bauthor{\bsnm{Cao}, \binits{F.}}:
\bctitle{Look-ahead clause selection strategy for contradiction separation based automated deduction}.
In: \bbtitle{Data Science and Knowledge Engineering for Sensing Decision Support: Proceedings of the 13th International FLINS Conference (FLINS 2018)},
pp. \bfpage{750}--\blpage{757}
(\byear{2018}).
\bcomment{World Scientific}
\end{bchapter}
\endbibitem

%%% 36
\bibitem[\protect\citeauthoryear{Chen et~al.}{2020}]{chen2020clause}
\begin{bchapter}
\bauthor{\bsnm{Chen}, \binits{S.}},
\bauthor{\bsnm{Xu}, \binits{Y.}},
\bauthor{\bsnm{Liu}, \binits{J.}},
\bauthor{\bsnm{Cao}, \binits{F.}},
\bauthor{\bsnm{Jiang}, \binits{Y.}}:
\bctitle{Clause reusing framework for contradiction separation based automated deduction}.
In: \bbtitle{Developments Of Artificial Intelligence Technologies In Computation And Robotics-Proceedings Of The 14th International Flins Conference (Flins 2020)},
vol. \bseriesno{12},
pp. \bfpage{284}--\blpage{291}
(\byear{2020}).
\bcomment{World Scientific}
\end{bchapter}
\endbibitem

%%% 37
\bibitem[\protect\citeauthoryear{He et~al.}{2018}]{he2018new}
\begin{bchapter}
\bauthor{\bsnm{He}, \binits{X.}},
\bauthor{\bsnm{Xu}, \binits{Y.}},
\bauthor{\bsnm{Liu}, \binits{J.}},
\bauthor{\bsnm{Qiu}, \binits{X.}},
\bauthor{\bsnm{Li}, \binits{Y.}}:
\bctitle{A new strategy for preventing repeated and redundant clauses in contradiction separation based automated deduction}.
In: \bbtitle{Data Science and Knowledge Engineering for Sensing Decision Support: Proceedings of the 13th International FLINS Conference (FLINS 2018)},
pp. \bfpage{790}--\blpage{797}
(\byear{2018}).
\bcomment{World Scientific}
\end{bchapter}
\endbibitem

%%% 38
\bibitem[\protect\citeauthoryear{He et~al.}{2022}]{he2022structures}
\begin{barticle}
\bauthor{\bsnm{He}, \binits{X.}},
\bauthor{\bsnm{Li}, \binits{Y.}},
\bauthor{\bsnm{Feng}, \binits{Y.}}:
\batitle{On structures of regular standard contradictions in propositional logic}.
\bjtitle{Information Sciences}
\bvolume{586},
\bfpage{261}--\blpage{278}
(\byear{2022})
\end{barticle}
\endbibitem

%%% 39
\bibitem[\protect\citeauthoryear{Zhong et~al.}{2020}]{zhong2020novel}
\begin{barticle}
\bauthor{\bsnm{Zhong}, \binits{J.}},
\bauthor{\bsnm{Xu}, \binits{Y.}},
\bauthor{\bsnm{Cao}, \binits{F.}}:
\batitle{A novel combinational {ATP} based on contradiction separation for first-order logic}.
\bjtitle{International Journal of Computational Intelligence Systems}
\bvolume{13}(\bissue{1}),
\bfpage{672}--\blpage{680}
(\byear{2020})
\end{barticle}
\endbibitem

%%% 40
\bibitem[\protect\citeauthoryear{Xu et~al.}{2016}]{xu2016novel}
\begin{bchapter}
\bauthor{\bsnm{Xu}, \binits{Y.}},
\bauthor{\bsnm{Liu}, \binits{J.}},
\bauthor{\bsnm{Chen}, \binits{S.}},
\bauthor{\bsnm{Zhong}, \binits{X.}}:
\bctitle{A novel generalization of resolution principle for automated deduction}.
In: \bbtitle{Uncertainty Modelling in Knowledge Engineering and Decision Making: Proceedings of the 12th International FLINS Conference},
pp. \bfpage{483}--\blpage{488}
(\byear{2016}).
\bcomment{World Scientific}
\end{bchapter}
\endbibitem

%%% 41
\bibitem[\protect\citeauthoryear{(CASC)}{2018}]{CASC2018}
\begin{botherref}
\oauthor{\bsnm{(CASC)}, \binits{C.A.S.C.}}:
CASC-J9 (2018) -- Systems and Entrants List.
\url{https://tptp.org/CASC/J9/Systems.html}.
Accessed: 2025-10-09
(2018)
\end{botherref}
\endbibitem

%%% 42
\bibitem[\protect\citeauthoryear{(CASC)}{2019}]{CASC2019}
\begin{botherref}
\oauthor{\bsnm{(CASC)}, \binits{C.A.S.C.}}:
CASC-27 (2019) -- Systems and Entrants List.
\url{https://tptp.org/CASC/27/Systems.html}.
Accessed: 2025-10-09
(2019)
\end{botherref}
\endbibitem

%%% 43
\bibitem[\protect\citeauthoryear{(CASC)}{2020}]{CASC2020}
\begin{botherref}
\oauthor{\bsnm{(CASC)}, \binits{C.A.S.C.}}:
CASC-J10 (2020) -- Systems and Entrants List.
\url{https://tptp.org/CASC/J10/Systems.html}.
Accessed: 2025-10-09
(2020)
\end{botherref}
\endbibitem

%%% 44
\bibitem[\protect\citeauthoryear{(CASC)}{2021}]{CASC2021}
\begin{botherref}
\oauthor{\bsnm{(CASC)}, \binits{C.A.S.C.}}:
CASC-28 (2021) -- Systems and Entrants List.
\url{https://tptp.org/CASC/28/Systems.html}.
Accessed: 2025-10-09
(2021)
\end{botherref}
\endbibitem

%%% 45
\bibitem[\protect\citeauthoryear{(CASC)}{2022}]{CASC2022}
\begin{botherref}
\oauthor{\bsnm{(CASC)}, \binits{C.A.S.C.}}:
CASC-J11 (2022) -- Systems and Entrants List.
\url{https://tptp.org/CASC/J11/Systems.html}.
Accessed: 2025-10-09
(2022)
\end{botherref}
\endbibitem

%%% 46
\bibitem[\protect\citeauthoryear{(CASC)}{2023}]{CASC2023}
\begin{botherref}
\oauthor{\bsnm{(CASC)}, \binits{C.A.S.C.}}:
CASC-29 (2023) -- Systems and Entrants List.
\url{https://tptp.org/CASC/29/Systems.html}.
Accessed: 2025-10-09
(2023)
\end{botherref}
\endbibitem

%%% 47
\bibitem[\protect\citeauthoryear{(CASC)}{2024}]{CASC2024}
\begin{botherref}
\oauthor{\bsnm{(CASC)}, \binits{C.A.S.C.}}:
CASC-J12 (2024) -- Systems and Entrants List.
\url{https://tptp.org/CASC/J12/Systems.html}.
Accessed: 2025-10-09
(2024)
\end{botherref}
\endbibitem

%%% 48
\bibitem[\protect\citeauthoryear{(CASC)}{2025}]{CASC2025}
\begin{botherref}
\oauthor{\bsnm{(CASC)}, \binits{C.A.S.C.}}:
CASC-30 (2025) -- Systems and Entrants List.
\url{https://tptp.org/CASC/30/Systems.html}.
Accessed: 2025-10-09
(2025)
\end{botherref}
\endbibitem

\end{thebibliography}
\end{document}